\definecolor{dgreen}{rgb}{0.0, 0.5, 0.0}
\newcommand{\h}{\textbf{h}}
\newcommand{\bs}{\textbf{b}}
\DeclarePairedDelimiter\ceil{\lceil}{\rceil}
\providecommand{\mat}[1]{\boldsymbol{\mathrm{#1}}}%
\renewcommand{\vec}[1]{\boldsymbol{\mathrm{#1}}}
\providecommand{\mB}{\ensuremath{\mat{B}}}
\providecommand{\mD}{\ensuremath{\mat{D}}}
\providecommand{\mE}{\ensuremath{\mat{E}}}
\providecommand{\mI}{\ensuremath{\mat{I}}}
\providecommand{\mR}{\ensuremath{\mat{R}}}
\providecommand{\va}{\ensuremath{\vec{a}}}
\providecommand{\ve}{\ensuremath{\vec{e}}}
\providecommand{\vp}{\ensuremath{\vec{p}}}
\providecommand{\vx}{\ensuremath{\vec{x}}}
\providecommand{\vy}{\ensuremath{\vec{y}}}
\newtheorem{theorem}{Theorem}
\newtheorem{proposition}[theorem]{Proposition}
\newtheorem{lemma}[theorem]{Lemma}
\newtheorem*{definition}{Definition}
\newtheorem{retheorem}{Theorem}
\newtheorem{reproposition}[retheorem]{Proposition}
\newtheorem{relemma}[retheorem]{Lemma}
\newtheorem{silemma}{Supplemental Lemma}
\providecommand{\vg}{\bs}
\newcommand{\pr}{\text{Pr}}
\newcommand{\ncdot}{}
\author
{Nate Veldt,$^{1}$ Austin R.\ Benson,$^{2}$ Jon Kleinberg$^{2}$\\
	\\
	\normalsize{$^{1}$Department of Computer Science and Engineering, Texas A\&M University}\\
	\normalsize{$^{2}$Department of Computer Science, Cornell University}\\
	\\
}
\date{}
\begin{document}
	
	\title{Combinatorial Characterizations and Impossibilities for Higher-order Homophily} 
	\maketitle

	\begin{abstract}
		Homophily is the seemingly ubiquitous tendency for people to connect and interact with other individuals who are similar to them. This is a well-documented principle and is fundamental for how society organizes.
		Although many social interactions occur in groups, homophily has traditionally been measured using a graph model, which only accounts for pairwise interactions involving two individuals. Here, we develop a framework using hypergraphs to quantify homophily from group interactions. 
		This reveals natural patterns of group homophily that appear with gender in scientific collaboration and political affiliation in legislative bill co-sponsorship, and also reveals distinctive gender distributions in group photographs, all of which cannot be fully captured by pairwise measures. At the same time, we show that seemingly natural ways to define group homophily are combinatorially impossible. This reveals important pitfalls to avoid when defining and interpreting notions of group homophily, as higher-order homophily patterns are governed by combinatorial constraints that are independent of human behavior but are easily overlooked.
	\end{abstract}
	
	\section*{Introduction}
	Homophily is the established sociological principle that individuals tend to associate and form connections with other individuals that are similar to them~\cite{lazarsfeld1954}.
	For example, social ties are strongly correlated with demographic factors such as race, age, and gender~\cite{moody2001race,shrum1988friendship,marsden1988homogeneity}; acquired characteristics such as education, political affiliation, and religion~\cite{marsden1988homogeneity,loomis1946political}; and even psychological factors such as attitudes and aspirations~\cite{almack1922influence,richardson1940community}.
	For many of these factors, homophily persists across a wide range of relationship types, from marriage~\cite{kalmijn1998}, to friendships~\cite{verbrugge1983research}, to ties based simply on whether individuals have been observed together in public~\cite{mayhew1995}. As a consequence, homophily serves as an important concept for understanding human relationships and social connections, and is a key guiding principle for research in sociology and network analysis.
	
	A major motivation in homophily research is to understand how similarity among individuals influences group formation and group interactions~\cite{cohen1977sources,mcpherson2001birds,mcpherson_rotolo,mcpherson1987homophily,mayhew1995,breiger1974duality}.
	This emphasis on group interactions is natural, given how much of life and society is organized around \emph{multiway} relationships and interactions, such as work collaborations, social activities, volunteer groups, and family ties. However, despite the ubiquity of multiway interactions in social settings, existing homophily measures rely on a graph model of social interactions, which encodes only two-way relationships between individuals. In order to measure homophily in group interactions, these approaches typically reduce group participation to pairwise relationships, based on co-participation in groups. While this simplifies the analysis, it discards valuable information about the exact size and make-up of groups in which individuals choose to participate. 
	
	Here, we present a mathematical framework for measuring homophily in \emph{higher-order}, multiway interactions that quantifies the extent to which individuals in a certain class participate in groups with varying numbers of in-class and out-class participants. This relies on a new hypergraph model for representing group interactions, which generalizes the standard graph model. In the graph setting, homophily can be measured by comparing a graph homophily index against a baseline score~\cite{altenburger2018monophily}. We generalize this and show that there are many intuitive ways to quantify tendencies towards same-class interactions in group settings. 
	{One simplistic example of higher-order homophily is a hypergraph where hyperedges only involve nodes from a single class. As an example, this could represent social interactions where every social group involves only men or only women. However, this matches very few real-world examples and fails to capture other intuitive types of same-class mixing patterns in group interactions. For example, an individual may tend to participate mainly in groups where at least the majority of members are from the same class, even if no groups are completely homogeneous with respect to class. Alternatively, one's participation in group interactions may increase in proportion to the number of same-class members in the group. Our framework allows us to quantify many of these generalized notions of group homophily, and identify patterns in group interactions that cannot be captured by existing graph measures. 
		At the same time, we prove fundamental combinatorial limits on the extent to which these generalized notions of same-class group mixing patterns can be exhibited in practice. In particular, we prove that certain seemingly natural approaches for generalizing graph measures of homophily to the hypergraph setting are in fact overly restrictive, and cannot be satisfied by any hypergraph because of combinatorial impossibilities that are independent of human preferences and choices.
		
		Our framework provides a very general way to define and measure higher-order homophily, and our combinatorial impossibilities shed light on empirical observations that would otherwise be hard to explain or easy to misinterpret.
		For example, our framework captures higher-order homophily present in group interactions defined by legislative bill co-sponsorship among US members of congress. 
		One intuitive and unsurprising observation captured by our framework is a higher-than-random tendency for members of congress to co-sponsor bills that are mostly (even if not exclusively) co-sponsored by members of their same political party.
		The deeper and less obvious insight revealed by our framework is that in order for both parties to simultaneously exhibit this behavior, there must be a significant number of members from each party that are willing to co-sponsor bills even when their party is in the minority.} In fact, 
	contrary to what a naive understanding of group homophily might suggest, 
	both political parties must exhibit much higher tendencies to participate in bills where their party is overwhelmingly outnumbered, in comparison with bills where their party actually has a {slight majority}. 
	Our combinatorial impossibility results are key to understanding and interpreting these empirical results. Without them, it would be tempting to conclude that only weak notions of group homophily are satisfied in legislative bill-cosponsorship. However, our theoretical results reveal that group homophily is strongly exhibited in this setting. In fact, outside of unrealistic extreme cases where nearly all group interactions are perfectly homogeneous, stronger notions of group homophily are combinatorially impossible. 
	
	We similarly use our framework to reveal empirical differences in co-authorship patterns between men and women in academic publishing. These should be interpreted and understood in light of our theoretical results, as many of these differences are due to combinatorial constraints that must be satisfied independent of social factors. Finally,  our framework allows us to uncover meaningful patterns in group interactions that cannot be detected by graph homophily measures. As an example, we use our framework to reveal different gender distribution patterns in group pictures, depending on picture context and group size, which are overlooked by graph measures.
	
	\section*{Group Affinities and Hypergraph Homophily}
	\begin{figure*}[t!]
		\centering
		\includegraphics[width=.8\linewidth]{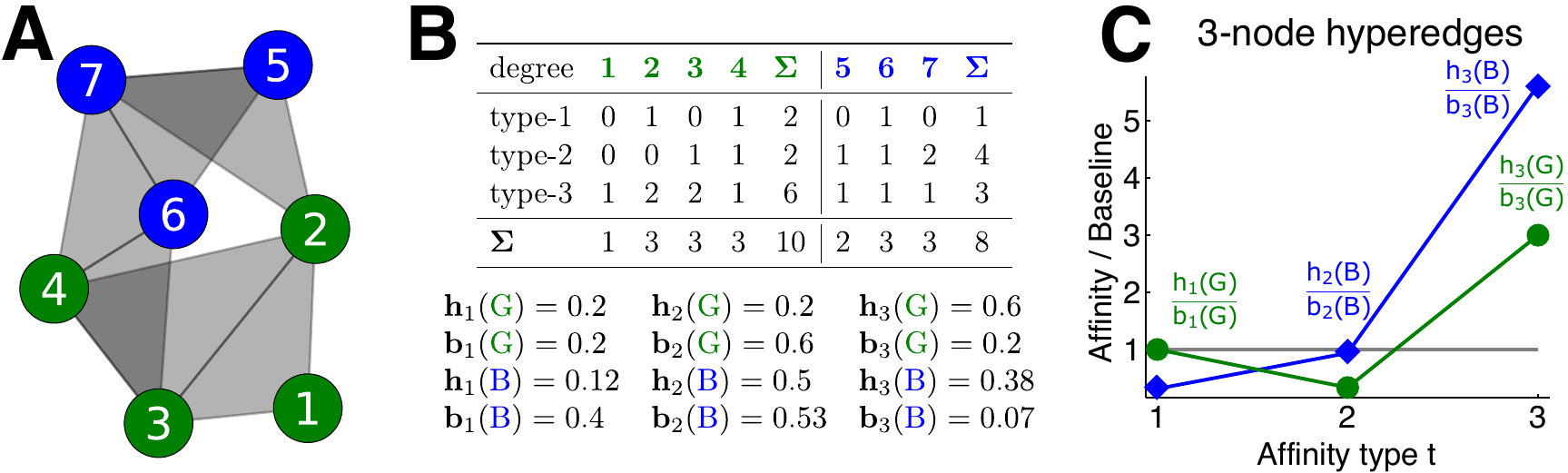} 
		\caption{\small {\textbf{Hypergraph affinity and ratio scores for a small hypergraph.}}
			(\textbf{A}) An example of a set of size-3 group interactions between 2 classes, modeled by a small hypergraph. {Each triangle in the figure indicates a 3-way group interaction, and color indicates node class (green or blue)}. (\textbf{B}) Degrees, affinity scores, and baseline scores for the hypergraph. 
			A node's type-$t$ degree is the number of groups it belongs to in which exactly $t$ nodes are from its class. The type-$t$ degree for an entire class is the sum of type-$t$ degrees across all nodes in the class 
			($\Sigma$ columns in the table). The type-$t$ affinity for a class $X$, denoted by $\h_t(X)$, is the ratio between the class's type-$t$ degree and its total degree (the sum of type-$t$ degrees for all $t$). 
			The type-$t$ baseline score $\bs_t(X)$ is the probability that a node joins a type-$t$ group if other nodes are selected uniformly at random. 
			(\textbf{C}) The ratios of affinity to baseline (ratio scores) summarize the overall group participation rates for both types of nodes. 
		}
		\label{fig:toy} 
	\end{figure*}
	
	To introduce our framework, we consider a hypergraph $H = (V,E)$ (Fig.~\ref{fig:toy}A), where the node set $V$ represents individuals in a population (e.g., students in a school, researchers in academia, employees at a company). Each hyperedge $e \in E$ represents a group interaction among members of the population. The set $X \subseteq V$ indicates a set of individuals with the same class label (e.g., gender, political affiliation). We would like to quantify the extent to which individuals in this class tend to interact and connect with one another in group settings. 	{For our mathematical formulation, we can treat $H = (V,E)$ as a $k$-uniform hypergraph, meaning that each hyperedge represents a multiway relationship among exactly $k$ nodes. In practice, the same measures can be applied to each group size $k$ separately, in order to analyze homophily patterns that are exhibited in different ways across different groups sizes.}
	
	\subsection*{Measuring group affinities} In order to measure how class label affects group interactions
	of a fixed size $k$, we define for each positive integer $t \in [k] = \{1, 2, \hdots , k\}$ a type-$t$ \emph{affinity score}, summarizing the extent to which individuals in class $X$ participate in groups where exactly $t$ group members are in class $X$ (Fig.~\ref{fig:toy}B). To define the affinity score, we first define the total degree of a node $v$, denoted by $d(v)$, to be the number of groups it participates in. Its type-$t$ degree, denoted by $d_t(v)$, is the number of these groups with exactly $t$ members from $v$'s class, {including $v$ itself}. The \emph{class} degree $D(X)$ is the sum of total degrees across all nodes in $X$, and the class type-$t$ degree $D_t(X)$ is the sum of individual type-$t$ degrees. The ratio of these values defines the type-$t$ affinity score:
	\begin{equation}
		\label{affinities}
		\h_t(X) = \frac{D_t(X)}{D(X)} = \frac{\sum_{v \in X} d_t(v)}{\sum_{v \in X} d(v)}.
	\end{equation}
	When $k = t = 2$, this ratio is the well-studied homophily index of a graph~\cite{altenburger2018monophily}, the fraction of same-class friendships for class $X$. This index can be statistically interpreted as the maximum likelihood estimate for a certain homophily parameter when a logistic-binomial model is applied to the degree data. An analogous result is also true for our more general hypergraph affinity score (see the appendix).
	
	\subsection*{Baseline scores for group affinities}
	In order to determine whether an affinity score $\h_t(X)$ is meaningfully high or low, we compare it against a baseline score $\bs_t(X)$ representing a null probability for type-$t$ interactions. If $\h_t(X) > \bs_t(X)$, this will indicate that type-$t$ group interactions are \emph{overexpressed} for class $X$. This is analogous to the notion of \emph{inbreeding homophily} in traditional social network analysis---the tendency for individuals to connect with other similar individuals more than would be expected by chance~\cite{currarini2009economic,mcpherson2001birds}. Given a set of baseline scores, one way to summarize the group participation for a class $X$ is to  plot a sequence of ratio scores $\frac{\h_t(X)}{\bs_t(X)}$ for $t \leq k$ (Fig.~\ref{fig:dblp}C). Ratio scores near one indicate that the class distribution in group interactions is roughly what would be expected at random.
	
	The standard baseline score we consider (and which we use in Fig.~\ref{fig:dblp}C) is the probability that a class-$X$ node joins a group where $t$ members are from class $X$, if $k-1$ other nodes are selected uniformly at random. Formally this is given by
	\begin{equation}
		\label{base1}
		\hat{\bs}_t(X) = \frac{{|X|-1 \choose t-1} {n-|X| \choose k-t}} {{n-1\choose k-1}},
	\end{equation}
	where $n = |V|$ is the number of nodes in the hypergraph. In the graph setting ($k = 2$), the homophily index $\h_2(X)$ is typically compared against $\alpha_X = |X|/n$, the proportion of nodes in class $X$~\cite{altenburger2018monophily}. Observe that this is simply an asymptotic version of the standard baseline score $\hat{\bs}_2(X) = (|X|-1)/(n-1)$.
	In practice it is often useful to use asymptotic baseline scores for general $k$ and $t$ by fixing the class proportion $\alpha_X = |X|/n$ and computing $\lim_{n\rightarrow\infty} \hat{\bs}_t(X).$
	
	We provide two additional intuitive interpretations for this standard baseline score. The first is that baseline scores correspond to affinity scores for a complete $k$-uniform hypergraph. The second is that generating random hyperedges without regard for node class produces a hypergraph whose ratio scores asymptotically converge to 1. We prove these results in the appendix.
	\begin{proposition}
		\label{prop:base1}
		Let $H_{k,n}^* = (V,E)$ be the complete $k$-uniform hypergraph on $n$ nodes.
		The type-$t$ affinity score for class $X  \subseteq V$ equal the type-$t$ baseline score in~\eqref{base1}.
	\end{proposition}
	\begin{proposition}
		\label{prop:base2}
		Fix any $p \in (0,1)$ and a positive integer $k$, and let $H = (V,E)$ be a random hypergraph on $n$ nodes that is formed by turning each $k$-tuple of nodes in $V$ into a hyperedge with probability $p$. As $n \rightarrow \infty$, the ratio scores for a class $X \subseteq V$ with $|X| = \Theta(n)$ converge in probability to 1.
	\end{proposition}
	While the baseline scores in~\eqref{base1} are a natural choice for a number reasons, it is also possible to define and consider baselines corresponding to other null probabilities for affinity scores. The combinatorial impossibility results we prove later will in fact apply to a more general class of \emph{realizable scores}. A set of baseline scores $\{\bs_t(X) \colon  t \in [k]\}$ is \emph{realizable} if they are all positive and there exists a hypergraph whose affinity scores are exactly these baselines scores. Proposition~\ref{prop:base1} indicates that the scores in~\eqref{base1} are realizable.
	
	\subsection*{Defining higher-order homophily}
	Informally, homophily is the tendency for individuals in one class to disproportionately interact and connect with other individuals in the same class. In a graph setting, this can be measured by checking whether nodes tend to form links with other same-class nodes more often than random. We consider three natural ways to extend this to the hypergraph setting. 
	
	One simplistic way to check for group homophily is to see whether a class has a higher-than-baseline affinity for group interactions that \emph{only} involve members of their class. Formally, this means that $\h_t(X) > \bs_t(X)$ for $t = k$. We refer to this as \emph{simple} homophily. This captures one valid notion of hypergraph homophily, but is very restrictive and fails to capture other intuitive types of same-class mixing patterns. This includes high affinities for groups where at least a majority of members are from the same class, or group participation levels that increase in proportion to the number of same-class members in a group.
	
	
	{ In order to capture more nuanced notions of group homophily, we say that class $X$ exhibits \emph{order-$j$ majority homophily} if the top $j$ affinity scores for this class are higher than baseline, i.e., $\h_{k-j+1}(X) > \bs_{k-j+1}(X), \h_{k-j+2}(X) > \bs_{k-j+2}(X) , \hdots, \h_{k}(X) >\bs_{k}(X)$. Simple homophily is the special case of order-$1$ majority homophily, while larger values of $j$ capture generalized and stronger notions of homophily. We say that a class has \emph{order-$j$ monotonic homophily} if each of the top $j$ ratio scores is larger than the ratio score that comes before it. Formally, this means $\h_t(X) / \bs_t(X) > \h_{t-1}(X) / \bs_{t-1}(X)$ for $t \geq k-j+1$. In other words, ratio scores from $\h_{k-j}(X)/ \bs_{k-j}(X)$ to $\h_{k}(X)/ \bs_{k}(X)$ are strictly increasing. Finally, we say that the \emph{majority homophily index} (MaHI) of a class $X$ is the maximum $j$ such that order-$j$ majority homophily holds, and the \emph{monotonic homophily index} (MoHI) is the maximum $j$ such that order-$j$ monotonic homophily holds.}

	\subsection*{Strict notions of higher-order homophily}
	We also consider two basic measures for majority and monotonic group participation that seem intuitive at first, but which we will prove are much more restrictive than they appear. We say that class $X$ exhibits \emph{strict majority homophily} if the class has higher than random affinities for groups where they constitute a majority, i.e., $\h_t(X) > \bs_t(X)$ when $t > k-t$. {This is equivalent to order-$(\ceil{k/2})$ majority homophily}. Similarly, class $X$ exhibits \emph{strict monotonic homophily} if, for groups where $X$ constitutes a majority, its ratio scores are strictly increasing as the number of class-$X$ members grows, i.e., 
	$\h_t(X) / \bs_t(X) > \h_{t-1}(X) / \bs_{t-1}(X) $ when $t > k-t$. This is equivalent to
	{order-$(\ceil{k/2})$ monotonic homophily}. For groups of size two, the notions of simple homophily, strict majority homophily, and strict monotonic homophily all reduce to checking whether the homophily index of a graph is higher than the relative class size. This is a standard way of checking for graph homophily~\cite{altenburger2018monophily}. In graph-based analysis, it is typical for multiple node classes in a social network to exhibit homophily at the same time. A natural question is whether this observation generalizes to the hypergraph setting.
	
	\subsection*{Gender affinities in co-authorship data}
	\begin{figure*}[t!]
		\centering
		\includegraphics[width=.75\linewidth]{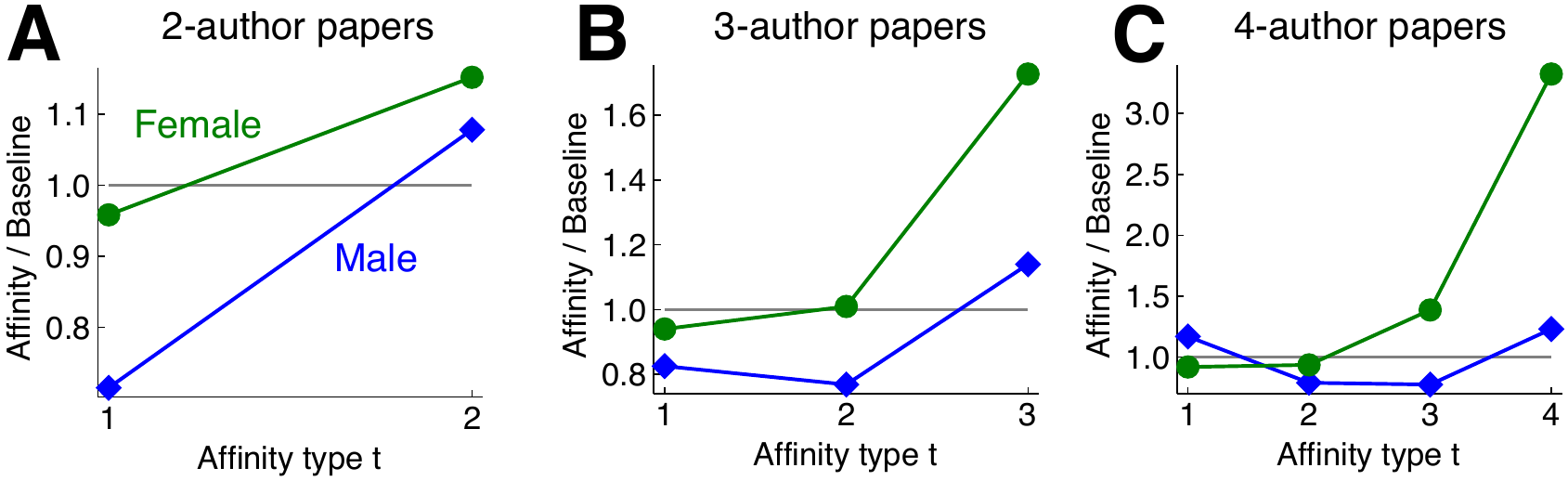} 
		\caption{\small \textbf{
				Ratio scores with respect to gender for groups defined by co-authorship in computer science publications}. (\textbf{A}) Collaborating with same-gender co-authors on 2-person papers is more likely than expected by chance, as seen by ratio scores higher than 1. (\textbf{B,C}) For 3- and 4-author papers, the ratio score curves are substantially different for men and women. Female authors exhibit monotonically increasing scores, whereas male authors do not. Our theoretical results show that many of these differences are due simply to combinatorial constraints on hypergraph affinity scores. If we reduce the set of 2-4 author papers to pairwise co-authorships and apply graph-based measures, women and men have graph homophily indices of 0.261 and 0.828 respectively. These scores are higher than group proportions of 0.215 and 0.785, and therefore reveal some level of gender homophily. Similar graph homophily indices are obtained if we also include papers with more authors. However, this provides less information than knowing the full range of hypergraph affinity scores, and fails to uncover the nuanced differences in co-authorship patterns between men and women.
		}
		\label{fig:dblp} 
	\end{figure*}
	As a first example, we measure hypergraph affinity scores with respect to gender in academic collaborations, where nodes represent researchers and each hyperedge indicates co-authorship on a paper published at a computer science conference.  Our framework reveals differences in co-author patterns for men and women (Fig.~\ref{fig:dblp}). Both men and women have overexpressed tendencies for being authors on papers that only involve authors of their same gender. In other words, for collaborations of size two to four, both genders exhibit simple homophily. 
	For two-author papers, both genders exhibit strict monotonic homophily and strict majority homophily, as these coincide with simple homophily.
	For three- and four-author papers, women exhibit both strict majority and strict monotonic homophily, but men do not exhibit either. 
	{ These definitions seem to capture intuitive higher-order notions of homophily, and if we restrict them to the graph setting, we recover existing notions of homophily that are often satisfied by multiple classes at once. How, then, can we explain the differences between men and women in this dataset? It is tempting to wonder whether these differences are purely due to social factors. In other contexts, can we expect both men and women to exhibit strict monotonic and strict majority homophily? Our main theoretical results will show, perhaps surprisingly, that this is in fact impossible for \emph{any dataset}, and that many of the differences we see between men and women in our co-authorship results must exist simply because of combinatorial inevitabilities. It is not immediately clear why this should be, given that men and women have separate affinity scores as well as separate baseline scores. These combinatorial limits provide a deeper understanding of how higher-order homophily can be manifested in practice. This also highlights pitfalls and misunderstandings to avoid when drawing conclusions about the presence or level of group homophily in different contexts.}
	
	
	\section*{Impossibility Results for Strict Hypergraph Homophily}
	{Our main theoretical results highlight combinatorial constraints that govern higher-order mixing patterns in hypergraphs. These are easily overlooked, but are crucial for properly defining and understanding higher-order homophily. This also reveals a fundamental difference between measuring homophily in group settings and measuring homophily in graphs, as these impossibilities do not apply to the graph setting. Although strict monotonic and strict majority homophily seem to capture intuitive notions of same-class mixing patterns, we show that it is combinatorially impossible for two classes to \emph{simultaneously} exhibit either of these types of homophily in the hypergraph setting (subject to a small additional constraint if groups have an even number of members). In other words, even if all individuals \emph{preferred} to participate in group interactions that are monotonically or majority homophilous with respect to their class, this cannot be accomplished.} 
	
	We formalize our results as a set of combinatorial impossibilities for two-class, $k$-uniform hypergraphs. 
	\begin{definition}
		A hypergraph $H = (V,E)$ is a two-class, $k$-uniform hypergraph if $|e| = k$ for every $e \in E$, and there exist two node classes $\{A,B\}$ such that $V = A \cup B$ and $A\cap B = \emptyset$. 
	\end{definition}
	Although our theorems focus on this family of hypergraphs, our framework and results have important implications for understanding homophily in general group settings. When groups vary in size, our results can be applied to each group size separately, to understand which behaviors are possible or impossible in each case. We take this approach when measuring affinity scores on real datasets involving group interactions of varying size $k$. For a hypergraph with more than two node labels, our results imply combinatorial impossibilities for an arbitrary class $A = X \subseteq V$ relative to the collective behavior of all other classes, joined by a single ``out class'' label $B = V \backslash X$. 
	
	\subsection*{Equivalent characterization of affinity scores}
	In proving our impossibility results for a $k$-uniform hypergraph with classes $A$ and $B$, it will be convenient to categorize hyperedges based on the number of nodes from each class. Although type-$t$ degrees and type-$t$ affinity scores are defined relative to a given class $X$, we define hyperedge types in an absolute sense: for $j \in \{0, 1, 2, \hdots k\}$, a hyperedge $e \in E$ is of type-$j$ if it contains exactly $j$ nodes \emph{specifically from class $A$}. We denote the number of type-$j$ hyperedges by $m_j$. This allows us to write type-$t$ affinity scores in terms of absolute hyperedge counts $\{m_0, m_1, \hdots, m_k\}$.
	The type-$t$ affinities for $A$ and $B$ are then
	\begin{equation}
		\label{eq:edgeaffinities}
		\h_t(A) = \frac{tm_t}{\sum_{i = 1}^k i m_i}, \text{ and } \h_t(B) = \frac{tm_{k-t}}{\sum_{i = 1}^k i m_{k-i}}.
	\end{equation}
	Observe that in the numerator of $\h_t(A)$, we scale $m_t$ by $t$ to account for the fact that each type-$t$ hyperedge involves $t$ nodes from class $A$, and therefore contributes to the degree of $t$ different nodes in class $A$. Meanwhile, type-$(k-t)$ hyperedges involve $t$ nodes from class $B$, leading to the expression for $\h_t(B)$.

	\subsection*{Impossibility results for strict monotonic homophily}
	We begin with an impossibility result for monotonic homophily. This has a comparatively simple proof that relies on considering two contradictory inequalities that result from assuming two classes exhibit homophily. We separate our results based on whether $k$ is odd or even.
	\begin{theorem}
		\label{thm:oddkmonotonic}
		Let $H$ be a two-class, $k$-uniform hypergraph and $\{\vg_i(X)\colon i \in [k], X \in \{A,B\} \}$ be realizable baseline scores. For odd $k$, it is impossible for both classes to exhibit strict monotonic homophily.
	\end{theorem}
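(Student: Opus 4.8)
The plan is to derive a contradiction from just the \emph{single} comparison at the most balanced majority type, for each class. For $k$ odd put $r = (k+1)/2$; then $t > k-t$ is equivalent to $t \geq r$, so the weakest consequence of monotonic homophily for a class $X$ is the inequality $\h_r(X)/\vg_r(X) > \h_{r-1}(X)/\vg_{r-1}(X)$ obtained at $t = r$. It therefore suffices to show that these two inequalities (one for $A$, one for $B$) cannot hold together. I would first dispose of degenerate cases: if $m_r = 0$ then $\h_r(A) = r m_r/D_A = 0$, so $\h_r(A)/\vg_r(A) = 0$ cannot strictly exceed the nonnegative quantity $\h_{r-1}(A)/\vg_{r-1}(A)$ (using $\vg_{r-1}(A) > 0$, part of the definition of generalized baselines), contradicting monotonic homophily for $A$; symmetrically, since $k - r = r - 1$, we have $\h_r(B) = r m_{k-r}/D_B = r m_{r-1}/D_B$, so $m_{r-1} = 0$ forces $\h_r(B) = 0$ and contradicts monotonic homophily for $B$. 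Hence we may assume $m_r, m_{r-1} > 0$, so all four of $\h_r(A), \h_{r-1}(A), \h_r(B), \h_{r-1}(B)$ are strictly positive.

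The key observation is a small algebraic identity. Using $\h_t(A) = t m_t / D_A$ and $\h_t(B) = t m_{k-t}/D_B$ (equations~\eqref{eq:eaffA}--\eqref{eq:eaffB}) together with $k - r = r - 1$ and $k - (r-1) = r$, I get
\begin{equation*}
\frac{\h_r(A)}{\h_{r-1}(A)} = \frac{r\, m_r}{(r-1)\, m_{r-1}}, \qquad \frac{\h_r(B)}{\h_{r-1}(B)} = \frac{r\, m_{r-1}}{(r-1)\, m_r},
\end{equation*}
so the product telescopes to a constant independent of the hypergraph:
\begin{equation*}
\frac{\h_r(A)}{\h_{r-1}(A)} \cdot \frac{\h_r(B)}{\h_{r-1}(B)} = \frac{r^2}{(r-1)^2}.
\end{equation*}
This identity holds for \emph{any} two-class $k$-uniform hypergraph with $k$ odd. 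Since the generalized baseline scores are, by definition, the affinity scores of some two-class $k$-uniform hypergraph $G$, applying the same identity to $G$ yields $\dfrac{\vg_r(A)}{\vg_{r-1}(A)} \cdot \dfrac{\vg_r(B)}{\vg_{r-1}(B)} = \dfrac{r^2}{(r-1)^2}$ as well.

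Dividing the two identities, the product of the two ``ratio-score increments'' at $t = r$ is forced to equal exactly one:
\begin{equation*}
\left( \frac{\h_r(A)/\vg_r(A)}{\h_{r-1}(A)/\vg_{r-1}(A)} \right) \left( \frac{\h_r(B)/\vg_r(B)}{\h_{r-1}(B)/\vg_{r-1}(B)} \right) = 1.
\end{equation*}
But monotonic homophily for $A$ makes the first factor strictly greater than one, and monotonic homophily for $B$ makes the second factor strictly greater than one, so their product would strictly exceed one --- a contradiction. This is the whole proof. I do not expect a genuine obstacle: the argument is short, and the only points requiring care are the index bookkeeping that matches a class's type-$t$ affinity to the \emph{absolute} hyperedge count $m_{k-t}$ of the other class, and the degenerate cases $m_r = 0$ or $m_{r-1} = 0$, both of which are handled up front.
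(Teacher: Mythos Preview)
Your proof is correct and takes essentially the same approach as the paper's: both arguments use only the $t = r = (k+1)/2$ inequality from each class's monotonic homophily, exploit the index symmetry $k - r = r - 1$ to express everything in terms of $m_r$ and $m_{r-1}$, and invoke the fact that the generalized baselines are themselves affinity scores of some hypergraph $G$ to reach a contradiction. Your packaging via the universal product identity $\frac{\h_r(A)}{\h_{r-1}(A)}\cdot\frac{\h_r(B)}{\h_{r-1}(B)} = \frac{r^2}{(r-1)^2}$ is a clean restatement of what the paper does by introducing the baseline hyperedge counts $M_t$ and deriving the directly contradictory pair $m_r/M_r > m_{r-1}/M_{r-1}$ and $m_{r-1}/M_{r-1} > m_r/M_r$; your explicit handling of the degenerate cases $m_r = 0$ or $m_{r-1} = 0$ is a nice addition that the paper leaves implicit.
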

	\begin{proof}
		If both classes $A$ and $B$ exhibit strict monotonic homophily, then the following two sequences of inequalities hold:
		\begin{align}
			\label{classa}
			\text{Class A: } &\,\, \frac{\h_k(A)}{\vg_k(A)} > \frac{\h_{k-1}(A) }{\vg_{k-1}(A)} > \cdots > {\frac{\h_{r}(A)}{\vg_{r}(A)} > \frac{\h_{r-1}(A)}{\vg_{r-1}(A)}  },\\
			\label{classb}
			\text{Class B: } &\,\, \frac{\h_k(B)}{\vg_k(B)} > \frac{\h_{k-1}(B) }{\vg_{k-1}(B)} > \cdots > {\frac{\h_{r}(B)}{\vg_{r}(B)} > \frac{\h_{r-1}(B)}{\vg_{r-1}(B)}  },
		\end{align}
		where $r = (k+1)/2$. 
		Using the characterization of affinity scores given in~\eqref{eq:edgeaffinities}, the last inequality in each of~\eqref{classa} and~\eqref{classb} can be rearranged as follows:
		\begin{align}
			\label{firstA}
			{\frac{\h_{r}(A)}{\vg_{r}(A)} > \frac{\h_{r-1}(A)}{\vg_{r-1}(A)}  } &\iff \frac{rm_r}{\vg_{r}(A)} > \frac{(r-1)m_{r-1}}{\vg_{r-1}(A)},\\
			\label{secondB}
			{\frac{\h_{r}(B)}{\vg_{r}(B)} > \frac{\h_{r-1}(B)}{\vg_{r-1}(B)}  } &\iff \frac{rm_{r-1}}{\vg_{r}(B)} > \frac{(r-1)m_{r}}{\vg_{r-1}(B)}.
		\end{align}
		Above, we have used the observation that $m_{k-r} = m_{r-1}$ and $m_{k-(r-1)} = m_{r}$, in order to write both inequalities in terms of $m_r$ and $m_{r-1}$.
		Since the baseline scores are realizable, there exists some two-class $k$-uniform hypergraph $G$ whose affinity scores equal the baseline scores. Letting $M_t$ denote the number of type-$t$ hyperedges in $G$, we can write the baseline scores as
		\begin{align*}
			\vg_r(A)&= \frac{r M_r}{D_A}, \;  \vg_{r-1}(A)= \frac{(r-1) M_{r-1}}{D_A}, \\
			\vg_r(B) &= \frac{r M_{r-1}}{D_B}, \;   \vg_{r-1}(B) = \frac{(r-1) M_{r}}{D_B},
		\end{align*}
		where $D_A = \sum_{i = 1}^k i M_i$ and $D_B = \sum_{i = 1}^k i M_{k-i}$.
		Applying a few steps of algebra to the inequality on the right of~\eqref{firstA} shows that if $A$ exhibits monotonic homophily, then
		\begin{align*}
			\frac{rm_r}{\vg_{r}(A)} > \frac{(r-1)m_{r-1}}{\vg_{r-1}(A)}  \iff \frac{m_r}{M_r } > \frac{m_{r-1}}{M_{r-1}}.
		\end{align*}
		Meanwhile, the inequality in~\eqref{secondB} implies the exact opposite:
		\begin{align*}
			\frac{rm_{r-1}}{ \vg_r(B) } > \frac{(r-1)m_{r}}{ \vg_{r-1}(B)} \iff \frac{m_{r-1}}{M_{r-1}}  >\frac{m_r}{M_r }. 
		\end{align*}
		Thus, assuming both classes exhibit strict monotonic homophily leads to a contradiction.
	\end{proof}
	
	Strict monotonic homophily is in fact possible for two classes at once if $k$ is even. This can happen, for example, by starting with a complete $k$-uniform hypergraph and deleting all type-$k/2$ hyperedges, if we are specifically considering standard baseline scores from~\eqref{base1}. However, an analogous impossibility result holds if we add one extra assumption. The proof follows the same strategy as the proof of Theorem~\ref{thm:oddkmonotonic}, after adding an extra inequality for one class.
	\begin{theorem}
		\label{thm:evenkmonotonic}
		Let $H$ be a two-class, $k$-uniform hypergraph and $\{\vg_i(X)\colon i \in [k], X \in \{A,B\} \}$ be realizable baseline scores. If $k$ is even, then it is impossible for both classes to satisfy strict monotonic homophily if additionally $\frac{\h_\ell(X)}{\vg_\ell(X)} > \frac{\h_{\ell-1}(X)}{\vg_{\ell-1}(X)} $ for one class $X \in \{A,B\}$, where $\ell = k/2$.
	\end{theorem}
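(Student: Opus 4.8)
The plan is to run the argument of Theorem~\ref{thm:oddkmonotonic} almost verbatim, the only new point being that for even $k$ the ``balanced'' type $\ell=k/2$ satisfies $k-\ell=\ell$, so it is self-paired under the reflection that swaps the roles of the two classes. Let $m_t$ be the number of type-$t$ hyperedges in $H$, and let $M_t$ be the number of type-$t$ hyperedges in the hypergraph $G$ whose affinity scores realize the generalized baselines $\vg_t(\cdot)$. Write $S_A=\sum_{i=1}^k i\,m_i$, $S_B=\sum_{i=1}^k i\,m_{k-i}$, $D_A=\sum_{i=1}^k i\,M_i$, $D_B=\sum_{i=1}^k i\,M_{k-i}$. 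Using~\eqref{eq:eaffA}--\eqref{eq:eaffB} for $H$, and the same formulas applied to $G$ (which give $\vg_t(A)=tM_t/D_A$ and $\vg_t(B)=tM_{k-t}/D_B$), one gets for every $t\in[k]$
\[
\frac{\h_t(A)}{\vg_t(A)}=\frac{D_A}{S_A}\cdot\frac{m_t}{M_t},\qquad \frac{\h_t(B)}{\vg_t(B)}=\frac{D_B}{S_B}\cdot\frac{m_{k-t}}{M_{k-t}}.
\]
The prefactors $D_A/S_A$ and $D_B/S_B$ are positive and independent of $t$, so any comparison of consecutive ratio scores of a fixed class reduces to a comparison of the corresponding values $m_t/M_t$.

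Next I would extract the single inequality each hypothesis contributes. Since $k$ is even, the smallest $t$ with $t>k-t$ is $t=\ell+1$, so the last link of the monotonic chain for class $A$ is $\h_{\ell+1}(A)/\vg_{\ell+1}(A)>\h_{\ell}(A)/\vg_{\ell}(A)$, which by the display above is equivalent to $m_{\ell+1}/M_{\ell+1}>m_{\ell}/M_{\ell}$. The last link for class $B$ is $\h_{\ell+1}(B)/\vg_{\ell+1}(B)>\h_{\ell}(B)/\vg_{\ell}(B)$; here $k-(\ell+1)=\ell-1$ and $k-\ell=\ell$, so this is equivalent to $m_{\ell-1}/M_{\ell-1}>m_{\ell}/M_{\ell}$. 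I would then translate the two target conclusions the same way. For class $A$, $\h_\ell(A)/\vg_\ell(A)<\h_{\ell-1}(A)/\vg_{\ell-1}(A)$ is equivalent to $m_{\ell}/M_{\ell}<m_{\ell-1}/M_{\ell-1}$, which is precisely the inequality supplied by the monotonic homophily of class $B$. For class $B$, $\h_\ell(B)/\vg_\ell(B)<\h_{\ell-1}(B)/\vg_{\ell-1}(B)$ unwinds, using $k-\ell=\ell$ and $k-(\ell-1)=\ell+1$, to $m_{\ell}/M_{\ell}<m_{\ell+1}/M_{\ell+1}$, which is the inequality supplied by the monotonic homophily of class $A$. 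Hence both conclusions hold; unlike the odd case, the two extracted inequalities point at different pairs of types and so do not contradict one another but instead yield the two claimed strict reversals at the balanced boundary.

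I do not expect a real obstacle here; the whole content is careful index bookkeeping. The one place to be attentive is the reflection $t\mapsto k-t$ relating type-$(A,t)$ and type-$(B,k-t)$ hyperedges, combined with the self-pairing $k-\ell=\ell$: this is exactly why the type-$\ell$ ratio score of class $B$ is governed by $m_\ell/M_\ell$ (the same quantity as for class $A$) while its type-$(\ell-1)$ score is governed by $m_{\ell+1}/M_{\ell+1}$, and tracking those correspondences is what lets each class's last monotonic inequality certify the other's reversal. (It is worth noting in passing that, combined with the third bullet of Theorem~1, this shows that if both classes exhibit monotonic homophily then the extra condition $\h_{k/2}(X)/\vg_{k/2}(X)>\h_{k/2-1}(X)/\vg_{k/2-1}(X)$ must fail for \emph{both} classes.)
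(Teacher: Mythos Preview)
Your proposal is correct and follows essentially the same approach the paper indicates (the paper omits the proof, noting only that it ``follows the same line of reasoning as the previous proof, after adding the extra inequality for one of the two classes''). Your reduction of ratio comparisons to $m_t/M_t$ via the baseline hypergraph $G$ is exactly the mechanism of Theorem~\ref{thm:oddkmonotonic}, and your observation that class $A$'s last monotonic link certifies class $B$'s reversal (and vice versa) via the self-pairing $k-\ell=\ell$ is precisely the even-$k$ bookkeeping the paper alludes to.
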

	{Theorems~\ref{thm:oddkmonotonic} and~\ref{thm:evenkmonotonic} lead to other impossibility results as direct corollaries. Our definition of strict monotonic homophily for a class $X$ is defined specifically for groups where $X$ is in the majority. If we remove this restriction and consider a strict notion of monotonic homophily that requires $\h_t(X)/ \vg_t(X) > \h_{t-1}(X)/\vg_{t-1}(X)$ for \emph{all} $t \leq k$, then for every $k > 2$ this is impossible for two classes simultaneously whether or not $k$ is odd. We also see from the proof of Theorem~\ref{thm:oddkmonotonic} that a contradiction results from assuming both classes have increasing ratio scores when going from type-$(r-1)$ to type-$r$ affinities. Thus, any notion of homophily involving this assumption is impossible for two classes at once, regardless of what happens with other ratio scores.}
	
	\subsection*{Impossibility results for strict majority homophily}
	Next we turn to extremal results for majority homophily.
	\begin{theorem}
		\label{thm:majorityimpossible}
		Let $H = (V,E)$ be a two-class $k$-uniform hypergraph and $\{\bs_i(X)\colon  i\in [k], X \in \{A,B\} \}$ be realizable baseline scores.
		\begin{itemize}[itemsep = 0pt]
			\item If $k$ is odd, it is impossible for both classes to simultaneously exhibit strict majority homophily.
			\item If $k$ is even, it is impossible for both classes to exhibit strict majority homophily if additionally $\h_{k/2}(X) > \bs_{k/2}(X)$ for one of the classes $X \in \{A,B\}$.
		\end{itemize} 
	\end{theorem}
	Although our results for strict majority homophily closely mirror our results for strict monotonic homophily, Theorem~\ref{thm:majorityimpossible} is significantly more challenging to show and requires a different and more in-depth proof technique. A full proof is provided in the appendix. We provide a detailed proof sketch of the result for odd values of $k$. The same overall strategy yields the result for even $k$. 
	
	For odd $k$ and $r = (k+1)/2$, assuming both classes exhibit strict majority homophily is equivalent to satisfying two sets of inequalities:
	\begin{align}
		\label{majinequalitiesA}
		\h_t(A) &> \bs_t(A) \text{ for $t > k-t$}\\
		\label{majinequalitiesB}
		\h_t(B) &> \bs_t(B) \text{ for $t > k-t$}.
	\end{align}
	Using the characterization of affinity scores given in~\eqref{eq:edgeaffinities}, we can rearrange these inequalities to yield equivalent inequalities in terms of typed hyperedge counts $\{m_0, m_1, \hdots, m_k\}$:
	\begin{align}
		\label{eq:mt}
		m_t &> \frac{\vg_t(A)\sum_{i \neq t} i m_i }{t \cdot (1-\vg_t(A))}, &\text{for } t = r, r+1, \hdots, k \\
		\label{eq:ms}
		m_s &> \frac{\vg_{k-s}(B) \sum_{i \neq k-s} i m_{k-i}}{(k-s) \cdot (1-\vg_{k-s}(B))}, &\text{for } s = 0, 1, \hdots, r-1.
	\end{align}
	{Our aim is to show that all of the above inequalities cannot hold simultaneously.
		
		It is important to note that although every hyperedge type is bounded below by one of the inequalities in~\eqref{eq:mt} and~\eqref{eq:ms}, this does not immediately imply any contradiction. 
		This is because half of the hyperedge types are bounded below in terms of baseline scores for class $A$, while the other half are bounded below in terms of baseline scores for class $B$.} Baseline scores for $A$ and $B$ can differ significantly if these classes differ in size, even if we consider the very special case of standard scores given by~\eqref{base1}. {We ultimately show that these inequalities cannot be satisfied simultaneously for \emph{any} set of realizable baseline scores, by analytically finding solutions to a linear program encoding the maximum amount of homophily that can be satisfied by two classes at once.} Because of the complexity of this proof, we begin by considering other approaches that seem simple and natural at first but ultimately fail to prove the main result.

	We first of all note that it is simple to show that hypergraph affinity scores for a \emph{single} class $X$ cannot all be simultaneously above baseline, i.e., $\h_t(X) > \vg_t(X)$ for all $t \in [k]$. Since hyperedge affinity scores sum to one, and baseline scores do as well, summing both sides of the inequality for all $t$ leads to an immediate contradiction. This argument does not apply if we assume two classes exhibit strict majority homophily, since in this case only some types of interactions are above the baseline for class $A$, while other types of interactions are above baseline for class $B$. A next approach for trying to prove Theorem~\ref{thm:majorityimpossible} is to sum up the left and right hand sides of the hyperedge inequalities in~\eqref{eq:mt} and~\eqref{eq:ms}, and see if this leads to a contradiction. However, this also does not work. For example, let $k = 3$ and assume we use standard baseline scores from~\eqref{base1}. If classes are equal in size, then $\hat{\bs}_t = \hat{\bs}_t(A) = \hat{\bs}_t(B)$ for $t \in \{1,2,3\}$. Summing both sides of inequalities~\eqref{eq:mt} and~\eqref{eq:ms} leads to a new inequality
	\begin{equation*}
		3(m_0 + m_3) + 2(m_1 + m_2) > 3(m_0 + m_1 + m_2 + m_3)(\hat{\bs}_2 + \hat{\bs}_3).
	\end{equation*}
	This is satisfied by any hypergraph where $m_1 = m_2 = 0$, so it does not provide the contradiction we are looking for.

	The proof of Theorem~\ref{thm:oddkmonotonic} shows that a {subset} of the strict monotonic homophily inequalities (in fact, just two of them) leads to a contradiction. Therefore, another natural strategy for trying to prove Theorem~\ref{thm:majorityimpossible} is to see whether a subset of the inequalities given by~\eqref{eq:mt} and~\eqref{eq:ms} contradict each other. In the appendix, we prove the following result, which rules out this possibility.
	\begin{proposition}
		\label{prop:needall}
		If any inequality from~\eqref{eq:mt} and~\eqref{eq:ms} is discarded, it is possible to construct a two-class $k$-uniform hypergraph satisfying the remaining inequalities.
	\end{proposition}
	This means that any strategy similar to the proof for Theorem~\ref{thm:oddkmonotonic} will fail for Theorem~\ref{thm:majorityimpossible}. Instead, any proof for Theorem~\ref{thm:majorityimpossible} will need to incorporate every one of the inequalities in~\eqref{eq:mt} and~\eqref{eq:ms} if we are to show that strict majority homophily is impossible. 
	
	
	\paragraph{Capturing extremal limits of homophily via linear programming}
	Having ruled out simpler strategies for proving Theorem~\ref{thm:majorityimpossible}, we now outline a linear programming framework for checking the maximum amount of homophily that can be exhibited in a set of group interactions, subject to different constraints on higher-order affinity scores. This first of all provides a general framework for numerically checking whether different extremal notions of higher-order homophily can be satisfied or not. We will also show how to use analytical solutions and linear programming duality to fully prove Theorem~\ref{thm:majorityimpossible}.
	
	We specifically consider a linear program (LP) that encodes the maximum amount of majority homophily that can be satisfied by two classes $A$ and $B$ simultaneously in a $k$-uniform two-class hypergraph. { This LP is given by
		\begin{equation}
			\label{lpmain}
			\begin{array}{llr}
				\text{max} & \gamma \\
				\text{s.t.}& \sum_{i = 0}^k x_i = 1 \\
				& t\cdot x_t - \vg_t(A) \cdot \sum_{i = 0}^k i \cdot x_i \geq \gamma \hspace{1cm}\text{ for $t \in \{r, \hdots, k\}$}   \\
				&t\cdot  x_{k-t} - \vg_t(B) \cdot \sum_{i = 0}^k i \cdot x_{k-i}\geq \gamma \hspace{.3cm} \text{ for $t \in \{r, \hdots, k\}$}    \\
				& x_i \geq 0 \hspace{4.32cm}\text{ for $i \in \{0\} \cup [k]$}. \\
			\end{array}
	\end{equation}}
	In this LP, there is a variable $x_i \geq 0$ for each type of hyperedge in some hypergraph. The constraint $\sum_{i = 0}^k x_i = 1$ encodes the fact that $x_i$ in fact represents the proportion of hyperedges that are of type-$i$. The constraint
	\begin{equation*}
		t\cdot x_t - \vg_t(A) \cdot \sum_{i = 1}^k i \cdot x_i \geq \gamma
	\end{equation*}
	can be rearranged into the inequality:
	\begin{equation*}
		\frac{t \cdot x_t}{\sum_{i = 1}^k i \cdot x_i} \geq \vg_t(A) + { \frac{\gamma}{\sum_{i = 1}^k i \cdot x_i}.}
	\end{equation*}
	This constrains the type-$t$ affinity score for class $A$ to be larger than its baseline score by at least an additive term {$\gamma /\sum_{i = 1}^k i \cdot x_i$}, which will be positive if and only if $\gamma$ is positive. The second set of constraints encodes similar bounds for the affinity scores of class $B$. A feasible solution with $\gamma = 0$ can always be achieved if the $x_i$ variables represent hyperedge counts for a hypergraph whose affinity scores are equal to the realizable baseline scores. We prove the following results in the  appendix.
	\begin{lemma}
		\label{iff}
		Let $\gamma^*$ be the optimal solution to the linear program in~\eqref{lpmain}. There exists a two-class $k$-uniform hypergraph where both classes exhibit strict majority homophily if and only if $\gamma^* > 0$. 
	\end{lemma}
	Given this result, we can check numerically whether strict majority homophily can hold for two classes, as long as we are given a fixed set of baseline scores and a fixed $k$. However, numerical solutions do not provide a full proof of our result for general baseline scores and arbitrary $k$. { In order to prove our theorem, we consider the dual of the linear program in~\eqref{lpmain}, which is given by
		\begin{equation}
			\label{lpdual}
			\begin{array}{llr}
				\text{min} & \alpha \\
				\text{s.t.}& \sum_{t = r}^k y_{A,t} + y_{B,t} \geq 1 \\
				&-iy_{B,i} +  (k-i)\sum_{j = r}^k y_{A,j}\vg_j(A)+ \;i\sum_{j = r}^k y_{B,j}\vg_j(B)  + \alpha \geq 1 \hspace{.5cm}\text{ for $i \in \{0\} \cup [k]$} \\
				& y_{A,t} \geq 0 \hspace{4.1cm}\text{ for $t \in \{r, \hdots, k\}$} \\
				& y_{B,t} \geq 0 \hspace{4.1cm}\text{ for $t \in \{r, \hdots, k\}$}.
			\end{array}
		\end{equation}
		We prove a key result regarding a set of feasible variables for the dual LP.}
	\begin{lemma}
		\label{dualvariables}
		For an odd integer $k$ and $r = (k+1)/2$, define $\delta = 2k\sum_{t = r}^k \frac{1}{t}$, and consider the following set of dual variables:
		\begin{align*}
			\alpha &= 0 \\
			y_{B,k} &= \frac{2}{\delta} \cdot \frac{\sum_{i = r}^k (\frac{k}{i} - 1)\vg_i(B)} {1 - \sum_{i = r}^k (2 - \frac{k}{i}) \vg_i(B)} \\
			y_{B,t} &= \frac{2}{\delta}\left( \frac{k}{t} - 1\right) + \left(2 - \frac{k}{t}\right) y_{B,k} \,\,\text{ for $t \in \{r, \cdots, k-1\}$} \\
			y_{A,t} &= \frac{2k}{\delta t} - y_{B,t} \,\,\text{ for $t \in \{r, \cdots, k\}$.}
		\end{align*}
		If $Y = \sum_{t = r}^k y_{A,t} + y_{B,t}$, then the set of normalized dual variables defined by $\tilde{y}_{X,t} = y_{X,t}/Y$ for $ X\in \{A,B\}$ and $t \in \{r, \hdots , k\}$ is feasible for the dual LP in~\eqref{lpdual}.
	\end{lemma}
	The full proof of this lemma, provided in the appendix, is quite involved and relies on the fact that the baseline scores are realizable. Once the result is proven, it immediately implies our impossibility result for odd $k$.  By linear programming duality, any feasible solution for the dual LP provides an upper bound on the solution to the primal LP. Since the dual variables we provide in Lemma~\ref{dualvariables} come with an objective score of $\alpha = 0$, we know the optimal solution to the primal LP is also $0$. By Lemma~\ref{iff}, strict majority homophily must be impossible to satisfy for both classes $A$ and $B$ at once if $k$ is odd. For even $k$, the impossibility result in Theorem~\ref{thm:majorityimpossible} can be shown by adding one more constraint to the primal linear program and providing an analytical solution to the new dual linear program, similar to Lemma~\ref{dualvariables}.
	
	While our linear programming framework does not constitute the only way to prove Theorem~\ref{thm:majorityimpossible}, it is a useful approach for capturing extremal limits of higher-order homophily beyond this specific result and its proof. The linear constraints encoding bounds on different affinity scores can be easily altered to quickly check the feasibility of other notions of homophily. For example, one could quickly check whether the top $i$ ratio scores can all be above a certain fixed threshold for two node classes at once, for different values of $i$. This LP framework can also be used more broadly as a proof technique for other theoretical results. Our proof of Proposition~\ref{prop:needall} in the appendix in fact makes use of the LP formulation in~\eqref{lpmain} and Lemma~\ref{iff}. We can also use an alternative linear program and LP duality proof to prove Theorems~\ref{thm:oddkmonotonic} and~\ref{thm:evenkmonotonic}. In this case, and unlike Lemma~\ref{dualvariables}, most of the optimal dual variables for a linear program encoding strict \emph{monotonic} homophily end up being  zero, except for the dual variables associated with the contradictory constraints in inequalities~\eqref{firstA} and~\eqref{secondB}. This simplifies the proof for monotonic homophily, and again indicates that the result for monotonic homophily is simpler to show than the corresponding result for majority homophily.

	\subsection*{Alternative affinity scores and normalizations} 
	{ Several slightly different measures of graph homophily have been considered in previous research~\cite{altenburger2018monophily,currarini2009economic,coleman1958relational,newman2003mixing},
		and in the same way there is more than one way to quantify homophily in the hypergraph setting. }
	We additionally consider the following alternative hypergraph affinity scores:
	\begin{equation}
		\label{eq:edgeaffinities2}
		\tilde{\mathbf{h}}_t(A) = \frac{m_t}{\sum_{i = 1}^k m_i}, \text{ and } \tilde{\mathbf{h}}_t(B) = \frac{m_{k-t}}{\sum_{i = 1}^k m_{k-i}}.
	\end{equation}
	Unlike the affinity scores in~\eqref{eq:edgeaffinities}, which are equivalent to our original definition in~\eqref{affinities}, these scores directly depend on the proportion of different hyperedge types. This is another natural approach for quantifying an entire class's group interaction patterns. In the appendix, we derive matching combinatorial impossibilities for these alternative scores, showing that our main results persist across various notions of group affinities. { We primarily focus on the affinity score in~\eqref{affinities}, defined by ratios of typed degrees, as this directly generalizes an existing notion of a graph homophily index~\cite{altenburger2018monophily}. 
		This focus on node degrees is also shared by other closely related measures of graph homophily~\cite{coleman1958relational,currarini2009economic}, and provides a way to capture the average experience or behavior of an individual in a certain node class.
		
		There is also more than one approach to measuring how much a graph homophily index deviates from a null model. One useful normalization in the graph setting is to consider how much a graph homophily index deviates from its baseline, relative to the \emph{maximum} amount that it could deviate from baseline~\cite{coleman1958relational,currarini2009economic}. We can incorporate this notion into our hypergraph framework by defining the following type-$t$ \emph{normalized bias} 
		score 
		\begin{equation}
			\label{eq:normbias}
			\textbf{f}_t(X) = \begin{cases}
				\frac{\h_t(X) - \bs_t(X)}{1 - \bs_t(X)} & \text{ if $\h_t(X) \geq \bs_t(X)$ } \\ \\
				\frac{\h_t(X) - \bs_t(X)}{\bs_t(X)} & \text{ if $\h_t(X) < \bs_t(X)$. } \\
			\end{cases}
		\end{equation}
		The value $\h_t(X) - \bs_t(X)$ is the \emph{bias} that class $X$ has for type-$t$ interactions, and $\textbf{f}_t(X)$ normalizes this by the maximum possible bias. If this affinity is overexpressed ($\h_t(X) > \bs_t(X)$), then it has a positive bias and the maximum bias is achieved when $\h_t(X) = 1$. When the affinity is underexpressed, the maximum deviation from baseline is when $\h_t(X) = 0$ and we therefore normalize by $\bs_t(X)$. The normalized bias score conveys useful information both in terms of its sign and magnitude. The sign indicates whether an affinity has a positive or negative bias, and the magnitude is always a value between 0 and 1 that indicates how close it comes to its maximum bias. While the magnitude of a ratio score $\h_t(X)/\bs_t(X)$ may depend on the hypergraph size, the fact that $\textbf{f}_t(X)$ is always between $-1$ and $1$ makes it a particularly useful score to use when comparing notions of homophily across different datasets. In our empirical results, we will often consider normalized bias scores in addition to raw affinity scores and ratio scores. Finally, in the appendix we show that Theorems~\ref{thm:oddkmonotonic},~\ref{thm:evenkmonotonic}, and~\ref{thm:majorityimpossible} immediately lead to analogous impossibility results for normalized bias scores as well. We see therefore that our main impossibility results persist across a wide range of different notions of higher-order homophily, and that our framework easily accommodates different approaches for measuring deviation from a null model.}

	\section*{Empirical Results}
	Our theoretical results reveal that seemingly natural notions of group homophily are in fact overly strict and cannot be exhibited, simply because of combinatorial impossibilities. However, this by no means implies that higher-order homophily cannot be meaningfully measured or exhibited in practice. To the contrary, establishing limits on what \emph{is} combinatorially possible in group homophily allows us to better interpret and appreciate relaxed notions of majority and monotonic group homophily that \emph{do} hold in practice despite being very close to the combinatorial limits of group homophily. We specifically apply our framework to study group homophily in legislative bill-cosponsorship, online hotel reviewing, shopping trip data, and group picture data.
	Our framework reveals new insights into the way same-class group mixing patterns are exhibited in these settings, and allows us to uncover structure and patterns that are missed when applying graph homophily measures, which only account only for pairwise interactions.
	\subsection*{Political homophily in legislative bill co-sponsorship}
	\begin{figure}[t]
		\centering
		\includegraphics[width=.95\linewidth]{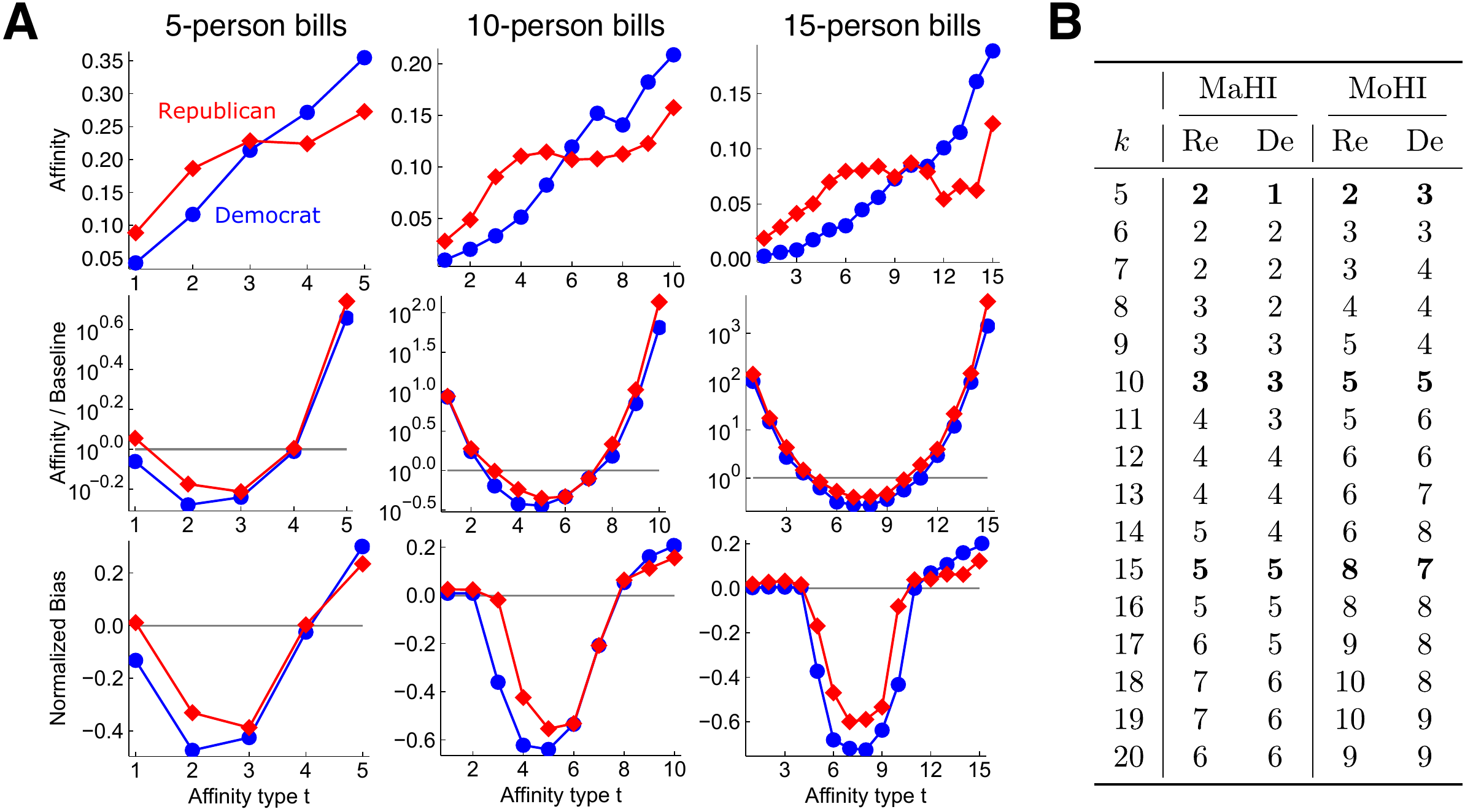} 
		\caption{\small
			\textbf{US Members of congress (nodes) co-sponsoring bills (hyperedges) exhibit certain notions of same-class homophily in terms of political party}. (\textbf{A}) Affinity scores (top row) increase for Democrats, but are relatively flat for Republicans. However, after dividing by baselines (middle row), both classes exhibit bowl-shaped ratio scores that nearly satisfy majority homophily (higher-than-baseline affinities for groups where one's class is the majority) and monotonic homophily (strictly increasing ratio scores for groups where one's class is in the majority) without ever violating our theoretical impossibility results. For example, for bills with 5 co-sponsors, Democrats exhibit monotonic homophily, and Republicans almost do as well, except for a slight decrease in scores from $t = 2$ to $t = 3$ (middle row, leftmost panel). Similar observations hold for bills of other co-sponsorship sizes. 
			(\textbf{B}) Both Republicans (Re) and Democrats (De) almost always exhibit the highest possible monotonic homophily index (MoHI) without violating combinatorial limits (e.g., MoHI of $k/2$ for both parties when $k$ is even). Neither class exhibits majority homophily. However, as group size increases, the majority homophily index (MaHI, the number of top affinity scores above baseline) increases. Bold rows correspond to plots for bills with 5,  10, and 15 co-sponsors. Our results are also robust to perturbations in the data; we obtain nearly identical plots when averaging scores obtained by repeatedly subsampling hyperedges from the dataset (see the appendix).}
		\label{fig:bills} 
	\end{figure}
	
	We quantify political homophily~\cite{loomis1946political,gentzkow2019measuring,adamic2005polblogs,poole1985spatial,berelson1954voting} 
	with respect to political party for US members of congress (nodes in a hypergraph), based on groups of congress members formed by co-sponsorship of legislative bills (hyperedges)~\cite{Fowler-2006-cosponsorship,Benson-2018-simplicial,Fowler-2006-connecting}.
	{	Affinity scores for Democrats strictly increase for most group sizes, though scores are flatter for Republicans (Fig.~\ref{fig:bills}A, top row). However, both classes exhibit similar bowl-shaped ratio curves and normalized bias curves (Fig.~\ref{fig:bills}A, bottom two rows), demonstrating that highly imbalanced groups are highly overexpressed.}

	Our framework reveals very strong notions of group homophily, at the most extreme limits of what is combinatorially possible. Recall that for even-sized groups, strict monotonic homophily is the most extreme example of what is combinatorially possible for two classes simultaneously, while for odd-sized groups it lies just beyond the combinatorially feasible boundary. Ratio scores for the two political parties almost perfectly match these extreme combinatorial limits. Both parties satisfy strict monotonic homophily for most even-sized groups, which is reflected in monotonic homophily indices (MoHI) of $k/2$ when $k$ is even (Fig.~\ref{fig:bills}B).  When group size $k$ is odd, we typically see one party exhibit strict monotonic homophily (MoHI of $(k+1)/2$), while the other party just barely fails to satisfy strict majority homophily (MoHI of $(k-1)/2$). { In other words, for nearly every group size, we observe the maximum level of monotonic homophily that can be exhibited by two classes simultaneously.}
	Across all bill sizes, neither political party exhibits strict majority homophily, but {both} classes exhibit higher-than-random affinity scores when their political party makes up a large enough majority of bill co-sponsors. This is formally captured by majority homophily indices (MaHI) that steadily increase for each political party as the group size $k$ increases.
	
	The bowl-shaped ratio curves { and normalized bias curves} for Democrats and Republicans also illustrate a point that at first seems counterintuitive, but is easily understood in light of our framework. In order for both parties to exhibit overexpressed affinities for groups where they possess a large majority, a substantial number of individuals from each party must be willing to participate in groups where their party is in the minority. Overall, both parties have a higher tendency to participate in co-sponsorship groups where they are significantly outnumbered, compared with participating in groups where they constitute just a slight majority. { The normalized bias scores (third row of Fig.~\ref{fig:bills}A) for these slight-majority groups are in fact close to the minimum score of $-1$, indicating that affinities are almost as far below baseline as they could be.}
	{This at first seems to contradict the notion of homophily in group interactions, but our theoretical results explain why this must hold in order for both parties to satisfy strong notions of group homophily.}
	
	The major difference between affinity scores $\h_t(X)$ and ratio scores $\h_t(X)/\bs_t(X)$ arises because selecting group members at random from two balanced classes would naturally tend to produce class-balanced groups. In other words, the baseline scores $(\bs_1(X), \bs_2(X), \hdots, \bs_k(X))$ will be imbalanced as group size $k$ grows, with $\bs_t(X)$ decreasing as $t$ approaches extreme values.
	Flat or increasing affinity scores for both political parties (Fig.~\ref{fig:bills}A, top row) indicate that the social processes driving bill co-sponsorship have overcome the tendency towards class-balanced group interactions that would be expected at random. This reveals another major difference between measuring group homophily and measuring homophily in graphs. For class-balanced graphs, roughly equal affinity scores $\h_1(X) \approx \h_2(X) \approx 0.5$ indicate that there is no clear tendency towards in-class or out-class links, i.e, no clear tendency towards homophily.
	
	\subsection*{Location homophily in trip reviews}
	\begin{figure}[t]
		\centering
		\includegraphics[width=.95\linewidth]{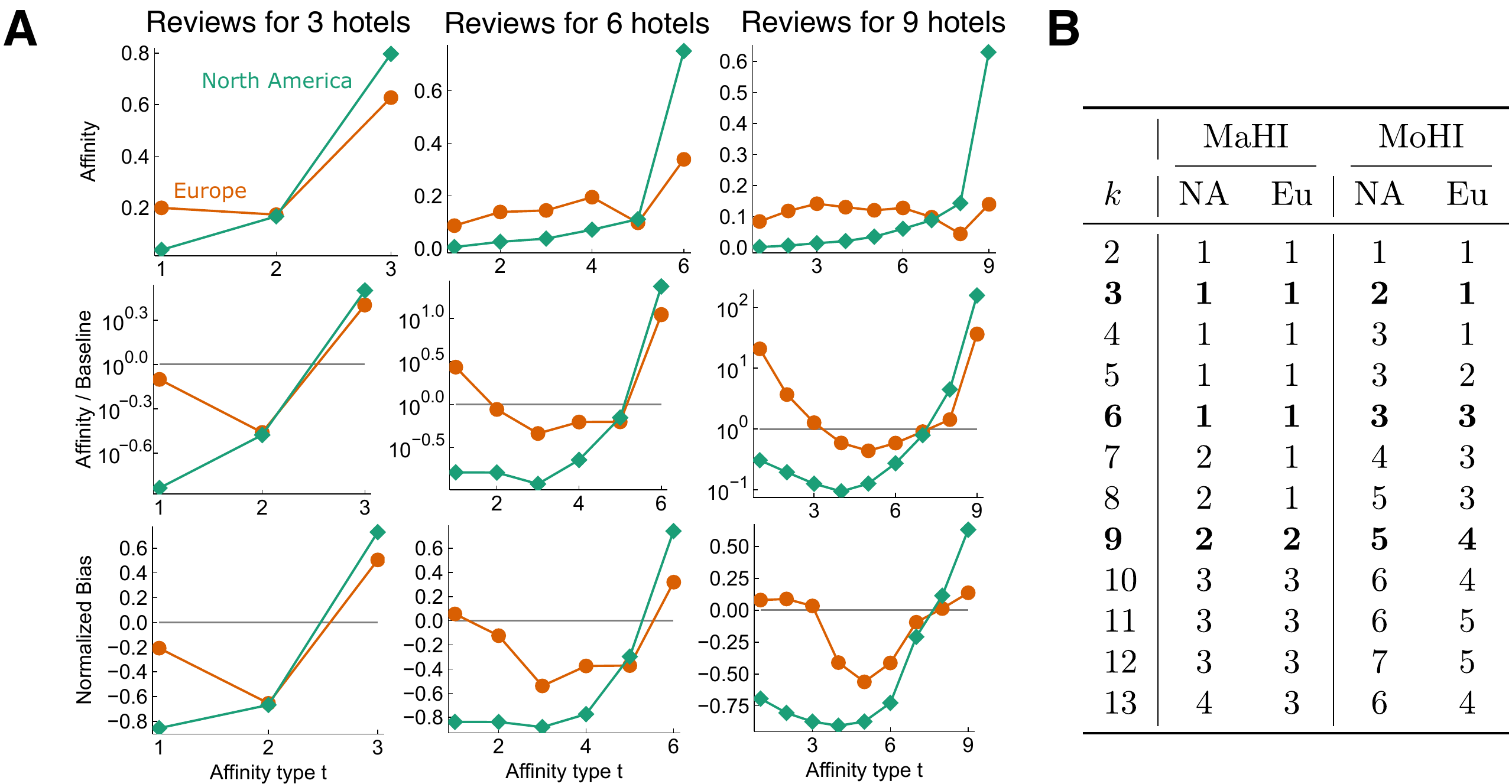} 
		\caption{\small \textbf{Measures of homophily with respect to location for groups of co-reviewed vacation rentals.} (\textbf{A}) Affinity, ratio, and normalized bias scores on a hypergraph where nodes are hotels, separated into to location classes (North America and Europe), and hyperedges are sets of hotels reviewed by the same user account on Trip Advisor~\cite{wang2011latent}. Results are comparable to our findings on political homophily in legislative bill co-sponsorship. In each case, both classes have monotonic or \emph{nearly} monotonic ratio scores (second row of plots) for groups where their class is in the majority. (\textbf{B}) {Monotonic homophily indices (MoHI) for both North America (NA) and Europe (Eu) are usually at the most extreme limits of what is combinatorially possible. The increasing majority homophily index (MaHI) shows that classes do tend to exhibit high affinities for groups where their class has a substantial majority, but this is not the case for groups where their class has only a slight majority. }
		}
		\label{fig:ta} 
	\end{figure}
	Our framework also applies to hypergraphs that do not encode social interactions. We compute affinity scores for a hypergraph where nodes are hotels on \verb+tripadvisor.com+ from two location classes, North America and Europe, and each hyperedge indicates a group of hotels reviewed by the same user account (Fig.~\ref{fig:ta}A). 
	Affinity scores demonstrate intuitive location homophily in review information: users tend to review hotels that are in the same location. {Monotonic homophily indices for both location classes are at the extreme combinatorial limits of what is possible for two classes simultaneously, across different numbers of reviews.
		Furthermore, affinities are higher than random for reviewing sets of hotels as long as a substantial majority of the hotels are from the same location (Fig.~\ref{fig:ta}B).}
	
	\subsection*{Product homophily in shopping baskets}
	\begin{figure}[t]
		\centering
		\includegraphics[width=.95\linewidth]{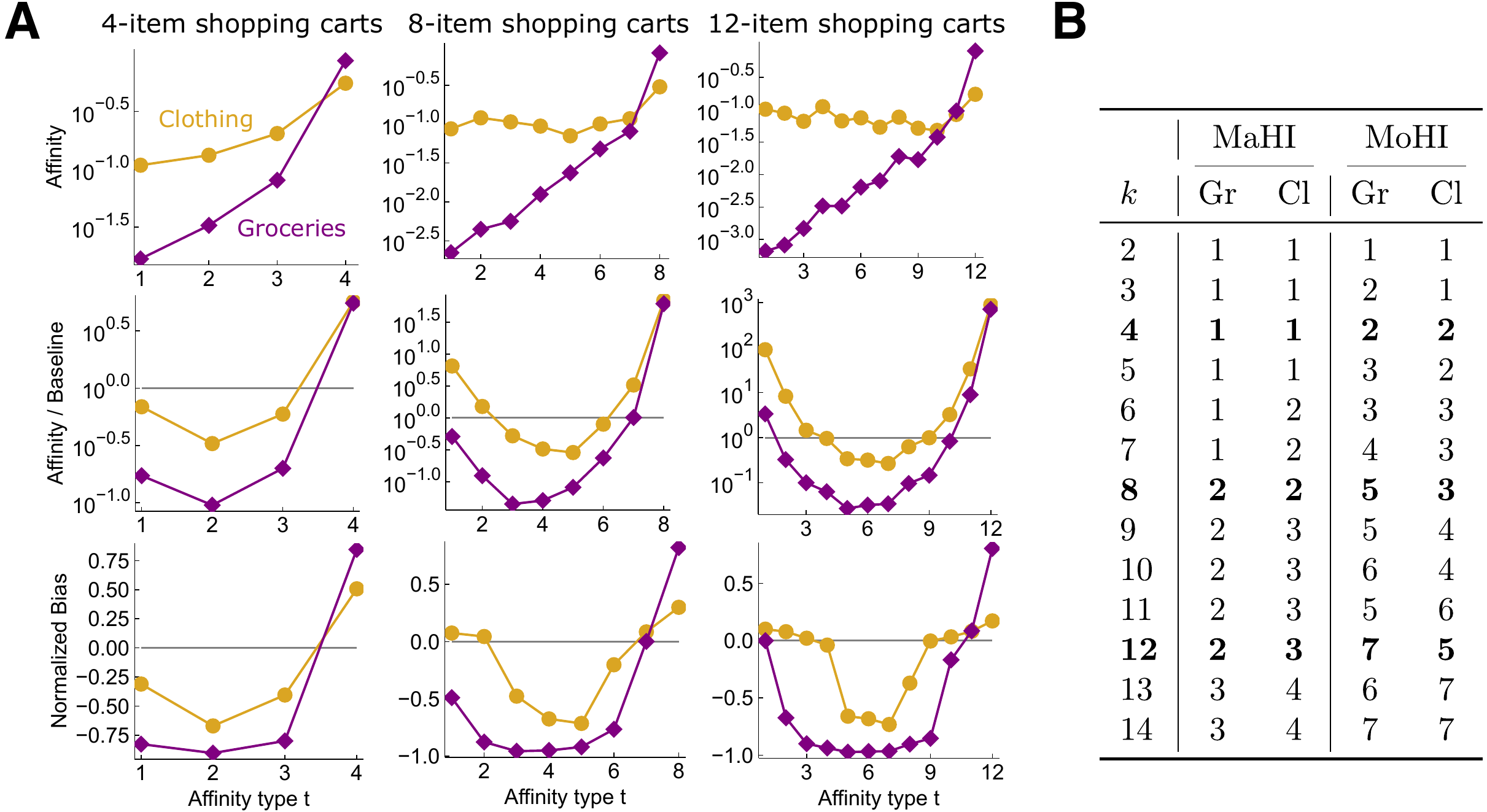} 
		\caption{\small
			\textbf{ Measures of homophily with respect to product type for groups of co-purchased retail products.}
			(\textbf{A}) Affinity, ratio, and normalized bias scores for a hypergraph where nodes indicate clothes (Cl) and grocery (Gr) products at Walmart, and hyperedges indicate sets of co-purchased items during a shopping trip. The bowl-shaped ratio { and normalized bias} curves for both products indicate that it is typical for shopping trips to primarily focus on one type of product or the other. The fact that affinity scores for grocery products are mostly increasing, while affinity scores for clothing are mostly flat, also matches basic intuition about shopping trips. For example, the relatively small gap between type-$k$ and type-$(k-1)$ affinities for clothing is indicative of the fact that when going clothes shopping, it is not uncommon to pick up a needed grocery item while at the store. In contrast, when grocery shopping, it is much less common to additionally pick up a small number of clothing items at the store. 
			This is reflected in the larger gaps between affinity scores and normalized bias scores for groceries. 
			(\textbf{B}) {The majority homophily indices grow as shopping basket size increases. Monotonic homophily indices are also high for both classes.}
		}
		\label{fig:walmart}
	\end{figure} 
	We also compute affinity scores for a hypergraph derived from a dataset on Walmart shopping trips~\cite{Amburg-2020-categorical}. Each node is a product, and hyperedges represent sets of co-purchased products (i.e., shopping baskets). Each product comes with a store department label. The labels ``Food, Household \& Pets'' and ``Clothing, Shoes \& Accessories'' make up over half of the original dataset, indicating that a large proportion of shopping purchases can broadly be categorized as clothes purchases or grocery purchases. We consider the two-class hypergraph obtained by restricting to products with these two labels, and compute affinity, ratio, { and normalized bias scores}. Similar to our empirical results on congress bills and hotel reviews, we observe that flat or slightly increasing affinity scores translate to bowl-shaped ratio scores and normalized bias scores (Fig.~\ref{fig:walmart}A). In other words, shopping trips where a significant majority of purchases are from one product category are much more common than expected by chance. This matches the intuition that many shopping trips can be categorized as grocery runs or clothes shopping trips. Our results also highlight an intuitive difference between these types of trips: it is more common to pick up a small number of grocery items while on a clothes shopping trip, than to purchase a small number of clothes items while on a grocery run. { This is reflected in the larger gaps between affinity scores for groceries, and can also been seen in the normalized bias scores. In particular, the type-$k$ normalized bias score $\textbf{f}_k$ for groceries is very close to 1 for all values of $k$, indicating that simple homophily for groceries is almost as high as it possibly could be. Meanwhile, normalized biased scores $\textbf{f}_t$ are very close to $-1$ if $t$ is just a few values less than $k$ (e.g., when $t =  9$ for the $k = 12$ plot), indicating that grocery trips that include a few clothing items are almost as far below baseline scores as they could be. }
	
	\subsection*{Gender homophily in pictures}
	\begin{figure}
		\centering
		\includegraphics[width=.95\linewidth]{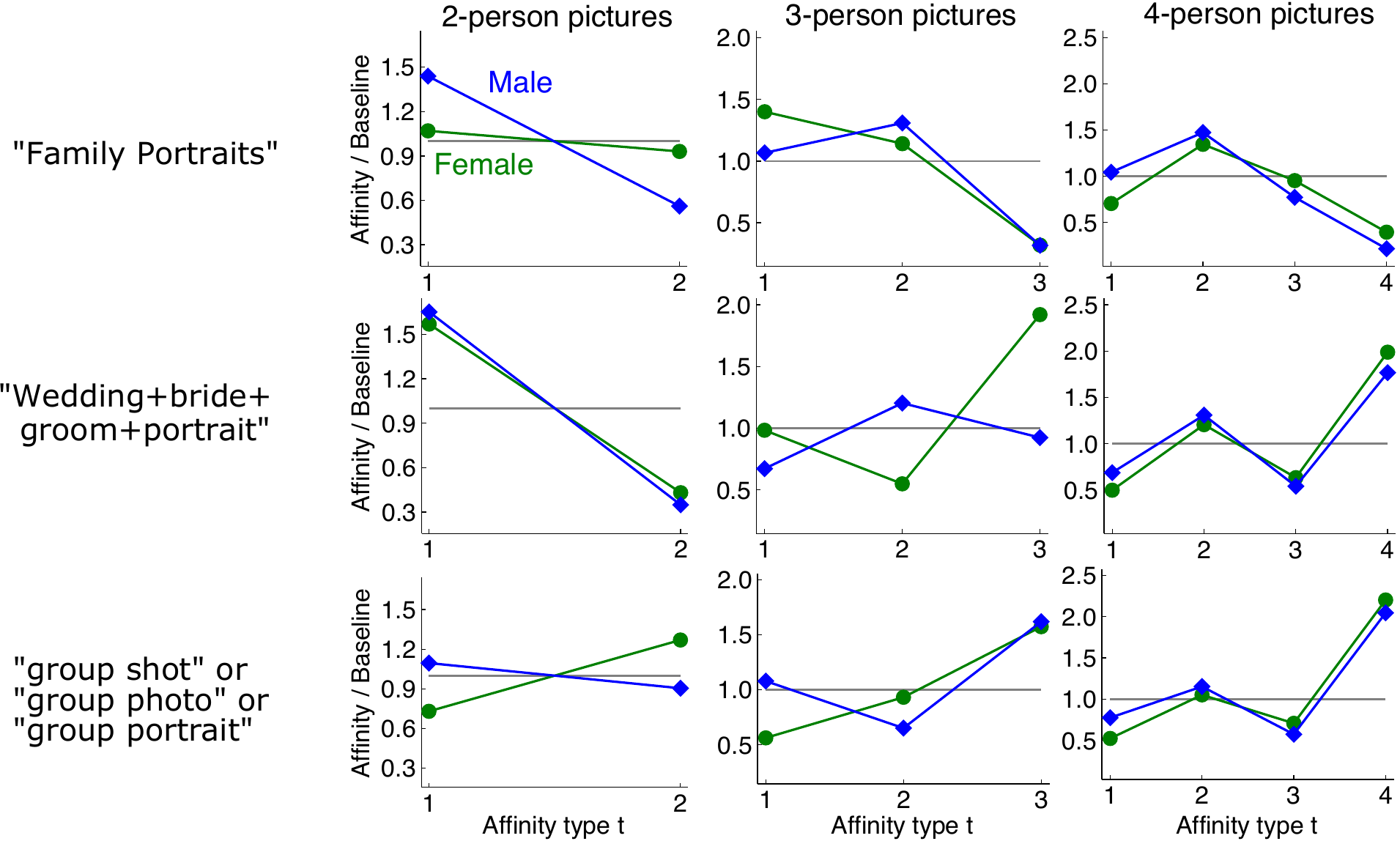} 
		\caption{\small \textbf{Ratio scores with respect to gender for three collections of group pictures, obtained via three image search queries on Flickr}. { Normalized bias scores capture the same trends}. Hypergraph measures provide richer information than graph homophily indices obtained by collapsing pictures into pairwise relationships based on co-appearance. 
			For family pictures (top row), ratio scores capture the intuition that all-male and all-female family pictures are statistically uncommon, as shown by low ratio scores for 3- and 4-person pictures of all men or all women. Meanwhile, the graph homophily indices for men and women when collapsing all family pictures into pairwise relationships are 0.43 and 0.41 respectively, just below the baseline of 0.5 for balanced classes.
			Ratio scores for 4-person wedding pictures (middle row) or general group pictures (bottom row) indicate a high frequency of social gatherings of all men or all women. The slightly higher-than-random affinities for gatherings with two men and two women are possibly due to a high number of pictures of two opposite-gender couples. Two-person wedding photos are likely to be of a bride and groom, which is reflected in low type-2 ratio scores (first column, middle row). However, pictures with three or four people are often gender homogeneous. This information is lost when collapsing all pictures into pairwise co-appearances, in which case the resulting graph homophily indices of 0.57 and 0.55 are both slightly above baseline (0.5). Results for the dataset are robust to perturbations; we see similar patterns from average affinity scores obtained from different subsamples of each dataset (see the appendix). }
		\label{fig:photos}
	\end{figure} 
	Our framework can also be used to study homophily in groups even when group members are not uniquely associated with nodes in a hypergraph. We apply our framework to analyze gender homophily in group pictures, comprised of three subsets of pictures capturing family portraits, wedding portraits, and general group pictures~\cite{gallagher2009}. In order to compute affinity scores, it suffices to know the size and gender composition of each group in a picture, even without unique identifiers for each individual. Hypergraph affinity scores reveal that the gender distribution in group pictures depends on group size. This information is completely lost if we reduce group pictures to pairwise co-appearances and compute a graph homophily index. Our framework also reveals several salient differences between gender distributions across different picture types (Fig.~\ref{fig:photos}A). 
	
	In wedding pictures, affinity scores for same-gender two-person pictures are far smaller than expected at random, reflecting the fact that many of these photos are of a bride and groom. However, there is more gender homophily in three- and four-person pictures at weddings, as shown by higher-than-random affinity scores for gender homogeneous groups (i.e., simple homophily). Reducing all wedding pictures to pairwise relationships suppresses these subtle differences, and just produces a graph where both genders have slightly higher-than-random graph homophily indices. Wedding pictures and general group pictures with exactly four people show similar patterns. Pictures with two men and two women are slightly more common than expected by chance, and pictures of all men or all women are much more common than expected by chance. The former may be due to a high volume of pictures of two heterosexual couples, while the latter indicates an overall tendency for friends to gather in groups that are completely homogeneous with respect to gender. 
	
	Our framework reveals several interesting differences in the context of family pictures. In family pictures with three or four people, pictures with only men or only women are far less common than expected by chance. This matches the intuition that statistically, family photos are less likely to be of all men or all women. In contrast, graph homophily indices for men and women on a reduced graph (defined by co-appearances) are only slightly lower than expected by chance. Another interesting observation is that for 4-person pictures, type-2 ratio scores are the highest for both men and women. This can be explained by the fact that it is statistically very common for four-person families to consist of a mother and father with two children. { In this case, assuming that the children have an equally likely chance of being male or female, there is a 50\% chance that there will be one boy and one girl, a 25\% chance that both children will be boys, and a 25\% chance that both children will be girls. This explains why type-2 ratio scores are much higher than type-3 ratio scores. Again, this type of nuanced information is lost when reducing group pictures to co-appearances and using graph measures of homophily. }
	
	\section*{Discussion}
	Understanding group formation and interactions has long been a goal of homophily research, but previous methods have focused almost exclusively on pairwise approaches. Our framework for hypergraph homophily quantifies the tendency of individuals to participate in multiway interactions that differ in size and class balance. Our results show that group interactions among different classes of individuals must obey certain combinatorial constraints, which render seemingly intuitive notions of group homophily impossible. At the same time, these combinatorial impossibilities do not imply that group interactions happen indiscriminately of class labels, and in practice we do see many examples of class homogeneity in group interactions. We find that in many group settings, homophily can be characterized by bowl-shaped ratio score curves. These scores indicate that different classes of individuals exhibit increasing and higher-than-random affinities for group interactions when a large enough majority of group members are from the same class. 
	
	{ Our empirical results illustrate the utility of defining and computing a different affinity score for each group size and group type separately. This is most clearly illustrated in our results on group pictures, where homophily patterns are significantly affected both by picture context (e.g., wedding vs.\ family picture) and group size. At the same time, it can also be useful to capture aggregate information about group homophily that persists across group types and sizes. Our measures of MoHI and MaHI provide one simplified aggregate score; determining other aggregate scores that summarize the tendency towards homophily across multiple group types and sizes at once is an interesting direction for future research. Another direction to consider is how our framework can be applied and generalized to hypergraphs with multiple class labels. Our definition of affinity and baseline scores can already be applied to an arbitrary number of class labels, but an interesting open question to explore is how combinatorial limits change in this setting. 
		Finally, it would be worthwhile to consider how our hypergraph framework can be used to explore higher-order generalizations of other mixing patterns, such as \emph{monophily}~\cite{altenburger2018monophily}, that may be present in group interactions even when homophily is not.}

	\section*{Materials and Methods}
	The appendix provides details for the original datasets and the construction of each hypergraph from Figures~\ref{fig:dblp},~\ref{fig:bills},~\ref{fig:ta},~\ref{fig:walmart}, and~\ref{fig:photos}. 
	
	\paragraph{Asymptotic baseline scores.} For numerical experiments on all hypergraphs except for the small hypergraph used in Figure~\ref{fig:walmart}, we used asymptotic variant of the standard baseline scores in~\eqref{base1}. To compute asymptotic baselines, we consider a two-class hypergraph where the class proportions are given by $\alpha = \frac{|A|}{n}$ and $(1-\alpha) = \frac{|B|}{n}$.
	Treating $\alpha$ as a constant and letting $n \rightarrow \infty$, the standard baseline scores converge to the following asymptotic baselines:
	\begin{align}
		\bs_t(A) &= \alpha^{t-1} (1-\alpha)^{k-t} { k-1 \choose t-1} \\\bs_t(B) &= (1-\alpha)^{t-1} \alpha^{k-t} { k-1 \choose t-1},
	\end{align}
	for $t \in [k]$.
	These scores correspond to probability mass functions for binomial random variables $\text{Bin}(\alpha, k)$ and $\text{Bin}(1-\alpha, k)$ respectively. The number of nodes $n$ is sufficiently large for our datasets that these are virtually indistinguishable from standard baseline scores, and are also more convenient to compute and use in practice. For standard baselines scores, computing binomial coefficients for large $n$ and $k$ can lead to overflow issues in practice; asymptotic baselines provide one way to avoid this issue. This also mirrors the standard practice in the graph case, since typically the graph homophily index for a class $X$ is compared against the asymptotic baseline score $|X|/n$ rather than $(|X|-1)/(n-1)$.
	
	\paragraph{Graph homophily index}
	In some cases, we compare against the graph homophily index obtained by reducing hyperedges in the hypergraph $H = (V,E)$ to pairwise relationships. Formally, we define a graph $G$ where nodes $u,v \in V$ share an edge if $\{u,v\}\subseteq e$ for some $e \in E$. The graph $G$ can be described as a two-uniform hypergraph, and the graph homophily index for a class $X \subseteq V$ is exactly the type-$2$ affinity $\h_2(X)$, computed using~\eqref{affinities}.
	
	
		\section*{Acknowledgments}
	We thank Johan Ugander and Kristen Altenburger for helpful conversations. 
	A.R.B.\ was supported by NSF (DMS-1830274), ARO (W911NF19-1-0057), ARO MURI, and JPMorgan Chase \& Co.
	J.K.\ was supported by a Simons Investigator Award, a Vannevar Bush Faculty Fellowship, ARO MURI, and AFOSR.
	Author contributions: N.V.\ contributed to the conceptualization, methodology, software, formal analysis, data curation, visualization, and wrote the original draft; A.R.B.\ contributed to the conceptualization, methodology, data curation, visualization, writing - review \& editing, supervision, and funding acquisition; and J.K.\ contributed to the conceptualization, methodology, supervision, visualization, writing - review \& editing, and funding acquisition.
	Authors declare no competing interests. { All software and data for reproducing the experimental results are available publicly on Zenodo (\url{https://doi.org/10.5281/zenodo.7086798})}. Additional details are provided in the appendix.

	\bibliographystyle{unsrt}
	\bibliography{vbk-arxiv}

	\appendix

\section{Main Theoretical Results for Hypergraph Affinities}
In this section we provide full details for our theoretical results regarding hypergraph affinity scores. We begin by reviewing and covering additional necessary terminology and notation in Section~\ref{sec:notation}. We then show how to interpret affinity scores as maximum likelihood estimates for a certain affinity parameter of a binomial model for degree data (Section~\ref{sec:mle}), cover additional background on baseline scores (Section~\ref{sec:baseline}), and then prove our main theoretical impossibility results for hypergraph homophily (Section~\ref{sec:impossibility}). In Section~\ref{sec:alternative}, we show how these results can be adapted to an alternative notion of affinity scores.

\subsection{Notation and Terminology}
\label{sec:notation}
Consider a hypergraph $H = (V,E)$ where $V$ is a set of $n = |V|$ nodes and $E$ is a set of $m = |E|$ hyperedges. We assume throughout that $H$ is $k$-uniform (where $k$ is constant) and non-degenerate, meaning that all hyperedges are of a fixed size $k$, and a node can appear at most once in a hyperedge. We also assume that nodes are organized into one of two classes $A \subseteq V$ and $B \subseteq V$ where $A \cup B = V$ and $A \cap B = \emptyset$. 

For class $X \in \{A,B\}$ and integer $t \in [k] = \{1, 2, \hdots k\}$, we say that a hyperedge $e \in E$ is of type-$(X,t)$ if exactly $t$ nodes in $e$ are from class $X$. Let $m_t(X)$ denote the number of type-$(X,t)$ hyperedges in $E$. Since there are exactly two node classes, $m_t(A) = m_{k-t}(B)$ and $m_t(B) = m_{k-t}(A)$. It is often convenient to refer to hyperedge types in an absolute sense, without specifying class. We say that a hyperedge is of absolute type-$t$ if exactly $t$ of its nodes are from class $A$, and denote the number of such edges by
\begin{equation}
	m_t = m_{t}(A) = \text{number of hyperedges of absolute type-$t$ in $E$.}
\end{equation}

The degree of a node $v \in V$, denoted $d(v)$, is the number of hyperedges it participates in. For an integer $t \in [k]$, let $d_t(v)$ denote the number of hyperedges $v$ participates in where exactly $t$ nodes are from $v$'s class, including $v$ itself. We refer to this as the type-$t$ degree of $v$. Summing typed-degrees produces the degree of a node:
\begin{equation*}
	d(v) = \sum_{t = 1}^k d_t(v).
\end{equation*}
The type-$t$ affinity score for a class $X \in \{A,B\}$ measures the propensity for nodes in this class to participate in type-$(X,t)$ hyperedges. This score can be expressed in terms of sums of node degrees:
\begin{equation}
	\label{eq:degaffinity}
	h_t(X) = \frac{\sum_{v \in X} d_t(v)}{\sum_{v \in X} d(v)}.
\end{equation}
This directly generalizes the homophily index of a graph~\cite{altenburger2018monophily}, which is defined similarly in terms of typed degrees, and corresponds to the case where $k = t = 2$. 

Affinity scores can also be expressed in terms of hyperedge counts. The sum of type-$t$ degrees for a class $X$ satisfies
\begin{equation*}
	\sum_{v \in X} d_t(v) = t m_{t}(X)
\end{equation*}
The value $m_t(X)$ is scaled by a factor $t$ to account for the fact that each type-$(X,t)$ hyperedge affects the degree of $t$ different nodes from class $X$. We can express type-$t$ affinity scores for both classes in terms of absolute hyperedge types as follows:
\begin{align}
	\label{eq:edgeaffinities-app}
	\h_t(A) &= \frac{tm_t}{\sum_{i = 1}^k i m_i}\\
	\h_t(B) &= \frac{tm_{k-t}}{\sum_{i = 1}^k i m_{k-i}}.
\end{align}

\subsection{Cardinality-Based Hypergraph Stochastic Block Model}
\label{sec:cardhsbm}
In order to provide a statistical interpretation for hypergraph affinity scores, we define a simple new generative model for hypergraphs. For this model, consider a set of nodes $V$ separated into two classes $A$ and $B$. We say a tuple of $k$ distinct nodes in $V$ is a type-$t$ $k$-tuple if exactly $t$ of nodes in the tuple are from class $A$. For each $t \in \{0,1, \hdots , k\}$, define a probability $p_t \in [0,1]$. We emphasize the fact that these probabilities are defined with respect to a fixed class $A$; since it is possible to have hyperedges where all nodes are from class $B$, this also includes a probability $p_0$. We define the \emph{cardinality-based hypergraph stochastic block model} (cardinality-based HSBM) as follows: for each type-$t$ $k$-tuple of nodes $T = (v_1, v_2, \hdots , v_k)$, we generate a hyperedge on $T$ with probability $p_t$. We denote the distribution of cardinality-based hypergraph stochastic block models with size $k$ hyperedges by $\mathcal{H}(n,k,A,B,\textbf{p})$ where $\textbf{p} = [p_0, p_1, \hdots , p_k]$ is a vector of hyperedge probabilities. This a special case of the general $k$-uniform hypergraph stochastic block model~\cite{ghoshdastidar2014consistency}, which may involve more than two ground truth clusters or classes.

\subsection{Affinity Scores as Maximum Likelihood Estimates}
\label{sec:mle}
We now show how affinity scores for class $A$ can be derived as maximum likelihood estimates for an affinity parameter of a certain binomial distribution. The same approach could also be applied to class $B$. 

\subsubsection{Type-degree Random Variables}
Assume we are given a $k$-uniform hypergraph from the cardinality-based HSBM $\mathcal{H}(n,k,A,B,\textbf{p})$, where the probability vector $\textbf{p} = [p_0, p_1, \hdots, p_k]$ is given up front and fixed. For a node $a \in A$, let $T_j$ be the total number of type-$j$ $k$-tuples of nodes that $a$ belongs to:
\begin{equation}
	\label{tat}
	T_j = {|A| -1 \choose j-1}{|B| \choose k-j}.
\end{equation}
This value is the same regardless of which $a \in A$ we consider, and represents the maximum  number of type-$j$ hyperedges that $a$ could belong to in an $n$-node hypergraph with node classes $A$ and $B$. The type-$t$ degree of each node $a \in A$ conditioned on probability $p_t$ will be a binomial random variable
\begin{equation}
	\label{Dt}
	D_t(a) \sim \text{Binom}\left(T_t, p_{t}\right).
\end{equation}
We also define a random variable $D(a)$ for the total degree of $a \in A$ by
\begin{equation}
	\label{D}
	D(a)  = \sum_{j = 1}^k D_j(a)\,.
\end{equation}
Finally, define another random variable for measuring the contribution to $D(a)$ made by  hyperedges that are \emph{not} of type-$t$:
\begin{equation}
	\label{Dnt}
	\tilde{D}_t(a)  = \sum_{j = 1, j \neq t}^k D_j(a)\,.
\end{equation}
For any fixed $t$, $D_t(a) + \tilde{D}_t(a) = D(a)$. The degree random variables $D(a)$ will not be independent for different $a \in V$, since the degrees must define a valid degree sequence for a $k$-uniform hypergraph. However, we can prove that they will be approximately independent by adapting existing techniques for graphs~\cite{van2016random}.
\begin{silemma}
	\label{lem:deg}
	Let $H \sim \mathcal{H}(n,k,A,B,\textbf{p})$ be a cardinality-based HSBM with hyperedge parameters satisfying $\hat{p} = \max_{i} p_i = O\left(\frac{1}{n^{k-1}}\right)$. If $\ell > k$ is a fixed constant, the degree random variables for any set of $\ell$ nodes, $D(1), D(2), \hdots , D(\ell)$, are asymptotically independent.
\end{silemma}
\begin{proof}
	Let $\mathcal{K}$ denote the set of $k$-tuples of nodes in $H$, and let $L$ denote the set of $\ell$ nodes we are considering, which we denote by $\{1, 2, \hdots, \ell\}$ without loss of generality. Each $e \in \mathcal{K}$ is associated with a Bernoulli random variable $X_e$ 
	such that $X_e = 1$ if an edge is placed at $k$-tuple $e$. Each random variable $D(i)$ for $i \in [\ell]$ is a sum of Bernoulli random variables:
	\begin{equation*}
		D(i) = \sum_{e \colon i \in e} X_e.
	\end{equation*}
	For $i, j \in [\ell]$, $D(i)$ and $D(j)$ are not independent, as $X_e$ appears in both of their sums whenever $i \in e$ and $j \in e$. If $e$ is a $k$-tuple of nodes from $L$, then $X_e$ contributes to the sum of all of the random variables $(D_i)_{i \in L}$. In general for an arbitrary $e \in \mathcal{K}$, the variable $X_e$ shows up in $|e \cap L|$ of these degree variables. Note that for each $s \in \{ 2, 3, \hdots k\}$, there are ${\ell \choose s} {n - \ell \choose k - s}$ distinct $k$-tuples involving $s$ nodes from $L$ and $k-s$ nodes from $V-L$.
	
	
	In order to prove that random variables $(D_i)_{i \in L}$ are approximately independent, for each $i \in L$ we construct a new random variable $\hat{D}(i)$ in such a way that $D(i)$ and $\hat{D}(i)$ have the same distribution, and so that the $\hat{D}(i)$ variables are mutually independent. In order to establish asymptotic independence of the original $D(i)$ variables, we then prove that
	\begin{equation*}
		\pr\left[ (D(i))_{i \in L} \neq (\hat{D}(i))_{i \in L} \right] = o(1).
	\end{equation*}
	In order to accomplish this, for each $X_e$ that shows up in more than one variable from $(D_i)_{i \in L}$, we construct $|e \cap L|-1$ other \emph{independent} copies of this random variable $X_e$, denoted by $X^{(2)}_e, X^{(3)}_e, \hdots, X^{(s)}_e$ where $s = |e \cap L|$. Define $X^{(1)}_e = X_e$ for notational convenience. We then define a new variable $\hat{D}(i)$ for each $i \in L$, which is the same as $D(i)$, except we carefully replace the $X_e$ variables with the independent copies of $X_e$, in order to ensure the $\hat{D}(i)$ variables are independent. We begin by defining
	\begin{equation*}
		\hat{D}(1) = \sum_{e \colon 1 \in e}  X_e^{(1)} = {D}(1).
	\end{equation*}
	Then, to define $\hat{D}(j)$ for $j > 1$, we start with the same sum of random variables that defines $D(j)$, but then we replace each $X_e$ in this sum by the copy $X_{e}^{(r+1)}$, where $r$ is the number of times that $X_e$ appeared in a sum $D(i)$ for $i < j$. Thus, if $X_e$ shows up $s$ times in the summations defining $(D_i)_{i \in L}$, we have constructed $s$ independent copies of $X_e$ and used these in defining $(\hat{D}_i)_{i \in L}$. As a result, $\hat{D}(i)$ and $D(i)$ have the same distribution, but the variables $(\hat{D}_i)_{i \in L}$ are independent. 
	
	What remains is to prove that the probability that $(\hat{D}_i)_{i \in L}$ and $({D}_i)_{i \in L}$ are not equal goes to zero. These random variables will be the same if for every $k$-tuple $e$ with $s = |e \cap L | \geq 2$, the variables $X_e^{(1)}, \hdots , X_e^{(s)}$ are all the same. Each of these is a Bernoulli random variable with some probability
	$p_t \leq \hat{p} = O\left(  \frac{1}{n^{k-1}}\right)$. The probability that $X_e^{(1)}, \hdots , X_e^{(s)}$ all coincide is the probability that they all equal one or all equal zero, so:
	\begin{equation}
		\label{coincide}
		\pr(X_e^{(1)}, \hdots , X_e^{(s)} \text{ do not coincide}) = 1 - p_t^s - (1-p_t)^s \leq 1 - (1-\hat{p})^s.
	\end{equation}
	For $s \in \{2, 3, \hdots, k\}$, let $\mathcal{K}_s$ denote the set of $k$-tuples in with exactly $s$ nodes from $L$. We can use Boole's inequality to bound the probability that $(\hat{D}_i)_{i \in L}$ and $({D}_i)_{i \in L}$ are not equal:
	\begin{align*}
		\pr\left[ (\hat{D}_i)_{i \in L}\neq ({D}_i)_{i \in L} \right] &\leq \sum_{s = 2}^k \sum_{e \in \mathcal{K}_s}   \pr(X_e^{(1)}, \hdots , X_e^{(s)} \text{ do not coincide}) \\
		&\leq \sum_{s = 2}^k \sum_{e \in \mathcal{K}_s}  1 - (1-\hat{p})^s \\
		&\leq \sum_{s = 2}^k {\ell \choose s} {n - \ell \choose k - s} \left( 1 - (1-\hat{p})^s\right).
	\end{align*}
	
	Finally, if $\hat{p} = O(n^{-\alpha})$ for some integer $\alpha$, we know that $\hat{p} \leq \frac{C}{n^{\alpha}}$ for some constant $C$, so we have the following asymptotic result:
	\begin{align*}
		{n - \ell \choose k - s }(1 - (1- \hat{p})^s) &\leq {n - \ell \choose k - s }\left(1 - \left(1- \frac{C}{n^{\alpha}}\right)^s\right) \\
		&= {n - \ell \choose k - s } \left( \frac{n^{\alpha s} - (n^{\alpha} - C)^s}{n^{\alpha s}}\right) \\
		&= O \left( \frac{n^{k-s}\cdot n^{\alpha(s-1)}}{n^{\alpha s}} \right) \\
		&= O \left( \frac{n^{k-s}}{n^{\alpha}} \right) .
	\end{align*} 
	where we have used the fact that $(n^{\alpha} - C)^s = n^{\alpha s} - C n^{\alpha(s-1)} + f(n)$ where $f(n)$ represents lower order terms in $n$. Thus, as long as $\alpha \geq k-1 > k - s$, we see that this entire expression goes to zero for every $s \in \{2,3,\hdots , k\}$, and so we have our desired asymptotic result:
	\begin{equation*}
		\lim_{n \rightarrow \infty} \pr\left[ (\hat{D}_i)_{i \in L}\neq ({D}_i)_{i \in L} \right] \leq \lim_{n \rightarrow \infty} \sum_{s = 2}^k   {\ell \choose s} {n - \ell \choose k - s} \left( 1 - (1-\hat{p})^s\right) = 0.
	\end{equation*}
	

\end{proof}

\subsubsection{Conditional Distribution of Type-$t$ Degrees}
Assume we observe a two-class $k$-uniform hypergraph for which the degree of node $a \in A$ is given by $d(a)$. Given our fixed hyperedge-probabilities $p_{j}$ for each edge type, we can prove that for a fixed $t$, the conditional random variable $(D_t(a) \;\lvert\; D_a = d(a))$ is asymptotically binomially distributed 
\begin{equation}
	\label{datbin}
	D_t(a) \;\lvert\;d(a) \sim \text{Binom}\left(d(a), f_t\right),
\end{equation} 
where $f_t$ is the affinity parameter:
\begin{equation}
	f_t = \frac{ p_t \cdot  T_t}{ \displaystyle{\sum_{j =1}^k} p_j \cdot T_j}\,.
\end{equation}

This holds specifically for the parameter regime where $T_j$ is large but $p_j \cdot T_j$ is constant for all $j \in \{1, 1, \hdots k\}$, and each $d(a)$ is a constant. Recall that $D_j(a)$ is binomially distributed for all $j \in \{1, \hdots, k\}$. Under the assumed conditions on $p_j$ and $T_j$, these binomial distributions are asymptotically equivalent to Poisson distributions:
\begin{equation}
	\label{approxpoisson}
	\lim_{n\rightarrow \infty} \text{Pr}( D_j(a) = c) =\frac{(p_j \cdot T_j)^c}{c!} e^{-p_j \cdot T_j}.
\end{equation}
A sum of binomials is also asymptotically equivalent to a Poisson distribution with the same mean~\cite{serfling1978some}, so we also have that
\begin{equation}
	\label{approxpoisson2}
	\lim_{n\rightarrow \infty} \text{Pr} \left(\sum_{j = 1}^k D_j(a) = c\right) = \frac{(\sum_{j = 1}^k p_j \cdot T_j)^c}{c!} e^{-(\sum_{j = 1}^k p_j \cdot T_j)}.
\end{equation}
Similarly,
\begin{equation}
	\label{approxpoisson3}
	\lim_{n\rightarrow \infty} \text{Pr} \left(\sum_{j = 1}^k \tilde{D}_j(a) =d(a)- c\right) = \frac{(\sum_{j\neq t} p_j \cdot T_j)^{d(a)-c}}{(d(a)-c)!} e^{-\sum_{j\neq t} p_j \cdot T_j} 
\end{equation}
By our assumption that $p_jT_j = O(1)$, the right hand sides of~\eqref{approxpoisson},~\eqref{approxpoisson2}, and~\eqref{approxpoisson3} are all constants. Therefore, asymptotically the distribution of $(D_t(a) \lvert D_a = d(a))$ is a binomial with parameter $f_t$, since
\begin{align*}
	\pr(D_t(a) = c\lvert D_a = d(a)) &= \pr\left(D_t(a) = c \;\Bigg\lvert\; \sum_{j = 1}^k D_j(a) = d(a)\right)\\
	&= \frac{\pr(D_t(a) = c, \tilde{D}_t(a) = d(a)- c)}{ \pr(\sum_{j = 1}^k D_j(a) = d(a))} \\
	&= \frac{\pr(D_t(a) = c) \cdot \pr(\tilde{D}_t(a) = d(a)- c)}{ \pr(\sum_{j = 1}^k D_j(a) = d(a))} 
\end{align*}
and therefore
\begin{align*}
	\lim_{n\rightarrow \infty} \pr(D_t(a) = c\;\lvert\; D_a = d(a)) &=  \frac {  \left[\frac{(p_t \cdot T_t)^c}{c!} e^{-p_t \cdot T_t} \right]  
		\left[\frac{(\sum_{j\neq t} p_j \cdot T_j)^{d(a)-c}}{(d(a)-c)!} e^{-\sum_{j\neq t} p_j \cdot T_j} \right]   } 
	{ \left[\frac{(\sum_{j} p_j \cdot T_j)^{d(a)}}{(d(a))!} e^{-\sum_{j} p_j \cdot T_j} \right]  }\\
	&= {d(a) \choose c} \left( \frac{p_t \cdot T_t}{\sum_{j} p_j \cdot T_j} \right)^c \left(1 - \frac{p_t \cdot T_t}{\sum_{j} p_j \cdot T_j} \right)^{d(a)-c}. \\
\end{align*}

\subsubsection{Affinity Index as Maximum Likelihood Estimate}
Given observed degree data $(d(a), d_t(a) |  a \in A)$ for a two-class, $k$-uniform hypergraph, we can model the type-$t$ degree for an arbitrary node in $A$ using the binomial distribution given in~\eqref{datbin}. For this data, the type-$t$ affinity score will equal the maximum likelihood estimate for the affinity parameter $f_t$. To show this result, define $\tilde{d}_t(a) = d(a) - d_t(a)$, i.e., the part of the degree that does not come from type-$t$ hyperedges. For simplicity and without loss of generality, denote the nodes in $A$ by the indices $1, 2, \hdots, |A|$. Treating degrees as independent, the likelihood function for observing the given degrees for a given parameter $f_t$ is
\begin{align*}
	L(f_t) &= \pr(D_t(1)= d_t(1), D_t(2)= d_t(2), \hdots , D_t(|A|)= d_t(|A|)) \\
	&= \prod_{a \in A} {d(a) \choose d_t(a)} \cdot {f_t}^{d_t(a)} \cdot (1-f_t)^{\tilde{d}_t(a)}.
\end{align*}
The log-likelihood function is
\begin{align*}
	\ell(f_t) = \log L(f_t) &= \sum_{a \in A} \log {d(a) \choose d_t(a)} + \log ( {f_t}^{d_t(a)}) + \log ((1-f_t)^{\tilde{d}_t(a)}) \\
	&\propto \sum_{a \in A}  d_t(a) \log (f_t) + \tilde{d}_t(a) \log (1-f_t) .
\end{align*}
Taking the derivative with respect to the parameter $f_t$ and setting it equal to zero, we find that the log-likelihood is maximized then $f_t$ is exactly equal to the type-$t$ affinity index.
\begin{align*}
	\frac{\partial \ell}{\partial f_t} = 0 \Rightarrow \frac{\sum_{a \in A} d_t(a)}{f_t} = \frac{\sum_{a \in A} \tilde{d}_t(a)}{1-f_t} \Rightarrow f_t = \frac{\sum_{a \in A} d_t(a)}{\sum_{a\in A} d(a)}.
\end{align*}

\subsection{Baseline Scores and Proofs of Propositions~\ref{prop:base1} and~\ref{prop:base2}}
\label{sec:baseline}
In order to determine the meaningfulness of a type-$t$ affinity score, we compare it against a baseline score representing a null probability for participation in type-$t$ hyperedges. For class $X$, the standard type-$t$ baseline score $\bs_t(X)$ measures the probability that a node $v \in X$ forms a type-$(X,t)$ hyperedge (i.e., there are exactly $t$ nodes from $v$'s class in the hyperedge) if it selects $k - 1$ other nodes from $V$ uniformly at random. Formally, 
\begin{align}
	\label{base1-app}
	\bs_t(X) = \frac{{|X|-1 \choose t-1} {n-|X| \choose k-t}}{{n-1\choose k-1}}.
\end{align}
Comparing the type-t affinity $\h_t(X)$ against $\bs_t(X)$ generalizes a standard approach for checking for homophily in graphs. When $k = t = 2$, the hypergraph affinity score $\h_t(X)$ equals the homophily index of a graph, and the type-2 baseline score is
\begin{equation}
	\bs_{2}(X) = \frac{|X| - 1}{n - 1}.
\end{equation}
As $n \rightarrow \infty$, this converges to the class proportion $|X|/n$, which is typically used as the standard baseline for the homophily index of a graph. In addition to generalizing the baseline for a graph homophily index, our hypergraph baseline scores satisfy the following intuitive interpretation, as given in the main text.
\begin{reproposition}
	\label{lem:base-app}
	Let $H_{k,n}^* = (V,E)$ be the complete $k$-uniform hypergraph on $n$ nodes.
	The type-$t$ affinity score for class $X  \subseteq V$ equal the type-$t$ baseline score in~\eqref{base1-app}.
\end{reproposition}
\begin{proof}
	The number of type-$(X,t)$ hyperedges in $H$ is $m_t(X) = {|X| \choose t} {n-|X| \choose k-t}$, the total number of ways to choose $t$ nodes from $X$ and $k-t$ nodes that are not in $X$. The type-$t$ affinity score for class $X$ is therefore
	\begin{align*}
		\label{2equiv1}
		\h_t(X) &= \frac{t m_t(X)}{\sum_{i = 1}^k i m_i(X)} = \frac{t {|X| \choose t} {n-|X|\choose k-t}}{ \sum_{i = 1}^k i {|X| \choose i} {n-|X| \choose k-i} } \\ \\
		&= \frac{|X| {|X|-1 \choose t-1} {n-|X| \choose k-t}}{|X| \sum_{i = 1}^k  {|X|-1 \choose i-1} {n-|X|\choose k-i} } = \frac{{|X|-1 \choose t-1} {n-|X| \choose k-t}}{ \sum_{i = 1}^k   {|X|-1 \choose i-1} {n-|X| \choose k-i} },
	\end{align*}
	where we have used the fact that ${|X| \choose i} = \frac{|X|}{i} {|X| - 1 \choose i-1}$. This is the same as $\bs_t(X)$ --- the numerator is identical, and the denominator is an alternative way to list all the possible ways to select $k-1$ nodes from a set of $n-1$ nodes, by separately counting how many of each type of hyperedge could be formed.
\end{proof}
To provide further intuition for the baseline scores, we observe that the type-$t$ baseline score is also related the type-$t$ hypergraph affinity for a hypergraph obtained by generating hyperedges at random without regard to node class. Consider a cardinality-based HSBM where for some $p \in (0,1)$, $p_i = p$ for all $i \in \{0,1, 2, \hdots , k\}$. If $M_t$ is a random variable representing the number of type-$t$ hyperedges, then 
\begin{equation*}
	\mathbb{E}[M_t] = p \cdot {|A| \choose t} {n-|A| \choose k-t}
\end{equation*}
If we replace $m_t$ with this expected value $\mathbb{E}[M_t]$ in the definition for $\h_t(A)$ given in equation~\eqref{eq:edgeaffinities-app}, then we exactly recover the type-$t$ baseline score for class $A$:
\begin{align}
	\label{eq:emt}
	\frac{t\mathbb{E}[M_t] }{\sum_{i = 1}^k i \cdot \mathbb{E}[M_i]} &= \frac{p \cdot t \cdot {|A| \choose t} {n-|A| \choose k-t} }{p \cdot \sum_{i = 1}^k i \cdot {|A| \choose i} {n-|A| \choose k-i}} = \frac{|A| {|A|-1 \choose t-1} {n-|A| \choose k-t}}{|A| \sum_{i = 1}^k  {|A|-1 \choose i-1} {n-|A|\choose k-i} } = \bs_t(A).
\end{align}
An analogous result also holds for baselines scores of class $B$. 

Proposition~\ref{prop:base2} from the main text is an even stronger result, showing that when hyperedges are generated at random without regard for node labels, the ratio scores of the resulting hypergraph converge to one.
\begin{reproposition}
	Fix any $p \in (0,1)$ and a positive integer $k$, and let $H = (V,E)$ be a random hypergraph on $n$ nodes that is formed by turning each $k$-tuple of nodes in $V$ into a hyperedge with probability $p$. As $n \rightarrow \infty$, the ratio scores for a class $X \subseteq V$ with $|X| = \Theta(n)$ converge in probability to 1.
\end{reproposition}
\begin{proof}
	Our goal is to show that for every $\varepsilon > 0$ and  $\delta > 0$, there exists some $n_0 \in \mathbb{N}$ such that for all $n > n_0$ and for all $t \in [k]$, with probability at least $1-\delta$, 
	\begin{equation*}
		\left \lvert  \frac{\h_t(X)}{\bs_t(X)} - 1 \right \rvert < \varepsilon.
	\end{equation*}

	For $i \in [k]$, let $N_i$ be the number of $k$-tuples with exactly $i$ nodes from class $X$, and $M_i$ be the expected number of hyperedges of type-$(X,i)$ (exactly $i$ nodes from class $X$) in $H$. The random variable $M_i$ is binomially distributed, $M_i \sim \text{Bin}(p, N_i)$, and has the following expected value and variance:
	\begin{align*}
		m_i &= \mathbb{E}[M_i] = p N_i \\
		\sigma_i^2 &= N_i p(1-p).
	\end{align*}
	As long as $M_i > 0$ for some $i \in [k]$, the type-$t$ affinity score for $H$ is well defined and equals
	\begin{equation}
		\label{eq:htx}
		\h_t(X) = \frac{t   M_t }{\sum_{i = 1}^k i   M_i}.
	\end{equation}
	From~\eqref{eq:emt} we know that replacing $M_i$ with $m_i$ in~\eqref{eq:htx} yields the type-$t$ baseline score for $H$:
	\begin{equation*}
		\bs_t(X) = \frac{t   m_t }{\sum_{i = 1}^k i   m_i}.
	\end{equation*}
	To prove that $\h_t(X)/\bs_t(X)$ converges to one, we first prove several facts about the limiting behavior of $M_i/m_i$.
	By Chebyshev's inequality, we know that for any $i \in [k]$ and any $\varepsilon > 0$,
	\begin{equation}
		\label{cheby}
		\Pr \left( \left \lvert  \frac{M_i}{m_i} - 1 \right \rvert \geq \varepsilon\right ) \leq \frac{1}{N_i p(1-p) \varepsilon^2}.
	\end{equation}
	Since $N_i = O(n^k)$ for every $i \in [k]$, this establishes that as long as $n$ is large enough, $M_i/m_i$ can be made arbitrarily close to one with high probability. With this in mind, fix $\varepsilon > 0$ and $\delta > 0$, and choose $n_0$ so that for all $n > n_0$, with probability at least $1-\delta$
	\begin{align}
		\label{e1}
		\left \lvert \frac{M_i}{m_i} - 1\right \vert < \hat{\varepsilon},
	\end{align}
	for all $i \in [k]$,  where $\hat{\varepsilon} < \min \left\{\frac{1}{2}, \frac{\varepsilon}{4}\right \}$. This implies that $M_i > 0$ and $(1-\hat{\varepsilon}) < \frac{M_i}{m_i} < (1+\hat{\varepsilon})$, and therefore we also know that
	\begin{equation}
		\label{otheratio}
		\left \lvert \frac{m_i}{M_i} - 1 \right \rvert = \frac{m_i}{M_i} \left \lvert  \frac{M_i}{m_i} - 1 \right \rvert < \frac{\hat{\varepsilon}}{1 - \hat{\varepsilon}}.
	\end{equation}
	For our final step of the proof we will make use of the following useful inequality, that holds for two sets of positive numbers $\{a_1, a_2, \hdots , a_\ell\}$ and $\{b_1, b_2, \hdots , b_\ell\}$ and an arbitrary integer $\ell$:
	\begin{equation}
		\label{useful}
		\frac{ \sum_{i=1}^\ell a_i}{ \sum_{i=1}^\ell b_i} \leq \max_{i \in [\ell]} \frac{a_i}{b_i}.
	\end{equation}
	Using~\eqref{e1}, ~\eqref{otheratio}, and~\eqref{useful}, we know that with probability at least $1-\delta$,
	\begin{align*}
		\left \lvert \frac{\h_t(A)}{\bs_t(A)} - 1 \right \rvert &= \left \lvert \frac{t   M_t }{\sum_{i = 1}^k i   M_i} \frac{\sum_{i = 1}^k i   m_i}{t  m_t } - 1\right \rvert
		=\left \lvert \frac{(M_t \sum_{i = 1}^k i  m_i)  - (m_t \sum_{i = 1}^k i   M_i )}{m_t \sum_{i = 1}^k i  M_i}  \right \rvert\\
		&=\left \lvert \frac{(M_t \sum_{i = 1}^k i m_i) - (m_t \sum_{i = 1}^k i m_i)  + (m_t\sum_{i = 1}^k i m_i) - (m_t \sum_{i = 1}^k i   M_i) }{m_t \sum_{i = 1}^k i  M_i}  \right \rvert\\
		& \leq \left \lvert \frac{(M_t \sum_{i = 1}^k i m_i) - (m_t \sum_{i = 1}^k i m_i)}{m_t \sum_{i = 1}^k i  M_i}  \right \rvert +\left \lvert \frac{(m_t\sum_{i = 1}^k i m_i) - (m_t \sum_{i = 1}^k i   M_i) }{m_t \sum_{i = 1}^k i  M_i}  \right \rvert\\ 
		& \leq \frac{ |M_t - m_t|}{m_t} \frac{\sum_{i = 1}^k i m_i}{\sum_{i = 1}^k i M_i} + \frac{ \sum_{i = 1}^k i | m_i - M_i|}{\sum_{i = 1}^k iM_i }\\
		& \leq \left \lvert \frac{M_t}{m_t} - 1 \right \rvert \max_i \frac{m_i}{M_i} + \max_i \frac{|m_i - M_i|}{M_i}  \hspace{1cm} (\text{applying~\eqref{useful}}) \\
		&= \left \lvert \frac{M_t}{m_t} - 1 \right \rvert \frac{m_j}{M_j} + \left \lvert \frac{m_\ell}{M_\ell} -1 \right \rvert \hspace{1cm} (\text{for some integers $j$ and $\ell$}) \\
		& < \frac{\hat{\varepsilon}}{1-\hat{\varepsilon}} + \frac{\hat{\varepsilon}}{1-\hat{\varepsilon}} < 4 \hat{\varepsilon} < \varepsilon.
	\end{align*}
	
\end{proof}

\section{Hypergraph Homophily Impossibility Results}
\label{sec:impossibility}
We restate our main results for hypergraph homophily, exactly as given in the main text.
\begin{retheorem}
	Let $H$ be a two-class, $k$-uniform hypergraph and $\{\vg_i(X)\colon i \in [k], X \in \{A,B\} \}$ be realizable baseline scores. For odd $k$, it is impossible for both classes to exhibit monotonic homophily.
\end{retheorem}
\begin{retheorem}
	Let $H$ be a two-class, $k$-uniform hypergraph and $\{\vg_i(X)\colon i \in [k], X \in \{A,B\} \}$ be realizable baseline scores. If $k$ is even, then it is imposssible for both classes satisfy monotonic homophily if additionally $\frac{\h_\ell(X)}{\vg_\ell(X)} > \frac{\h_{\ell-1}(X)}{\vg_{\ell-1}(X)} $ for one class $X \in \{A,B\}$, where $\ell = k/2$.
\end{retheorem}
\begin{retheorem}
	Let $H = (V,E)$ be a two-class $k$-uniform hypergraph and $\{\bs_i(X)\colon  i\in [k], X \in \{A,B\} \}$ be realizable baseline scores.
	\begin{itemize}[itemsep = 0pt]
		\item If $k$ is odd, it is impossible for both classes to simultaneously exhibit majority homophily
		\item If $k$ is even, it is impossible for both classes to exhibit majority homophily if additionally $\h_{k/2}(X) > \bs_{k/2}(X)$ for one of the classes $X \in \{A,B\}$.
	\end{itemize} 
\end{retheorem}

In the main text we include a full proof of Theorem~\ref{thm:oddkmonotonic}. Theorem~\ref{thm:evenkmonotonic} is nearly identical and relies on simply adding one extra constraint and repeating the same basic steps. Here we provide full details for our majority homophily result for odd $k$, and show how they can be altered to yield the result for even $k$. For clarity and ease of presentation, we include many of the same steps as given in the main text, in some cases with expanded explanations. 

Throughout the section, $H = (V,E)$ denotes a hypergraph with two node classes $\{A,B\}$ and hyperedges of a fixed size $k$. For $t \in \{0,1, 2, \hdots, k\}$, $m_t$ represents the number of hyperedges in $H$ where exactly $t$ out of the $k$ nodes in the hyperedge come from class $A$. As before, $\h_t(A)$ and $\h_t(B)$ denote the type-$t$ affinity scores for classes $A$ and $B$ respectively, and can both be expressed in terms of absolute hyperedge counts:
\begin{align}
	\label{eq:eaffA}
	\h_t(A) &= \frac{tm_t}{\sum_{i = 1}^k i \cdot m_i}\\
	\label{eq:eaffB}
	\h_t(B) &= \frac{tm_{k-t}}{\sum_{i = 1}^k i \cdot m_{k-i}}.
\end{align}
Our results apply to a generalized notion of baseline scores.
\begin{definition}
	We will refer to the set of baseline scores $\{\vg_t(X) \colon t \in [k], X \in \{A,B\}\}$ as \emph{realizable} or \emph{generalized} baseline scores if they satisfy the following two assumptions:
	\begin{itemize}
		\item For $t \in \{1, 2, \hdots ,k\}$, $\vg_t(A) > 0 $ and $\vg_t(B) > 0$.
		\item There exists some two-class, $k$-uniform hypergraph $G$ such that for each $t \in \{1, 2, \hdots ,k\}$, $\vg_t(A)$ and $\vg_t(B)$ are the type-$t$ affinity scores for classes $A$ and $B$ in $G$.
	\end{itemize}
\end{definition} 
As long as $\min \{|A|,|B|\} \geq k$, the standard baseline scores satisfy the above definition, by Proposition~\ref{prop:base1}.
We now recall the definition of majority homophily presented in the main text. 
\begin{definition}
	Class $X \in \{A,B\}$ exhibits majority homophily if for all $t > k-t$, $\h_t(X) > \vg_t(X)$.
\end{definition}

\subsection{Impossibility Result for Odd $k$}
\label{sec:majorityodd}
Recall from the main text that when $k$ is odd, requiring both classes to exhibit majority homophily induces a constraint on each hyperedge count $m_t$ for $t \in \{0,1,2, \hdots k\}$. This is due to the fact that type-$(A,t)$ hyperedges ($t$ nodes from class $A$) are also type-$(B,k-t)$ hyperedges ($k-t$ nodes from class $B$). Applying a few steps of algebra, we can show that $\h_t(A) > \vg_t(A)$ for $t > k-t$ implies that
\begin{align}
	\label{eq:mt-app}
	m_t > \frac{\vg_t(A)}{t \cdot (1-\vg_t(A))} \sum_{i \neq t} i m_i \hspace{1cm} \text{ for $t = k, k-1, k - 2, \hdots, (k+1)/2$} 
\end{align}
while $\h_t(B) > \vg_t(B)$  for $t > k-t$ implies that 
\begin{align}
	m_{k-t} &> \frac{\vg_t(B)}{t \cdot (1-\vg_t(B))} \sum_{i \neq t} i m_{k-i} \hspace{1cm}\text{ for $t = k, k-1, k - 2, \hdots, (k+1)/2$} \\
	\label{eq:ms-app}
	\implies m_s &> \frac{\vg_{k-s}(B)}{(k-s) \cdot (1-\vg_{k-s}(B))} \sum_{i \neq k-s} i m_{k-i} \hspace{1cm}\text{ for $s = 0, 1, 2, \hdots, (k-1)/2$}.
\end{align}
If there existed a type $j$ such that $m_j$ were not bounded below, we could set $m_j = 0$ and instead make all other hyperedge counts higher than would be expected at random, and in doing so make most hyperedge types overexpressed relative to the baseline. We will show that this is not possible when $m_t$ is lower bounded for all $t \in \{0,1,2, \hdots, k\}$ as shown above. However, it is not immediately clear why lower bounds for each hyperedge type cannot all be satisfied simultaneously, especially given that half of the constraints depend on baseline scores for class $A$, while the other bounds depend on baseline scores for class $B$, which might be very different. 

Proposition~\ref{prop:needall} in the main text highlights one key challenge in proving that majority homophily cannot be satisfied by two classes at once. 
\begin{reproposition}
	\label{prop:needallsi}
	If any inequality from~\eqref{eq:mt-app} and~\eqref{eq:ms-app} is discarded, it is possible to construct a two-class $k$-uniform hypergraph satisfying the remaining inequalities.
\end{reproposition}
To prove this result, it will be convenient to consider the linear program (LP) that we will use to prove our impossibility results for majority homophily. A proof of Proposition~\ref{prop:needallsi} will follow by considering what happens in the case of the LP obtained by removing one constraint.

\paragraph{Linear program for measuring homophily}
The following linear program encodes the maximum amount of homophily that can be satisfied by classes $A$ and $B$ simultaneously.
\begin{equation}
	\label{lpmain-app}
	\begin{array}{lll}
		\text{maximize} & \gamma \\
		\text{subject to}&t\cdot x_t - \vg_t(A) \cdot \sum_{i = 1}^k i\cdot x_i \geq \gamma &\text{  for $t \in [k], t > k-t$} \\
		&t\cdot  x_{k-t} - \vg_t(B) \cdot \sum_{i = 1}^k i\cdot x_{k-i} \geq \gamma  &\text{  for $t \in [k], t > k-t$} \\
		& \sum_{i = 0}^k x_i = 1 &\\
		& x_i \geq 0 &\text{ for all $i \in \{0\} \cup [k]$}
	\end{array}
\end{equation}
Recall from the main text that there is a variable $x_t \geq 0$ for each type of hyperedge in some $k$-uniform hypergraph. More specifically, the constraint $\sum_{i = 1}^k x_i = 1$ encodes the fact that $x_t$ represents the proportion of hyperedges that are of type-$t$ (i.e., $t$ out of $k$ nodes are from class $A$). The constraint
\begin{equation*}
	t\cdot x_t - \vg_t(A) \cdot \sum_{i = 1}^k i \cdot x_i \geq \gamma
\end{equation*}
can be rearranged into the inequality:
\begin{equation*}
	\frac{t \cdot x_t}{\sum_{i = 1}^k i \cdot x_i} \geq \vg_t(A) + \frac{\gamma} {\sum_{i = 1}^k i \cdot x_i}.
\end{equation*}
This constrains the hypergraph type-$t$ affinity score for class $A$ to be larger than its baseline score by at least an additive term $\gamma/ \sum_{i = 1}^k i \cdot x_i$, which will be positive if and only if $\gamma$ is positive. The second set of constraints encodes similar bounds for the affinity scores of class $B$. A feasible solution with $\gamma = 0$ can always be achieved if the $x_i$ variables represent hyperedge counts for a hypergraph whose affinity scores are equal to the generalized baseline scores. The following lemma shows that if the constraints are satisfied for some $\gamma > 0$, this means there exists a two-class $k$-uniform hypergraph where both classes exhibit majority homophily.
\begin{relemma}
	\label{iff-app}
	Let $\gamma^*$ be the optimal solution to the linear program in~\eqref{lpmain-app}. There exists a two-class $k$-uniform hypergraph $\mathcal{H}$ with both classes exhibiting majority homophily if and only if $\gamma^* > 0$. 
\end{relemma}
\begin{proof}
	Given a hypergraph where both classes satisfy majority homophily, let $x_i = m_i/M$, where $M$ is the total number of hyperedges and $m_i$ is the number of type-$i$ hyperedges. The type-$t$ affinity for class $A$ is given by
	\begin{equation}
		\frac{t \cdot m_t}{\sum_{i = 1}^k i \cdot m_i}=	\frac{t \cdot x_t}{\sum_{i = 1}^k i \cdot x_i}.
	\end{equation}
	A similar expression in terms of $x_i$ variables can be shown for affinity scores for class $B$.
	Choose the maximum value of $\gamma$ so that all constraints are still satisfied. Since both classes are assumed to satisfy majority homophily, this $\gamma$ will be strictly greater than zero. 
	
	If on the other hand we assume that $\gamma^* > 0$, the variable $x_i$ will represent the proportion of type-$i$ hyperedges in some two-class hypergraph where both classes exhibit majority homophily. All coefficients in the LP are rational, so its solution will be rational as well. We can therefore scale the $x_i$ variables by a common denominator $C$ so that the value $M_i = Cx_i$ is an integer. To construct the appropriate hypergraph, generate $M_i$ hyperedges of type-$i$ for each $i \in \{0,1,2,\hdots , k\}$. This can always be done by generating hyperedges that are completely disjoint. If one desires a specific balance between the number of nodes in classes $A$ and $B$, isolated nodes from either class can be added. The resulting hypergraph provides the desired example. 
\end{proof}

\paragraph{Proof of Proposition~\ref{prop:needallsi}}
Observe that Proposition~\ref{prop:needallsi} is equivalent to stating that if we alter the above LP by removing any one of the constraints of the form
\begin{align*}
	tx_t - \vg_t(A)  \sum_{i = 1}^k i x_i \geq \gamma, 
\end{align*}
or any constraint of the form
\begin{align*}
	t x_{k-t} - \vg_t(B) \sum_{i = 1}^k i x_{k-i} \geq \gamma ,
\end{align*}
then the optimal solution to the resulting LP will be strictly greater than zero. We now show why this is true.


Let $r = (k+1)/2$. If $\gamma = 0$, these constraints are all satisfied tightly by setting $\tilde{x}_i = m_i/M$, where $m_i$ is the number of type-$i$ hyperedges and $M$ is the total number of hyperedges, in the hypergraph whose affinity scores equal the given generalized baseline scores. Define 
\begin{align*}
	D_A &= \sum_{i = 1}^k i \ncdot \tilde{x}_i\\
	D_B &= \sum_{i = 1}^k i \tilde{x}_{k-i} = \sum_{i = 0}^{k-1} (k-i) \ncdot \tilde{x}_i. 
\end{align*}
We then have
\begin{align}
	\label{gA}
	t\tilde{x}_t &= \vg_t(A) \ncdot D_A \;\;\;\;\text{ for $t \in \{k, k-1, \hdots , r\}$} \\
	\label{gB}
	(k-t)\tilde{x}_t &= \vg_{k-t}(B) \ncdot D_B \;\;\text{ for $t \in \{r-1, r-2, \hdots , 2, 1\}$}.
\end{align} 
Satisfying the LP constraints for some $\gamma > 0$ is equivalent to satisfying the following set of strict inequalities, one for each of the $x_i$ variables:
\begin{align*}
	k \ncdot x_k &> \vg_k(A) \ncdot \sum_{i = 1}^k i\ncdot x_i \\
	(k-1) \ncdot x_{k-1}  &> \vg_{k-1}(A) \ncdot \sum_{i = 1}^k i\ncdot x_i \\
	& \vdots\\
	r \ncdot x_r &>\vg_{r}(A) \ncdot \sum_{i = 1}^k i\ncdot x_i \\
	r \ncdot x_{r-1} &> \vg_{r}(B) \ncdot \sum_{i = 0}^{k-1} (k-i) \ncdot x_i \\
	\vdots & \\
	(k-1)\ncdot  x_1 &> \vg_{k-1}(B) \ncdot \sum_{i = 0}^{k-1} (k-i) \ncdot x_i \\
	k\ncdot x_0 &> \vg_{k}(B) \ncdot \sum_{i = 0}^{k-1} (k-i) \ncdot x_i .
\end{align*}
If we remove the constraint associated with variable $x_t$ for some $t \in \{1, 2, \hdots, k-1\}$, we can satisfy the remaining inequalities strictly by setting $x_t = 0$ and keeping all other variables the same: $x_i = \tilde{x}_i = m_i/M$ for $i \neq t$. In this case, the right hand side of the equalities in~\eqref{gA} and~\eqref{gB} will strictly decrease, but the left hand side will not change. This set of variables must afterwards be normalized to sum to one, to ensure feasibility for the LP, but this does not change the fact that all inequalities (except the one we discarded) are satisfied strictly.

If we remove the constraint associated with $x_0$ or $x_k$, the proof is similar, but is slightly more involved since the first $r$ inequalities do not involve $x_0$ and the second set of $r$ inequalities do not involve $x_k$. We will prove the result when discarding the inequality that lower bounds $x_0$. By symmetry, the result holds in the same way if we removed the inequality for $x_k$. 

After removing the inequality for $x_0$, consider $\varepsilon > 0$ and the following set of new variables:
\begin{align*}
	x_t = 
	\begin{cases}
		\tilde{x}_t + \frac{\varepsilon}{t}& \text{ if $t \in \{ k, k-1, \hdots , r\}$} \\
		\tilde{x}_t - \frac{\varepsilon}{t} &\text{ if $t \in \{r-1, r-2, \hdots, 2, 1\}$} \\
		0 &\text{if $t = 0$.}
	\end{cases}
\end{align*}
For this new set of variables, we have
\begin{align*}
	\sum_{i = 1}^k i \ncdot x_i = \sum_{i = 1}^k i \ncdot \tilde{x}_i + \sum_{i = r}^k i \ncdot \frac{\varepsilon}{i} - \sum_{i = 1}^{r-1} i \ncdot \frac{\varepsilon}{i} = D_A + \varepsilon.
\end{align*}
Also,
\begin{align*}
	\sum_{i = 0}^{k-1} (k-i) \ncdot x_i = \sum_{i = 1}^{k-1} (k-i) \ncdot \tilde{x}_i + \sum_{i = r}^{k-1} (k-i)\ncdot \frac{\varepsilon}{i} - \sum_{i = 1}^{r-1} (k-i) \ncdot \frac{\varepsilon}{i} - k\ncdot \tilde{x}_0 = D_B - c \varepsilon - k\ncdot \tilde{x}_0,
\end{align*}
where $c =  \sum_{i = 1}^{r-1} \frac{k-i}{i} -\sum_{i = r}^{k-1} \frac{k-i}{i} $ is a positive constant. Observe that the first set of $r$ constraints, which are associated with class $A$ and variables $x_k$ to $x_r$, are satisfied strictly with the new set of variables. For  $t \in \{k, k-1, \hdots , r\}$, we have:
\begin{align*}
	t\ncdot x_t = t \ncdot \tilde{x}_t + \varepsilon > t \ncdot \tilde{x}_t + \vg_t(A) \ncdot \varepsilon = \vg_t(A)\ncdot  D_A + \vg_t(A) \ncdot \varepsilon = \vg_t(A) \ncdot \sum_{i = 1}^k i \ncdot x_i,
\end{align*}
where we have used the fact that $\vg_t(A) < 1$.
Finally, we must simply choose any $\varepsilon > 0$ small enough that the set of inequalities for the variables $x_{r-1}, x_{r-2}, \hdots, x_2, x_1$ are also satisfied strictly, which is still possible since we set $x_0 = 0$. In particular, for $t \in \{r-1, r-2, \hdots , 2, 1\}$, we must satisfy
\begin{align*}
	(k-t) \ncdot x_t > \vg_{k-t}(B)\sum_{i = 0}^{k-1} (k-i) \ncdot x_i.
\end{align*}
Substituting in the definition of $\tilde{x}_i$, this is equivalent to
\begin{align*}
	(k-t) \ncdot \tilde{x}_t - \frac{k-t}{t} \varepsilon > \vg_{k-t}(B) (D_B - c \varepsilon - k\ncdot \tilde{x}_0).
\end{align*}
Using a few steps of algebra, we can see that this is true as long as 
\begin{align*}
	k \ncdot \tilde{x}_0 \ncdot \vg_{k-t}(B) > \varepsilon \left( \frac{k-t}{t} - \vg_{k-t}(B)c \right).
\end{align*}
The left side of the above expression is always positive. If the right side is negative for the given set of generalized baseline scores, the inequality is trivial. If the right side is positive, there exists some $\varepsilon > 0$ that is small enough to ensure the inequality holds strictly. Therefore, if we remove the LP constraint that lower bounds $x_0$, there exists a set of variables whose objective score is strictly positive. This in turn implies that we can remove one inequality from~\eqref{eq:eaffA} and~\eqref{eq:eaffB} and find a hypergraph that satisfies all of the others. This therefore proves Proposition~\ref{prop:needallsi}.

\paragraph{Primal-Dual LP Formulation.}
We now turn our attention back to the original linear program in~\eqref{lpmain-app} that includes all constraints. In order to prove results about optimal solutions this LP, we first re-write it in a general form using matrix notation and compute its dual. Let $\vx = \begin{bmatrix} x_0 & x_1 & \cdots & x_k \end{bmatrix}$ and $\ve$ be the all ones vector, so the constraint $\sum_{i = 0}^k x_i = 1$ is encoded by
\begin{align*}
	\begin{bmatrix}
		\textbf{e}^T & 0
	\end{bmatrix}
	\begin{bmatrix}
		\textbf{x} \\ \gamma
	\end{bmatrix}
	= 	1.
\end{align*}
For odd $k$, let $r = (k+1)/2$. Later we will show how to make adjustments to the LP when proving impossibility results for even $k$. We construct a matrix $\mB$ so that the set of constraints
\begin{align}
	\label{constraintsb}
	t\cdot  x_{k-t} - \vg_t(B) \cdot \sum_{i = 1}^k i\cdot x_{k-i} &\geq \gamma  \text{   for $t = k, k-1, k-2, \hdots, r$} \\
	\label{constraintsa}
	t\cdot x_t - \vg_t(A) \cdot \sum_{i = 1}^k i\cdot x_i &\geq \gamma \text{  for $t = r, r+1, \hdots, k$} 
\end{align}
is encoded by
\begin{align*}
	\begin{bmatrix}
		-\textbf{B} &\textbf{e} \\
	\end{bmatrix}
	\begin{bmatrix}
		\textbf{x} \\ \gamma
	\end{bmatrix}
	\leq 0.
\end{align*}
In order to write the constraints in this way, we carefully order the constraints in~\eqref{constraintsb} and~\eqref{constraintsa} based on our ordering of $x_i$ variables in $\vx$. The first $r$ rows of $\mB$ correspond to constraints in~\eqref{constraintsb}, starting with $t = k$ and decreasing $t$ until $t = r$. The second set of $r$ rows in $\mB$ corresponds to constraints~\eqref{constraintsa}, starting with $t = r$ and increasing until $t = k$. 

Applying standard techniques for computing the dual of a linear program, the LP from~\eqref{lpmain-app} and its dual linear program are then given by

\begin{equation*}
	\begin{minipage}{.4\textwidth}
		\centering
		\textbf{Primal Linear Program}
		\begin{equation}
			\label{primal}
			\begin{array}{ll}
				\max \,\, & \gamma \\
				\text{s.t. }  	&\begin{bmatrix}
					-\textbf{B} &\textbf{e} \\
				\end{bmatrix}
				\begin{bmatrix}
					\textbf{x} \\ \gamma
				\end{bmatrix}
				\leq 0\\
				&\begin{bmatrix}
					\textbf{e}^T & 0
				\end{bmatrix}
				\begin{bmatrix}
					\textbf{x} \\ \gamma
				\end{bmatrix}
				= 	1 \\
				&\textbf{x} \geq 0, \gamma \geq 0
			\end{array}
		\end{equation}
	\end{minipage}
	\begin{minipage}{.6\textwidth}
		\centering
		\textbf{Dual Linear Program}
		\begin{equation}
			\label{dual}
			\begin{array}{ll}
				\min \,\, &\alpha \\
				\text{s.t. }  	&\begin{bmatrix}
					-\textbf{B}^T &\textbf{e} \\
					\textbf{e} & 0
				\end{bmatrix}
				\begin{bmatrix}
					\textbf{y} \\ \alpha
				\end{bmatrix}
				\geq 
				\begin{bmatrix}
					0 \\ 1
				\end{bmatrix}\\
				& \textbf{y} \geq 0\\
				& \alpha \text{ unrestricted}.
			\end{array}
		\end{equation}
	\end{minipage}
\end{equation*}

When considering optimal variables for the dual LP, it will be convenient to work with a decomposed form of the matrix $\mB$. As an example, when $k = 3$, the matrix $\mB$ is given by
\begin{equation}
	\label{B3}
	\mB = 	\begin{bmatrix}
		3(1-\vg_3(B)) & -2 \vg_3(B) & -\vg_3(B) & 0  \\	
		-3\vg_2(B)& 2(1-\vg_2(B)) & -\vg_2(B) & 0   \\	
		0 & -\vg_2(A) & 2(1-\vg_2(A)) & -3 \vg_2(A) \\	
		0 & -\vg_3(A) & -2\vg_3(A) & 3(1-\vg_3(A))   \\
	\end{bmatrix}
\end{equation}
This matrix can be decomposed as follows:
\begin{equation}
	\label{B3decomposed}
	\mB = \begin{bmatrix}
		3 & &&&  \\	
		&2 & &&  \\	
		& & 2& \\	
		& & & 3  
	\end{bmatrix} - \begin{bmatrix}
		\vg_3(B) & &&&  \\	
		&\vg_2(B)  & &&  \\	
		& & \vg_2(A) & \\	
		& & & \vg_3(A) 
	\end{bmatrix} 
	\begin{bmatrix}
		3 & 2 & 1& 0  \\	
		3 & 2 & 1& 0  \\	
		0 & 1 & 2& 3\\	
		0 & 1 & 2& 3
	\end{bmatrix}.
\end{equation}
In general for odd $k$, we can decompose the matrix $\mB$ in the following way:
\begin{equation}
	\label{Bdecomp}
	\mB = \textbf{D}_k - \textbf{D}_\vg \textbf{R}, \\
\end{equation}
where $\textbf{D}_k$ is a diagonal matrix with diagonal entries $[k, k-1, \cdots, r, r+1, r+1, r, \cdots, k-1, k]$, and $\textbf{D}_\vg$ is a diagonal matrix with diagonal entries $[\vg_k(B), \vg_{k-1}(B), \cdots \vg_{r}(B), \vg_r(A), \cdots , \vg_{k-1}(A), \vg_k(A)]$. The first $r$ rows of matrix $\textbf{R}$ are $\begin{bmatrix} k & k-1 & \cdots &1 & 0\end{bmatrix}$, and the next $r$ rows are $\begin{bmatrix} 0 & 1 & \cdots & k-1 & k\end{bmatrix}$:
\begin{align*}
	\textbf{D}_k &=  
	\begin{bmatrix}
		k & &&& & \\	
		& \ddots &&&& \\
		& & r & & & \\
		& & & r & & \\
		& & & & \ddots & \\
		& & & & & k
	\end{bmatrix}\\
	\textbf{D}_\vg &=
	\begin{bmatrix}
		\vg_k(B) & &&& & \\	
		& \ddots &&&& \\
		& & \vg_r(B)  & & & \\
		& & & \vg_r(A)  & & \\
		& & & & \ddots & \\
		& & & & & \vg_k(A) 
	\end{bmatrix} \\
	\textbf{R} &=
	\begin{bmatrix}
		k & \cdots & 1& 0  \\	
		\vdots & \vdots & \vdots & \vdots \\
		k & \cdots & 1& 0  \\	
		0 & 1 &\cdots & k \\
		\vdots & \vdots & \vdots & \vdots \\
		0 & 1 &\cdots & k
	\end{bmatrix}.
\end{align*}

We can quickly obtain a solution with objective score of $\alpha = 0$ for the primal linear program. Recall that by assumption, the baseline scores correspond to affinity scores for some $k$-uniform hypergraph $G$. Let $m_t$ be the number of hypergraphs of type $t$ in $G$, and $M$ be the total number of hyperedges. Then define a set of primal solutions $\vx$ by setting $x_t = m_t/M$ for $t \in \{0, 1, 2, \hdots k\}$, and set $\gamma = 0$. The fact that the affinity score in this hypergraph equals the baseline score means that this set of primal variables is feasible for the primal LP. The following lemma, which proves our majority homophily result for odd $k$ in Theorem~\ref{thm:majorityimpossible}, shows that these are in fact optimal primal solutions. 

\begin{relemma}
	\label{dualvariables-app}
	For an odd integer $k$ and $r = (k+1)/2$, define $\delta = 2k\sum_{t = r}^k \frac{1}{t}$, and consider the following set of dual variables:
	\begin{align}
		\alpha &= 0 \\
		\label{ybk}
		y_{B,k} &= \frac{2}{\delta} \cdot \frac{\sum_{i = r}^k (\frac{k}{i} - 1)\vg_i(B)} {1 - \sum_{i = r}^k (2 - \frac{k}{i}) \vg_i(B)} \\
		\label{ybt}
		y_{B,t} &= \frac{2}{\delta}\left( \frac{k}{t} - 1\right) + \left(2 - \frac{k}{t}\right) y_{B,k} \,\,\text{ for $t \in \{r, \cdots, k-1\}$} \\
		\label{yat}
		y_{A,t} &= \frac{2k}{\delta t} - y_{B,t} \,\,\text{ for $t \in \{r, \cdots, k\}$.}
	\end{align}
	If $Y = \sum_{t = r}^k y_{A,t} + y_{B,t}$, then the set of normalized dual variables defined by $\tilde{y}_{X,t} = y_{X,t}/Y$ for $ X\in \{A,B\}$ and $t \in \{r, \hdots , k\}$ is feasible for the dual LP for majority homophily.
\end{relemma}

\begin{proof}
	When considering variables for the dual LP~\eqref{dual}, first recall that the matrix $\mB$ can be decomposed into the form $\mB = \textbf{D}_k - \textbf{D}_\vg \textbf{R}$, where $\textbf{D}_\vg$ is a diagonal matrix with diagonal entries 
	\[[\vg_k(B), \vg_{k-1}(B), \cdots, \vg_{r}(B), \vg_r(A), \cdots , \vg_{k-1}(A), \vg_k(A)].\]
	In this way, each row and column of $\mB$ can be mapped to a pair $(X,t)$ where $t \in \{r, r+1, \hdots , k\}$ represents a hyperedge type and $X \in \{A,B\}$ is a class. Therefore, each dual variable is also associated with an $(X,t)$ pair, which is why we doubly-index dual variables in $\vy$ as follows:
	\begin{equation*}
		\vy^T  = \begin{bmatrix} y_{B,k} & y_{B,k-1} & \hdots & y_{B,r} & y_{A,r} & \hdots & y_{A,k-1} & y_{A,k} \end{bmatrix}.
	\end{equation*}
	
	In the remainder of the proof, we will show that the unnormalized dual variables given in the lemma statement are nonnegative and satisfy $\mB^T \vy = 0$. In their current form, these variables do not satisfy $\ve^T \vy = 1$, but this can easily be fixed by dividing the entries of $\vy$ by their sum to produce a vector $\hat{\vy}$ whose entries sum to 1. At this point, the vector $\hat{\vy}$ along with $\alpha = 0$ provides a feasible solution with objective score of zero for the dual LP, which will conclude the proof. Thus, in the remainder of the proof we prove that the variables in~\eqref{ybk},~\eqref{ybt}, and~\eqref{yat} are nonnegative and satisfy $\mB^T \vy = 0$.
	
	\emph{Nonnegativity of dual variables.}
	The nonnegativity of dual variables follows from the nonnegativity of baseline scores. Note in particular that 
	\begin{align*}
		\sum_{i = r}^k (2- k/i) \vg_i(B) \leq \sum_{i = r}^k \vg_i(B)  < 1
	\end{align*}
	which shows that the denominator of $y_{B,k}$ is positive. The numerator is also positive by inspection. Since $y_{B,k} > 0$, we can see that all three terms in $y_{B,t}$ are positive. Finally,
	\begin{align*}
		y_{A,t} &= \frac{2}{\delta} - \left(2 - \frac{k}{t}\right) y_{B,k} \geq \frac{2}{\delta} - y_{B,k}\\
		&= \frac{2}{\delta}\left(1 -  \frac{\sum_{i = r}^k (\frac{k}{i} - 1)\vg_i(B) } {1 - \sum_{i = r}^k (2 - \frac{k}{i}) \vg_i(B) }\right) \\
		&= \frac{2}{\delta}\left(\frac{ 1 - \sum_{i = r}^k (2 - \frac{k}{i})\vg_i(B)  - \sum_{i = r}^k (\frac{k}{i} - 1)\vg_i(B) } {1 - \sum_{i = r}^k (2 - \frac{k}{i}) \vg_i(B) }\right) \\
		&= \frac{2}{\delta}\left(\frac{ 1 - \sum_{i = r}^k \vg_i(B) } {1 - \sum_{i = r}^k (2 - \frac{k}{i}) \vg_i(B) }\right).
	\end{align*}
	As before, the numerator and denominator are both positive, so $y_{A,t} > 0$.
	
	\textit{Proving $\mB^T \vy = 0$.}
	Given the decomposition $\mB = \textbf{D}_k - \textbf{D}_\vg \textbf{R}$, we can see that proving $\mB^T \vy = 0 $ is equivalent to showing 
	\begin{equation}
		\label{dkequation}
		\textbf{D}_k\vy = \textbf{R}^T \textbf{D}_\vg \vy.
	\end{equation}
	If we doubly index entries of $\textbf{D}_k \vy$ using the same indexing as the $\vy$ entries, we see that
	\begin{equation*}
		[\textbf{D}_k \vy]_{B,t} = t y_{B,t}.
	\end{equation*}
	Meanwhile, the right hand side of~\eqref{dkequation} is
	\begin{equation*}
		\textbf{R}^T \textbf{D}_\vg \vy 
		=\begin{bmatrix}
			k & k&\cdots& k & 0 &\cdots & 0  \\	
			k-1 & k-1 &\cdots& k-1 & 1 &\cdots & 1  \\	
			\vdots & \vdots & \ddots & \vdots & \vdots & \ddots & \vdots \\
			1 & 1&\cdots& 1 & k-1&\cdots & k-1  \\	
			0 & 0&\cdots& 0 & k&\cdots & k 
		\end{bmatrix} \begin{bmatrix}
			y_{B,k} \vg_k(B)\\
			y_{B,k-1} \vg_{k-1}(B)\\
			\vdots \\
			y_{B,r} \vg_r(B) \\
			y_{A,r} \vg_r(A) \\
			\vdots \\
			y_{A,k} \vg_k(A)
		\end{bmatrix}
	\end{equation*}
	After canceling $k$ from both sides, the first row of the matrix equation~\eqref{dkequation} is:
	\begin{align*}
		& y_{B,k}  = \sum_{i = r}^k y_{B,i} \vg_i(B) = \sum_{i = r}^k \left[\frac{2}{\delta} \left( \frac{k}{i} - 1\right) + \left(2 - \frac{k}{i}\right) y_{B,k}\right] \vg_i(B) \\
		\iff & y_{B,k} \left(1 - \sum_{i = r}^k (2-k/i) \vg_i(B)  \right) = \frac{2}{\delta} \sum_{i = r}^k (k/i - 1)\vg_i(B)  \\
		\iff & y_{B,k} = \frac{2}{\delta} \cdot \frac{\sum_{i = r}^k (\frac{k}{i} - 1)\vg_i(B) } {1 - \sum_{i = r}^k (2 - \frac{k}{i}) \vg_i(B) }.
	\end{align*}
	So this holds by the definition of $y_{B,k}$ in~\eqref{ybk}. Next, we show $[\textbf{D}_k \vy]_{B,t} = [\textbf{R}^T \textbf{D}_b \vy]_{B,t}$ for $t \in \{r, \hdots, k-1\}$. Let $y = y_{B,k} = \sum_{i = r}^k \vg_i(B) y_{B,i}$. Each such equation has the form
	\begin{align*}
		&\frac{2t}{\delta}(\frac{k}{t} - 1) + (2 - \frac{k}{t})y t = t\sum_{i = r}^k \vg_i(B) y_{B,i} + (k-t)\sum_{i = r}^k \vg_i(A)  y_{A,i}\\
		\iff &\frac{2}{\delta}(k - t) + (2t - k)y = ty + (k-t)\sum_{i = r}^k \vg_i(A)  y_{A,i} \\
		\iff & \frac{2}{\delta} - y = \sum_{i = r}^k \vg_i(A)  \left(\frac{2}{\delta} - (2- \frac{k}{i})y \right) \\
		\iff & \frac{2}{\delta}\left(1 - \sum_{i = r}^k \vg_i(A)  \right) = y \left( 1 - \sum_{i = r}^k \vg_i(A)  (2 - \frac{k}{i}) \right)\\
		\iff & y = \frac{2}{\delta} \frac{1 - \sum_{i = r}^k \vg_i(A) }{1 - \sum_{i = r}^k \vg_i(A)  (2 - \frac{k}{i}) }.
	\end{align*}
	Therefore, the equivalence between the first $r$ entries in the equation~\eqref{dkequation} will hold as long as we can prove that the following are both equivalent ways of writing $y = y_{B,k}$:
	\begin{equation}
		\label{twoy}
		y =y_{B,k} = \frac{2}{\delta} \frac{1 - \sum_{i = r}^k \vg_i(A) }{1 - \sum_{i = r}^k \vg_i(A)  (2 - \frac{k}{i}) } = \frac{2}{\delta} \frac{\sum_{i = r}^k (\frac{k}{i} - 1)\vg_i(B)} {1 - \sum_{i = r}^k (2 - \frac{k}{i}) \vg_i(B)} .
	\end{equation}
	We can prove this by using the fact that there is a hypergraph $G$ whose affinity scores equal the baseline scores. More specifically, for $i \in [k]$,
	\begin{align*}
		\vg_i(A) &= \frac{im_i}{D_A} \\
		\vg_i(B) &= \frac{im_{k-i}}{D_B},
	\end{align*}
	where $D_A = \sum_{i = 1}^k i m_i$ and $D_B = \sum_{i = 1}^k i m_{k-i}$.
	Re-writing the numerator and denominator of the first ratio in~\eqref{twoy} (and after scaling each expression by $\delta/2$) we get
	\begin{equation*}
		\frac{1 - \sum_{i = r}^k \vg_i(A)}{1 - \sum_{i = r}^k \vg_i(A) (2 - \frac{k}{i}) } = \frac{\frac{1}{D_A} \sum_{i = 1}^{r-1} i m_i}{1 - \frac{1}{D_A} \sum_{i = r}^k m_i (2i - k)} = \frac{\sum_{i = 1}^{r-1} i m_i}{D_A - \sum_{i=r}^k m_i (2i - k)}.
	\end{equation*}
	Similarly, we re-write the second ratio:
	\begin{equation*}
		\frac{\sum_{i = r}^k (\frac{k}{i} - 1)\vg_i(B)} {1 - \sum_{i = r}^k (2 - \frac{k}{i}) \vg_i(B)} = \frac{\frac{1}{D_B} \sum_{i = r}^k (k - i)m_{k-i}}{1 - \frac{1}{D_B} \sum_{i = r}^k (2i - k) m_{k-i} } = \frac{\sum_{i = r}^k (k-i) m_{k-i}} { D_B - \sum_{i = r}^k (2i - k) m_{k-i}}.
	\end{equation*}
	Written this way, we see that the numerators are the same since
	\begin{equation}
		\sum_{i = r}^k (k-i) m_{k-i}
		=\sum_{t = 1}^{r - 1} t m_t.
	\end{equation}
	It remains to show that the denominators are equal, which we prove by considering a sequence of equivalent statements:
	\begin{align*}
		D_B - \sum_{i = r}^k (2i - k) m_{k-i} &= D_A - \sum_{i=r}^k m_i (2i - k) \\
		\iff  \sum_{i = 1}^k i m_{k-i} - \sum_{i = r}^k (2i - k) m_{k-i} &= \sum_{i = 1}^k i m_i - \sum_{i=r}^k m_i (2i - k)  \\
		\iff \sum_{i = 0}^k i m_{k-i} - \sum_{i = 0}^k i m_i &= \sum_{i = r}^k (2i - k) m_{k-i} - \sum_{i=r}^k m_i (2i - k)  \\
		\iff \sum_{i = 0}^k [(k-i) - i ] m_i &= -\sum_{j = 0}^{r-1} (2j - k) m_{j} - \sum_{i=r}^k m_i (2i - k)\\
		\iff \sum_{i = 0}^k (k - 2i)m_i &= \sum_{i = 0}^k (k - 2i) m_i.
	\end{align*}
	At this point our proof has shown that the first $r$ rows (the rows corresponding to class B), of the matrix equation $\mD_k \vy = \mR^T \mD_{\vg} \vy$ hold. We use a similar approach to show the remaining $r$ rows also hold. First note that for $t \in \{r, r+1, \hdots, k\}$,
	\[
	[\mD_k \vy]_{A,t} = t \cdot (y_{A,t}) = t \cdot \left( \frac{2}{\delta} - \left(2 - \frac{k}{t}\right) y_{B,k}  \right).
	\]
	The last entry of the equation is $[\mD_k \vy]_{A,k} = [\mR^T \mD_{\vg} \vy]_{A,k}$, which holds by the following sequence of equivalent statements:
	\begin{align*}
		& ( 2/\delta - y_{B,k}) = \sum_{i = r}^k y_{A,i} \vg_i(A) \\
		\iff  &( 2/\delta - y_{B,k}) = \sum_{i = r}^k \vg_i(A) \left( \frac{2}{\delta} - \left(2 - \frac{k}{i}\right) y_{B,k}  \right) \\
		\iff & \frac{2}{\delta}\left(1 - \sum_{i = r}^k \vg_i(A)\right) = y_{B,k} \left( 1 - \sum_{i = r}^k (2- \frac{k}{i}) \vg_i(A) \right).
	\end{align*}
	The last equation holds by the equivalent ways of writing $y_{B,k}$ shown in~\eqref{twoy}.
	
	Finally, we confirm that $[\mD_k \vy]_{A,t} = [\mR^T \mD_{\vg} \vy]_{A,t}$ holds for $t \in \{r, \hdots, k-1\}$. Let $y = y_{B,k}$ and recall from the last step that $2/\delta - y = \sum_{i = r}^k y_{A,i} \vg_i(A) $. Each equation corresponding to one of the last $r$ rows has the form
	\begin{align*}
		&t \cdot \left( \frac{2}{\delta} - \left(2 - \frac{k}{t}\right) y \right) = t\sum_{i = r}^k \vg_i(A)y_{A,i} + (k-t)\sum_{i = r}^k \vg_i(B) y_{B,i}\\
		\iff &\left( \frac{2t}{\delta} - \left(2t - k \right) y \right)
		= \frac{2t}{\delta} - yt + (k-t)\sum_{i = r}^k \vg_i(B) y_{B,i} \\
		\iff & -(t - k) y = (k - t)\sum_{i = r}^k \vg_i(B)y_{B,i} \\
		\iff & y = \sum_{i = r}^k \vg_i(B)y_{B,i},
	\end{align*}
	which again was shown in previous steps. At this point we have shown that all entries in the equation $\mD_k \vy= \mR^T \mD_{\vg} \vy$  hold. Therefore, the dual variables are nonnegative and satisfy $\mB^T \vy = 0$, concluding the proof.
\end{proof}

\subsection{Proof for even $k$}
When $k$ is even, the definition of majority homophily does not place any restriction on type-$(k/2)$ hyperedges for either class, since neither class is strictly in the majority for these hyperedges. If the number of type-$(k/2)$ hyperedges is small enough, it is possible for both classes to exhibit majority homophily. As one example, starting with a complete hypergraph and deleting all hyperedges of type-$(k/2)$ will produce a hypergraph where both classes satisfy majority homophily. However, we can still prove an analogous impossibility result for even $k$ if we add one extra constraint. We restate and prove our result for even $k$.
\paragraph{Theorem~\ref{thm:majorityimpossible}, even $k$.} \emph{When $k$ is even, it is impossible for both classes $A$ and $B$ to exhibit majority homophily if additionally $\h_\ell(A) > \vg_{\ell}(A)$ or $\h_\ell(B) > \vg_{\ell}(B)$ for $\ell = k/2$.}

\begin{proof}
	The proof follows the same steps as the proof for odd $k$, with minor alterations to the linear program and the optimal dual variables. We highlight key changes that must be made and for brevity skip steps that are nearly identical to the proof of the previous result.
	
	Let $\ell = k/2$. Without loss of generality we prove the result is impossible if we restrict $\h_\ell(A) > \vg_{\ell}(A)$. By symmetry, the same impossibility result holds if we added the new constraint for class $B$ instead. We begin by altering the LP from~\eqref{lpmain-app} to include an additional constraint:
	\begin{equation}
		\label{extra}
		\ell\cdot x_\ell - \vg_t(A) \cdot \sum_{i = 1}^k i \cdot x_i \geq \gamma.
	\end{equation}
	For this new linear program, we can again confirm that the optimal score $\gamma^*$ will be greater than zero if and only if it is possible for both classes to exhibit majority homophily \emph{and} for constraint~\eqref{extra} to hold.
	We then can again re-write the LP and its dual in the form shown in~\eqref{primal} and~\eqref{dual}, by extending the matrix $\mB$ to include one extra row to account for the new constraint. By our assumptions about baseline scores, we know there exists a hypergraph $G$, with $M$ total hyperedges and $m_i$ hyperedges of type-$i$, such that the affinity scores of $G$ equal the baseline scores in question. A primal feasible solution with an objective score of zero can then be realized by setting $x_i = m_i /M$ and $\gamma = 0$.
	
	Next, let $r = k/2 +1$
	and construct the following set of dual variables:
	\begin{align*}
		\label{dualeven}
		y_{B,k} &= \frac{2}{\delta} \cdot \frac{\sum_{i = r}^k (\frac{k}{i} - 1)\vg_i(B) } {1 - \sum_{i = r}^k (2 - \frac{k}{i}) \vg_i(B) } \\
		y_{B,t} &= \frac{2}{\delta}\left( \frac{k}{t} - 1\right) + \left(2 - \frac{k}{t}\right) y_{B,k} \,\,\text{ for $t \in \{r, \cdots, k-1\}$} \\
		y_{A,t} &= \frac{2k}{\delta t} - y_{B,t} \,\,\text{ for $t \in \{r, \cdots, k\} $} \\
		y_{A,\ell} & = 2/\delta,
	\end{align*}
	where $\delta = 2+ 2k \sum_{t = r}^k \frac{1}{t}$. 
	The result again relies on the fact that $y_{B,k}$ can be written in two ways, using the fact that baseline scores correspond to affinity scores for some hypergraph $G$:
	\begin{equation*}
		y_{B,k} = \frac{2}{\delta}\frac{1 - \sum_{i = \ell}^k \vg_i(A)}{1 - \sum_{i = \ell }^k \vg_i(A)  (2 - \frac{k}{i}) } = \frac{2}{\delta}\frac{\sum_{i = r}^k (\frac{k}{i} - 1)\vg_i(B) } {1 - \sum_{i = r}^k (2 - \frac{k}{i}) \vg_i(B) }.
	\end{equation*}
	Using the same basic set of steps used in Lemma~\ref{dualvariables-app}, we can show that $\mB^T \vy = 0$ and $\vy \geq 0$. Scaling the variables to sum to one produces a dual feasible solution with an objective score of zero, which proves the result. 
\end{proof}

\subsection{Impossibility Results for Normalized Bias Scores}
{ One alternative approach to measuring an affinity score's deviation from baseline is to consider the normalized bias score introduced in the main text. For a class $X$, the type-$t$ normalized bias score is 
	\begin{equation}
		\label{eq:normbias-app}
		\textbf{f}_t(X) = \begin{cases}
			\frac{\h_t(X) - \bs_t(X)}{1 - \bs_t(X)} & \text{ if $\h_t(X) \geq \bs_t(X)$ } \\ \\
			\frac{\h_t(X) - \bs_t(X)}{\bs_t(X)} & \text{ if $\h_t(X) < \bs_t(X)$. } \\
		\end{cases}
	\end{equation}
	Our existing notion of strict majority homophily is equivalent to requiring $\textbf{f}_t(X) > 0$ whenever $t > k/2$. We see therefore that the same impossibility results and combinatorial limits apply to a natural notion of majority homophily for normalized bias scores.
	The natural way to define strict monotonic homophily for normalized bias scores is to require $\textbf{f}_t(X) > \textbf{f}_{t-1}(X)$ whenever $t > k/2$, which can be different from monotonicity of ratio scores. Nevertheless, in our empirical results we find that ratio scores and normalized bias scores often increase and decrease in similar patterns. Furthermore, we can prove the same type of impossibility results for normalized bias scores by adding one natural assumption regarding the balance in homophily levels exhibited by two node classes.
	\begin{theorem}
		\label{thm:normbias}
		Let $H = (V,E)$ be a two-class $k$-uniform hypergraph and $\{\bs_i(X)\colon  i\in [k], X \in \{A,B\} \}$ be realizable baseline scores. Let $k$ be odd and $r = (k+1)/2$. If $\textbf{f}_r(A)$ and $\textbf{f}_r(B)$ have the same sign, then it is impossible for both $A$ and $B$ to simultaneously exhibit strict monotonic homophily in terms of normalized bias scores.
	\end{theorem}
	\begin{proof}
		If we consider first of all the case where $\textbf{f}_r(A) > 0$ and $\textbf{f}_r(B) > 0$, assuming that normalized bias scores are strictly increasing for both classes implies that $\textbf{f}_t(X) > 0$ whenever $t > k/2$ for each $X \in \{A,B\}$. This would mean that both classes satisfy strict majority homophily, which is impossible. If on the other hand we have $\textbf{f}_r(A) \leq 0$ and $\textbf{f}_r(B) \leq 0$, then $\textbf{f}_r(X) > \textbf{f}_{r-1}(X)$ means that
		\begin{equation*}
			\frac{\h_r(X) - \bs_r(X)}{\bs_r(X)} > \frac{\h_{r-1}(X) - \bs_{r-1}(X)}{\bs_{r-1}(X)}  \implies \frac{\h_r(X)}{\bs_r(X)} > \frac{\h_{r-1}(X)}{\bs_{r-1}(X)}.
		\end{equation*}
		Assuming that this holds for both classes at once is again a contradiction, as shown in the proof of Theorem~\ref{thm:oddkmonotonic}.
	\end{proof}
	We can similarly prove impossibility results for even values of $k$ by adding an additional assumption as we did in Theorem~\ref{thm:evenkmonotonic}. The assumption that $\textbf{f}_r(A)$ and $\textbf{f}_r(B)$ share the same sign is in line with the goal of trying to understand whether two classes can satisfy the same homophily properties at the same time. We conjecture that Theorem~\ref{thm:normbias} holds without this assumption, though we leave a more in depth analysis for future work. In any case, this theorem confirms that even if there is some way for both classes to have strictly increasing normalized bias scores (which there may not be), there will still be a fundamental imbalance in their affinity scores and in the level of homophily they exhibit. This further confirms the message that natural notions of group homophily are governed by subtle combinatorial limits that must exist independent of human preferences and choices.}

\section{Derivation and Results for Alternative Affinity Scores}
\label{sec:alternative}
The hypergraph affinity scores we consider in the main text and the previous two sections of the appendix are based on ratios of typed degrees for nodes in a certain class. This directly generalizes the standard approach that has been used for measuring homophily in graphs~\cite{altenburger2018monophily}. Another natural approach is to consider affinity scores defined by ratios of hyperedge types. Formally, given the same $k$-uniform hypergraph $H = (V,E)$ where $m_t$ denotes the number of type-$t$ hyperedges, we can measure the following \emph{alternative} affinity scores:
\begin{align}
	\label{alta}
	\va_t(A)  &= \frac{m_t}{\sum_{i = 1}^k m_i} \\
	\label{altb}
	\va_t(B) &= \frac{m_{k-t}}{\sum_{i = 1}^k m_{k-i} }.
\end{align}
For each class $X \in \{A,B\}$, these ratios directly measure the proportion of hyperedges of type-$(X,t)$, among all hyperedges involving at least one class $X$ node. In this section we show that all of our main theoretical results also hold for these alternative scores. This first of all highlights that our main results on hypergraph homophily are broadly true for a wide range of notions of affinity scores. Furthermore, as we shall also see, our main impossibility results are in fact easier to show for these alternative scores, and our proof that these scores correspond to maximum likelihood estimates of a certain model parameter is more direct and does not require approximations. 

\subsection{Baseline Scores for Alternative Affinities}
Analogous to our approach for standard affinity scores, we define the baseline score for $\va_t(X)$ to be the probability that we obtain a type-$(X,t)$ hyperedge if we select a $k$-tuple uniformly at random from among all $k$-tuples involving at least one $X$ node:
\begin{align}
	\label{bltx}
	\bs_t(X)  = \frac{{|X| \choose t}{N-|X| \choose k-t}}{\sum_{i = 1}^k {|X| \choose i}{N-|X| \choose k-i}}.
\end{align}
The denominator counts all $k$-tuples involving at least one node from $X$.
For simplicity, we use the same notation as we did for standard affinity scores. 
\begin{proposition}
	Let $H_{k,n}^* = (V,E)$ be the complete $k$-uniform hypergraph on $n$ nodes with two node classes  $\{A,B\}$.
	 For $t \in [k]$, the type-$t$ alternative affinity scores for class $X \in \{A,B\}$ equals the type-$t$ alternative baseline score~\eqref{bltx}: $\va_t(X) = \bs_t(X)$.
\end{proposition}
The proof of this proposition is omitted as it follows the same steps as Proposition~\ref{lem:base-app}. We will prove homophily impossibility results for the following more general notion of baseline scores.

\begin{definition}
	Let $k$ be a fixed constant. The set of scores $\{ \vg_t(X) \colon X \in \{A,B\}, t \in [k]\}$ are generalized baseline scores for alternative affinity scores $\{\va_t(X) \colon X \in \{A,B\}, t \in [k]\}$ if the following two conditions hold:
	\begin{itemize}
		\item $\vg_t(X) > 0$ for $X \in \{A,B\}$ and $t \in [k]$.
		\item The scores $\{\vg_t(X) \}$ correspond to alternative affinity scores for some $k$-uniform hypergraph $G$ with two node classes.
	\end{itemize}
\end{definition}

\subsection{Interpreting Scores as Maximum Likelihood Estimates}
We can interpret the alternative affinity score $\va_t(A)$ as the maximum likelihood estimate for a certain affinity parameter of a binomial distribution for hyperedge data. An analogous interpretation also applies for class $B$. 
We begin by considering a slight variation of the cardinality-based HSBM. This new model still considers a set of $n$ nodes separated into two classes $\{A,B\}$, and generates typed hyperedges based on a set of probabilities $\vp = \begin{bmatrix} p_0 & p_1 & \cdots & p_k \end{bmatrix}$. Let $\mathcal{K}_t$ be the set of $k$-tuples of type-$t$. For each $e \in \mathcal{K}_t$, let $X_e$ be a Poisson random variable with parameter $p_t$, and let this represent the number of hyperedges placed at $e$:
\begin{equation}
	\label{poissonXe}
	X_e \sim \text{Poisson}(p_t).
\end{equation}
Recall that the cardinality-based HSBM differs in that we instead defined $X_e \sim \text{Bernoulli}(p_t)$. When we consider $p_t = o(1)$, which will typically be the case, this Poisson distribution will be very close to a Bernoulli with parameter $p_t$. For a hypergraph generated from this distribution, let $M_j$ be the random variable representing the number of hyperedges of type-$j$ in $H$, which as a sum of Poisson random variables will also be Poisson:
\begin{equation}
	\label{Mj}
	M_j = \sum_{e \in \mathcal{K}_j} X_e \sim \text{Poisson}(K_jp_j).
\end{equation}
Define $M_A$ to be the random variable denoting the total number of hyperedges involving at least one node in $A$, which is also Poisson distributed:
\begin{equation}
	\label{MA}
	M_A = \sum_{j = 1}^k M_j \sim \text{Poisson}\left(\sum_{j = 1}^k K_jp_j\right)
\end{equation}
The random variable representing the number of hyperedges that are \emph{not} type-$j$ is given by
\begin{equation}
	\label{Mnotj}
	\tilde{M}_j = \sum_{i \neq j} M_i \sim \text{Poisson}\left(\sum_{i \neq j} K_ip_i\right).
\end{equation}

Consider now a fixed hypergraph $H$ with $m_A$ hyperedges involving at least one class-$A$ node. If we assume this hypergraph was drawn from the random distribution given above, the random variable $M_t$, conditioned on the observed hyperedge count $m_A$, will be binomial:
\begin{equation}
	M_t \; |\; m_A \sim \text{Binom}(m_A, f_t)
\end{equation}
where $f_t$ is an affinity parameter:
\begin{equation}
	f_t = \frac{K_t p_t}{\sum_{j = 1}^k K_j p_j}.
\end{equation}
To see why, note
\begin{align*}
	\pr(M_t = c \; | \; M_A = m_A) &= \frac{\pr(M_t = m_t \text{ and } M_A = m_A) }{\pr(M_A = m_A)} \\
	&= \frac{\pr(M_t = m_t \text{ and } \tilde{M}_j = m_A - c) }{\pr(M_A = m_A)} \\
	&= \frac{
		\left[\frac{1}{c!} (p_tK_t)^c \cdot e^{-p_tK_t} \right]
		\left[\frac{1}{(m_A-c)!} (\sum_{i \neq t} p_iK_i)^{m_A-c} \cdot e^{-\sum_{i \neq t}p_iK_i}\right]
	}{
		\frac{1}{(m_A)!} (\sum_{i =1}^k p_iK_i)^{m_A} \cdot e^{-\sum_{i =1}^kp_iK_i}  } \\
	&={m_A \choose c} \left(\frac{K_t p_t}{\sum_{j = 1}^k K_j p_j} \right)^{c} \left(1 -\frac{K_t p_t}{\sum_{j = 1}^k K_j p_j} \right)^{m_A-c}.
\end{align*}

Finally, given an observed hypergraph with $m_A$ hyperedges involving at least one node in $A$, the likelihood of observing $m_t$ hyperedges of type-$t$ under this model is
\begin{equation*}
	\pr(M_t = m_t \; | \; M_A = m_A) = {m_A \choose c} f_t^{m_t} (1-f_t)^{m_A - m_t}.
\end{equation*}
Taking a derivative of the log-likelihood function with respect to the parameter $f_t$ and setting it to zero will give the maximum likelihood estimate for $f_t$. This ends up being equal to the alternative type-$t$ affinity score,
\begin{equation*}
	f_t = \frac{m_t}{m_A} = \frac{m_t}{\sum_{i = 1}^k m_i}.
\end{equation*}

\subsection{Impossibility Results}
Analogous to our results for standard affinity scores, we define two notions of hypergraph homophily based on alternative baseline scores. Let $\{\vg_t(X) \colon X \in \{A,B\}, t \in [k]\}$ denote a set of generalized baseline scores. 
\begin{definition}
	Class $X \in \{A,B\}$ exhibits majority homophily if for all $t > k-t$,
	\begin{equation}
		\va_t(X) > \vg_t(X).
	\end{equation}
\end{definition}
\begin{definition}
	Class $X \in \{A,B\}$ exhibits monotonic homophily if for all $t > k-t$,
	\begin{equation}
		\frac{\va_t(X)}{\vg_t(X)} > \frac{\va_{t-1}(X)}{\vg_{t-1}(X)} .
	\end{equation}
\end{definition}

The proof of the following impossibility result for monotonic homophily follows the same steps as the proof for Theorem~\ref{thm:oddkmonotonic}. In particular, these results can be shown by considering only two types of hyperedges.
\begin{theorem}
	Let $H$ be a two-class, $k$-uniform hypergraph and $\{\vg_t(X) \}$ be a set of generalized baseline scores for alternative affinity scores $\{\va_t(X)\}$. If $k$ is odd, it is impossible for both classes to exhibit monotonic homophily in terms of alternative affinity scores.  If $k$ is even, then both classes can exhibit monotonic homophily, but in this case $\frac{\va_\ell(X)}{\vg_\ell(X)} < \frac{\va_{\ell-1}(X)}{\vg_{\ell-1}(X)}$ for $\ell = k/2$ and $X \in \{A,B\}$.
\end{theorem}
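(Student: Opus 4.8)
The plan is to mimic the proof of Theorem~\ref{thm:oddkmonotonic}, which needs only the single innermost inequality from each class's monotonicity chain. Write $S_A = \sum_{i=1}^{k} m_i$ and $S_B = \sum_{i=1}^{k} m_{k-i}$ for the numbers of hyperedges containing at least one node of class $A$ and of class $B$, so that $\va_t(A) = m_t/S_A$ and $\va_t(B) = m_{k-t}/S_B$. Since $\{\vg_t(X)\}$ is a set of generalized baseline scores, there is a two-class $k$-uniform hypergraph $G$ whose alternative affinity scores are exactly $\{\vg_t(X)\}$; let $M_t$ be its type-$t$ hyperedge count and $\Sigma_A = \sum_{i=1}^k M_i$, $\Sigma_B = \sum_{i=1}^k M_{k-i}$, so $\vg_t(A) = M_t/\Sigma_A$ and $\vg_t(B) = M_{k-t}/\Sigma_B$. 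The one structural fact driving everything is that $S_X$ cancels out of any ratio $\va_t(X)/\va_{t-1}(X)$ and $\Sigma_X$ cancels out of $\vg_t(X)/\vg_{t-1}(X)$, so every monotonicity inequality collapses to a comparison of two ratios $m_j/M_j$.

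For odd $k$, put $r = (k+1)/2$, noting $k-r = r-1$ and $k-(r-1) = r$. The innermost inequality in class $A$'s chain, $\va_r(A)/\vg_r(A) > \va_{r-1}(A)/\vg_{r-1}(A)$, becomes $m_r/M_r > m_{r-1}/M_{r-1}$ after cancelling $S_A$ and $\Sigma_A$. The innermost inequality in class $B$'s chain, $\va_r(B)/\vg_r(B) > \va_{r-1}(B)/\vg_{r-1}(B)$, after substituting $\va_r(B) = m_{r-1}/S_B$, $\va_{r-1}(B) = m_r/S_B$, $\vg_r(B) = M_{r-1}/\Sigma_B$, $\vg_{r-1}(B) = M_r/\Sigma_B$, becomes $m_{r-1}/M_{r-1} > m_r/M_r$. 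These are directly contradictory, so when $k$ is odd no two-class $k$-uniform hypergraph can have both classes exhibit monotonic homophily.

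For even $k$, put $\ell = k/2$, so the innermost monotonicity constraint for either class relates types $\ell$ and $\ell+1$. Class $A$'s innermost inequality $\va_{\ell+1}(A)/\vg_{\ell+1}(A) > \va_\ell(A)/\vg_\ell(A)$ reduces to $m_{\ell+1}/M_{\ell+1} > m_\ell/M_\ell$; but expanding $\va_\ell(B) = m_\ell/S_B$, $\va_{\ell-1}(B) = m_{\ell+1}/S_B$, $\vg_\ell(B) = M_\ell/\Sigma_B$, $\vg_{\ell-1}(B) = M_{\ell+1}/\Sigma_B$ shows that this inequality is exactly the statement $\va_\ell(B)/\vg_\ell(B) < \va_{\ell-1}(B)/\vg_{\ell-1}(B)$. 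Symmetrically, class $B$'s innermost inequality $\va_{\ell+1}(B)/\vg_{\ell+1}(B) > \va_\ell(B)/\vg_\ell(B)$ reduces to $m_{\ell-1}/M_{\ell-1} > m_\ell/M_\ell$, which is precisely $\va_\ell(A)/\vg_\ell(A) < \va_{\ell-1}(A)/\vg_{\ell-1}(A)$. Thus if both classes are monotonically homophilous, both ratio curves strictly decrease from type $\ell-1$ to type $\ell$, which is the asserted conclusion.

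The only delicate point --- and the likely source of sign errors --- is the index reflection $t \leftrightarrow k-t$ when moving to class $B$: one must keep straight that $\va_t(B)$ involves $m_{k-t}$ (and $\vg_t(B)$ involves $M_{k-t}$), and that the denominators $S_B$, $\Sigma_B$ sum the complementary ranges. Once the substitution dictionary $\va_t(B) = m_{k-t}/S_B$, $\vg_t(B) = M_{k-t}/\Sigma_B$ is fixed, each step is a one-line cancellation, and unlike the majority-homophily results (Theorem~\ref{thm:oddkmajority}, Theorem~\ref{thm:evenkmajority}) no linear program or duality argument is needed --- in keeping with the paper's remark that these monotonic impossibilities ``can be shown by considering only two types of hyperedges.''
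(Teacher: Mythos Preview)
Your proposal is correct and matches the paper's approach: the paper omits the proof, stating only that it ``follows the same steps'' as Theorems~\ref{thm:oddkmonotonic} and~\ref{thm:evenkmonotonic} and ``can be shown by considering only two types of hyperedges,'' which is exactly the two-inequality reduction you carry out. The cancellation of $S_X$ and $\Sigma_X$ that collapses each monotonicity inequality to a comparison of ratios $m_j/M_j$ is precisely the analogue of the $D_A$, $D_B$ cancellations in the proof of Theorem~\ref{thm:oddkmonotonic}.
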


In order to prove impossibility results for majority homophily, we use the same linear programming based proof technique that we used for Theorem~\ref{thm:majorityimpossible}.
\begin{theorem}
	\label{thm:oddkaltmaj}
	Let $H$ be a two-class, $k$-uniform hypergraph and $\{\vg_t(X) \}$ be a set of generalized baseline scores for alternative affinity scores $\{\va_t(X)\}$. If $k$ is odd, it is impossible for both classes to exhibit majority homophily in terms of alternative affinity scores. 
\end{theorem}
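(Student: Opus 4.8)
The plan is to reuse the linear-programming / primal--dual strategy from the proof of Theorem~\ref{thm:oddkmajority}, which becomes noticeably lighter here since the denominators $\sum_i m_i$ and $\sum_i m_{k-i}$ of the alternative affinity scores carry no weighting factor $i$. Write $r=(k+1)/2$, introduce a variable $x_i\ge 0$ for each $i\in\{0,1,\dots,k\}$ (the proportion of type-$i$ hyperedges) and a slack variable $\gamma$, and consider
\[
\begin{array}{lll}
\text{maximize} & \gamma & \\
\text{subject to} & x_t - \vg_t(A)\sum_{i=1}^k x_i \ge \gamma & \text{for } t\in\{r,r+1,\dots,k\},\\
 & x_{k-t} - \vg_t(B)\sum_{i=1}^k x_{k-i} \ge \gamma & \text{for } t\in\{r,r+1,\dots,k\},\\
 & \sum_{i=0}^k x_i = 1, & \\
 & x_i \ge 0 & \text{for all } i\in\{0\}\cup[k].
\end{array}
\]
Exactly as in Lemma~\ref{iff}, the optimum $\gamma^*$ of this LP is strictly positive if and only if there is a two-class $k$-uniform hypergraph in which both classes exhibit majority homophily with respect to the alternative scores, so it suffices to prove $\gamma^*=0$.

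First I would exhibit a primal feasible point of value $0$: by the definition of generalized baseline scores there is a hypergraph $G$ with $m_i$ hyperedges of type $i$ and $M$ total, whose alternative affinity scores are exactly the $\vg_t(X)$, and setting $x_i=m_i/M$, $\gamma=0$ makes every inequality tight. Next, rewriting the LP in the matrix form of~\eqref{primal} and passing to the dual as in~\eqref{dual}, the constraint matrix decomposes as $\mB=\mI-\mD_\vg\mR$ --- the analogue of the decomposition with $\mB=\textbf{D}_k-\textbf{D}_\vg\textbf{R}$, except that $\textbf{D}_k$ is replaced by the identity (again because no factor $i$ appears): here $\mD_\vg$ is diagonal with entries $[\vg_k(B),\vg_{k-1}(B),\dots,\vg_r(B),\vg_r(A),\dots,\vg_{k-1}(A),\vg_k(A)]$, and $\mR$ is the $0/1$ matrix whose first $r$ rows are the indicator of $\{0,1,\dots,k-1\}$ and whose last $r$ rows are the indicator of $\{1,2,\dots,k\}$. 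It then remains to produce a nonnegative $\vy$ with $\mB^T\vy=0$; normalizing it so that $\ve^T\vy=1$ and taking the dual objective variable equal to $0$ gives a dual feasible point of value $0$, and weak duality forces $\gamma^*=0$, which by Lemma~\ref{iff} proves the theorem.

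Constructing $\vy$ is where the real work lies, though it is far simpler than in Theorem~\ref{thm:oddkmajority}. Indexing the $k+1$ dual coordinates by the type $i\in\{0,1,\dots,k\}$ that ``owns'' each constraint, the equation $\mB^T\vy=0$ reads $\vy=\mR^T\mD_\vg\vy$; because the interior coordinates of $\mR^T\mD_\vg\vy$ all coincide, the natural ansatz is to take $y_1=y_2=\dots=y_{k-1}$ equal to a common positive value and solve for the extreme coordinates $y_0,y_k$ from the two remaining equations. Carrying this through, the system is consistent precisely when
\[
\frac{\sum_{t=r}^{k-1}\vg_t(A)}{1-\vg_k(A)} + \frac{\sum_{t=r}^{k-1}\vg_t(B)}{1-\vg_k(B)} = 1,
\]
which is exactly where the assumption that the $\vg_t(X)$ are realized by a hypergraph $G$ is used: substituting $\vg_t(A)=m_t/\sum_{i=1}^k m_i$ and $\vg_t(B)=m_{k-t}/\sum_{i=1}^k m_{k-i}$, using $k-r=r-1$ and $\sum_{t=1}^k\vg_t(X)=1$, and clearing denominators, the identity collapses to the triviality $\big(\sum_{j=r}^{k-1}m_j\big)\big(\sum_{j=1}^{r-1}m_j\big)=\big(\sum_{j=1}^{r-1}m_j\big)\big(\sum_{j=r}^{k-1}m_j\big)$. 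The resulting $\vy$ is nonnegative once the common interior value is chosen positive, since each $\vg_t(X)\in(0,1)$ (alternative affinity scores of a class sum to one), so $1-\vg_k(A),1-\vg_k(B)>0$ and the numerators $\sum_{t=r}^{k-1}\vg_t(X)$ are positive for odd $k\ge 3$.

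The main obstacle, then, is that last step: guessing the right (very simple) dual ansatz, reducing the dual system's consistency to a single scalar identity, and recognizing that identity as a consequence of baseline realizability. Everything else --- the LP/dual reformulation, the $G$-based primal witness, and the nonnegativity and normalization bookkeeping --- is routine and parallels Theorem~\ref{thm:oddkmajority}, with the bonus that the dual witness here has all but two of its coordinates equal, in contrast to the fully positive, intricate solution of~\eqref{ybk}--\eqref{yat} required in the weighted case. (The degenerate value $k=1$ is trivial, since there $\vg_1(X)=1$ and majority homophily would require $\va_1(X)>1$.)
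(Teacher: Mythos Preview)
Your proposal is correct and follows essentially the same approach as the paper: the same LP, the same primal witness from the baseline hypergraph $G$, the same decomposition $\mB=\mI-\mD_\vg\mE$ (the paper calls your $\mR$ by $\mE$), and the same dual ansatz of setting all interior coordinates equal and solving for the two extreme ones, with the consistency of the resulting system reducing to exactly your scalar identity via baseline realizability. The only cosmetic differences are that the paper indexes dual variables by pairs $(X,t)$ rather than by the owning type $i\in\{0,\dots,k\}$, and it verifies the identity by directly observing $D_A-m_k=D_B-m_0=\sum_{i=1}^{k-1}m_i$ rather than clearing denominators to a product equality.
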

\begin{proof}
	Let $r = (k+1)/2$. The following linear program will have a strictly positive solution if and only if it is possible for both classes to exhibit majority homophily, with respect to the new alternative affinity scores:
	\begin{equation}
		\label{lpmain2}
		\begin{array}{lll}
			\text{maximize} & \gamma \\
			\text{subject to}&x_t - \vg_t(A) \cdot \sum_{i = 1}^k x_i \geq \gamma &\text{  for $t \in [k], t > k-t$} \\
			&x_{k-t} - \vg_t(B) \cdot \sum_{i = 1}^k  x_{k-i} \geq \gamma  &\text{  for $t \in [k], t > k-t$} \\
			& \sum_{i = 0}^k x_i = 1 &\\
			& x_i \geq 0 &\text{ for all $i \in \{0\} \cup [k]$}
		\end{array}
	\end{equation}
	The variable $x_t$ again represents the proportion of hyperedges where $t$ of the nodes are from class $A$. We can express the linear program as well as its dual in the same general form
	\begin{equation*}
		\begin{minipage}{.4\textwidth}
			\centering
			\textbf{Primal Linear Program}
			\begin{equation}
				\label{primal2}
				\begin{array}{ll}
					\max \,\, & \gamma \\
					\text{s.t. }  	&\begin{bmatrix}
						-\textbf{B} &\textbf{e} \\
					\end{bmatrix}
					\begin{bmatrix}
						\textbf{x} \\ \gamma
					\end{bmatrix}
					\leq 0\\
					&\begin{bmatrix}
						\textbf{e}^T & 0
					\end{bmatrix}
					\begin{bmatrix}
						\textbf{x} \\ \gamma
					\end{bmatrix}
					= 	1 \\
					&\textbf{x} \geq 0, \gamma \geq 0
				\end{array}
			\end{equation}
		\end{minipage}
		\begin{minipage}{.6\textwidth}
			\centering
			\textbf{Dual Linear Program}
			\begin{equation}
				\label{dual2}
				\begin{array}{ll}
					\min \,\, &\alpha \\
					\text{s.t. }  	&\begin{bmatrix}
						-\textbf{B}^T &\textbf{e} \\
						\textbf{e} & 0
					\end{bmatrix}
					\begin{bmatrix}
						\textbf{y} \\ \alpha
					\end{bmatrix}
					\geq 
					\begin{bmatrix}
						0 \\ 1
					\end{bmatrix}\\
					& \textbf{y} \geq 0\\
					& \alpha \text{ unrestricted}.
				\end{array}
			\end{equation}
		\end{minipage}
	\end{equation*}
	Above, $\ve$ is the all ones vector and the matrix $\mB$ encodes constraints of the form
	\begin{align}
		\label{constraintsb2}
		x_{k-t} - \vg_t(B) \cdot \sum_{i = 1}^k x_{k-i} &\geq \gamma  \text{   for $t = k, k-1, k-2, \hdots, r$} \\
		\label{constraintsa2}
		x_t - \vg_t(A) \cdot \sum_{i = 1}^k x_i &\geq \gamma \text{  for $t = r, r+1, \hdots, k$.} 
	\end{align}
	The matrix $\mB$ can be decomposed into the following form:
	\begin{equation}
		\label{bdecomp}
		\mB = \mI - \mD_\vg \mE,
	\end{equation}
	where $\mI$ is the $2r \times 2r$ identity matrix, $\mD_\vg$ is a diagonal matrix with diagonal entries $$[\vg_k(B), \vg_{k-1}(B), \cdots \vg_{r}(B), \vg_r(A), \cdots , \vg_{k-1}(A), \vg_k(A)],$$ and $\mE$ is a matrix that is all ones except for zeros in the last column of the first $r$ rows, and the first column in the last $r$ rows. Formally,
	\begin{align*}
		\mE_{ij} =
		\begin{cases}
			0 & \text { if $j = k$ and $i \in \{1,2, \hdots r\}$ }\\
			0 & \text { if  $j = 1$ and $i \in \{r+1,r+2, \hdots 2r\}$ }\\
			1 & \text{ otherwise.}
		\end{cases}
	\end{align*}
	The zero entries reflect the fact that hyperedges of type zero do not affect the affinity scores for class $A$, and type-$k$ hyperedges do not affect affinity scores for class $B$.
	
	A primal solution with an objective score of 0 can be obtained by setting $x_j = m_j/(\sum_{i = 0}^k m_j)$, where $m_i$ is the number of hyperedges of type $i$ in the hypergraph whose affinity scores are equal to the baseline scores $\{\vg_i(X)\}$. In order to find a set of dual variables with an objective score of zero, it suffices to find a vector $\vy$ that is strictly positive and satisfies $\mB^T \vy$. 	 Equivalently, given the decomposition of $\mB$ in~\eqref{bdecomp}, we want $\vy$ to satisfy
	\begin{equation}
		\label{todo}
		\vy = \mE^T \mD_\vg \vy.
	\end{equation}
	Each row and column of $\mB$ can be associated with a class $X$ and hyperedge type $t$, so we doubly index the dual variables as follows
	\begin{equation*}
		\vy^T  = \begin{bmatrix} y_{B,k} & y_{B,k-1} & \hdots & y_{B,r} & y_{A,r} & \hdots & y_{A,k-1} & y_{A,k} \end{bmatrix}.
	\end{equation*}
	Rows $2$ through $2r -1$ of $\mE^T$ have the value 1 in every column, so for this equation to hold we must have
	\begin{equation}
		\label{yxi}
		y_{X,i} = \sum_{i = r}^k \vg_i(A) y_{A,i} + \sum_{i = r}^k \vg_i(B) y_{B,i} .
	\end{equation}
	for  $i \in \{r, r+1, \cdots , k-1\}$ and $X \in \{A,B\}$. Thus, a necessary condition for satisfying~\eqref{todo} is that entries $2$ through $2r-1$ of $\vy$ are all equal. With this in mind, let $z > 0$ be a fixed positive value and construct a set of dual variables as follows:
	\begin{align*}
		y_{X,i} &= z  \;\text{ for $i \in \{r, r+1, \cdots , k-1\}$, $X \in \{A,B\}$} \\
		y_{B,k} &= \frac{z \cdot \sum_{i = r}^{k-1} \vg_i(B)}{1-\vg_k(B)} \\
		y_{A,k} &= \frac{z \cdot \sum_{i = r}^{k-1} \vg_i(A)}{1-\vg_k(A)}. 
	\end{align*}
	As long as $z > 0$, we can see that $y_{B,k}$ and $y_{A,k}$ will also be strictly positive. We can also set $z$ so that the entries of $\vy$ sum to one. As long as we can show $\mB^T \vy = 0$, this means we have a set of dual variables with an objective score of zero, confirming that majority homophily cannot hold for both classes simultaneously. 
	
	Variables $y_{B,k}$ and $y_{A,k}$ are explicitly chosen so that the first and last entries in equation~\eqref{todo} hold when all other variables are equal to a positive constant $z$. To check that $\mB^T \vy = 0$, we just need to confirm that~\eqref{yxi} holds. Recall that the baseline scores can be written in terms of hyperedge counts for some hypergraph $G$, i.e., for $i \in [k]$
	\begin{align*}
		\vg_i(A) &= \frac{m_i}{D_A} \\
		\vg_i(B) &= \frac{m_{k-i}}{D_B}
	\end{align*}
	where $D_A = \sum_{i = 1}^k m_i$ and $D_B = \sum_{i = 1}^k m_{k-i} = \sum_{i = 0}^{k-1} m_i$. Notice that $D_A - m_k = D_B-  m_0 = \sum_{i = 1}^{k-1} m_i$.
	Substituting our choice of variables into the right hand side of~\eqref{yxi}, we confirm that the equation holds:
	\begin{align*}
		&\sum_{i = r}^k  \vg_i(A) y_{A,i} + \sum_{i = r}^k \vg_i(B) y_{B,i}  \\
		&= z \cdot \sum_{i = r}^{k-1} \vg_i(A) + z \cdot\sum_{i = r}^{k-1} \vg_i(B) + \vg_k(A) y_{A,k} + \vg_k(B) y_{B,k} \\
		&= z \cdot \sum_{i = r}^{k-1} \vg_i(A) + z \cdot\sum_{i = r}^{k-1} \vg_i(B)	
		+ \vg_k(A)\frac{z \cdot \sum_{i = r}^{k-1} \vg_i(A)}{1-\vg_k(A)} 
		+ \vg_k(B)\frac{z \cdot \sum_{i = r}^{k-1} \vg_i(B)}{1-\vg_k(B)}  \\
		&=z \cdot \sum_{i = r}^{k-1} \vg_i(A) \left( 1 + \frac{\vg_k(A)}{1- \vg_k(A)}\right) + 
		z \cdot \sum_{i = r}^{k-1} \vg_i(B) \left( 1 + \frac{\vg_k(B)}{1- \vg_k(B)}\right) \\
		&=z \cdot \left(\sum_{i = r}^{k-1} \frac{m_i}{D_A} \left( \frac{1}{1- \vg_k(A)}\right) + 
		\sum_{i = r}^{k-1} \frac{m_{k-i}}{D_B} \left( \frac{1}{1- \vg_k(B)}\right) 
		\right) \\
		&= z \cdot \left(\sum_{i = r}^{k-1} \frac{m_i}{D_A} \left( \frac{D_A}{D_A - m_k}\right) + 
		\sum_{i = r}^{k-1} \frac{m_{k-i}}{D_B} \left( \frac{D_B}{D_B- m_0}\right) 
		\right) \\
		&= z \cdot \left(\sum_{i = r}^{k-1} \frac{m_i + m_{k-i}}{\sum_{i = 1}^{k-1} m_i} \right) \\
		&= z \cdot 1 = z.
	\end{align*}
\end{proof}

An analogous impossibility result also holds for even $k$ when we use alternative affinity scores.
\begin{theorem}
	If $k$ is even, it is impossible for both classes to exhibit monotonic homophily if additionally $\va_\ell(X) > \vg_\ell(X)$ for one class $X \in \{A,B\}$ when $\ell = k/2$. 
\end{theorem}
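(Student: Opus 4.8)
The plan is to run the same linear-programming/duality argument used for Theorem~\ref{thm:oddkaltmaj}, modified by the single-extra-constraint trick of Theorem~\ref{thm:evenkmajority}. Write $\ell = k/2$ and assume without loss of generality that the extra hypothesis is $\va_\ell(A) > \vg_\ell(A)$ (the class-$B$ case is symmetric). Augment LP~\eqref{lpmain2} with the one extra row
\[
x_\ell - \vg_\ell(A)\cdot\sum_{i=1}^k x_i \geq \gamma,
\]
and note, exactly as in Lemma~\ref{iff} and the opening of Theorem~\ref{thm:oddkaltmaj}, that the optimum $\gamma^*$ of this LP is strictly positive if and only if there is a two-class $k$-uniform hypergraph whose alternative affinities make both classes majority-homophilous while additionally satisfying $\va_\ell(A) > \vg_\ell(A)$. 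A primal point with $\gamma = 0$ always exists: taking $x_i = m_i/\sum_j m_j$ for the hypergraph $G$ whose alternative affinities equal the given generalized baselines makes every constraint (including the new one) hold with equality. So it suffices to produce a dual-feasible point of value $0$; weak duality then forces $\gamma^* = 0$, which is the claimed impossibility.

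Next I would put the augmented LP and its dual into the matrix form~\eqref{primal2}--\eqref{dual2}, where now $\mB$ is $(k+1)\times(k+1)$: its rows are the $\ell$ class-$B$ majority constraints ($t = k, k-1, \dots, \ell+1$), then the $\ell$ class-$A$ majority constraints ($t = \ell+1, \dots, k$), then the new type-$\ell$ class-$A$ row, and its columns are indexed by $x_k, x_{k-1}, \dots, x_0$. As in~\eqref{bdecomp}, $\mB = \mI - \mD_\vg\mE$, where $\mD_\vg$ carries the baselines $\vg_k(B), \dots, \vg_{\ell+1}(B), \vg_{\ell+1}(A), \dots, \vg_k(A), \vg_\ell(A)$ along the diagonal (in the row order above) and $\mE$ is all-ones except for a zero in the $x_0$-column of every class-$A$ row and a zero in the $x_k$-column of every class-$B$ row. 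Doubly indexing the dual variables $\vy$ by (class, type) pairs, $\mB^T\vy = 0$ becomes $\vy = \mE^T\mD_\vg\vy$; the rows of $\mE^T$ corresponding to $x_1,\dots,x_{k-1}$ are all ones, which forces every ``interior'' dual variable to equal a common value, while the $x_0$- and $x_k$-rows give two boundary equations.

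Guided by that, I would set $y_{A,i} = y_{B,i} = z$ for the interior types and also $y_{A,\ell} = z$, with $z > 0$ chosen at the end so that $\ve^T\vy = 1$, and define the two boundary variables by
\[
y_{B,k} = \frac{z\sum_{i=\ell+1}^{k-1}\vg_i(B)}{1-\vg_k(B)},
\qquad
y_{A,k} = \frac{z\sum_{i=\ell}^{k-1}\vg_i(A)}{1-\vg_k(A)},
\]
where the decisive point is that the new baseline $\vg_\ell(A)$ is absorbed into the class-$A$ sum rather than the class-$B$ sum. Nonnegativity is immediate since $0 < \vg_t(X) < 1$. The only equation left to check is the common-interior-value equation $z = \sum_{i=\ell+1}^{k}\bigl[\vg_i(A)y_{A,i} + \vg_i(B)y_{B,i}\bigr] + \vg_\ell(A)y_{A,\ell}$; substituting $\vg_i(A) = m_i/D_A$ and $\vg_i(B) = m_{k-i}/D_B$ with $D_A = \sum_{i=1}^k m_i$, $D_B = \sum_{i=0}^{k-1} m_i$, and $D_A - m_k = D_B - m_0 = \sum_{i=1}^{k-1}m_i$, this collapses to the telescoping identity $\sum_{i=\ell}^{k-1}m_i + \sum_{j=1}^{\ell-1}m_j = \sum_{i=1}^{k-1}m_i$. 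Normalizing $\vy$ to sum to one then yields a dual-feasible point of value $0$, finishing the proof.

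I expect the main obstacle to be purely the bookkeeping of the matrix set-up rather than any delicate inequality: one must recognize that the extra type-$\ell$ constraint plays the role of an interior row (so its dual variable is the common value $z$, not a boundary variable), that its baseline weight $\vg_\ell(A)$ has to be grouped with the class-$A$ boundary variable, and that with those choices everything reduces to the same hyperedge-count cancellation that drives Theorem~\ref{thm:oddkaltmaj}. If the extra hypothesis is on class $B$ instead, one swaps the roles of $x_0$ and $x_k$ (and of $\vg_\ell(B)$ for $\vg_\ell(A)$) and the argument is identical.
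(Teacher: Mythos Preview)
Your proposal is correct and follows essentially the same approach as the paper's proof: both add the single extra class-$A$ row for type $\ell=k/2$ to the LP of Theorem~\ref{thm:oddkaltmaj}, write $\mB=\mI-\mD_\vg\mE$, set every interior dual variable (including $y_{A,\ell}$) to a common value $z$, and solve the two boundary equations to obtain exactly your formulas for $y_{B,k}$ and $y_{A,k}$ (the paper writes the first sum starting at $r=\ell+1$, which is the same as your $\sum_{i=\ell+1}^{k-1}$). The only differences are cosmetic row/column orderings and that you spell out the telescoping hyperedge-count identity, whereas the paper simply asserts that the algebra goes through. Note also that both you and the paper treat the statement as a \emph{majority} homophily impossibility (the word ``monotonic'' in the statement is evidently a typo, as confirmed by the form of the extra hypothesis and the LP used).
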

\begin{proof}
	The proof is nearly identical to the proof of Theorem~\ref{thm:oddkaltmaj}. For even $k$, let $r = \frac{k}{2} + 1$. We use the same linear program encoding the maximum possible amount of majority homophily, with an additional constraint for class $A$:
	\begin{equation*}
		x_{\ell} - g_\ell(A) \cdot \sum_{i = 1}^k x_{i} \geq \gamma.
	\end{equation*}
	The LP and its dual can again be written in the form~\eqref{primal2} and~\eqref{dual2}. The matrix $\mB$ is given by
	\begin{equation*}
		\mB = \mI - \mD_\vg \mE,
	\end{equation*}
	where $\mI$ is the $(k+1)\times (k+1)$ identity matrix, $\mD_\vg$ is a diagonal matrix with diagonal entries
	$$[\vg_k(B), \vg_{k-1}(B), \cdots, \vg_{r}(B), \vg_{\ell}(A), \vg_r(A), \cdots , \vg_{k-1}(A), \vg_k(A)],$$ and $\mE$ is a matrix that is all ones except for zeros in the last column of the first $k/2$ rows, and zeros in the first column in the last $k/2+1$ rows. The dual variables of the linear program can be indexed as follows
	\begin{equation*}
		\vy^T  = \begin{bmatrix} y_{B,k} & y_{B,k-1} & \hdots & y_{B,r} &y_{A,\ell} & y_{A,r} & \hdots & y_{A,k-1} & y_{A,k} \end{bmatrix}.
	\end{equation*}
	The proof again follows as long as we can find a nonnegative vector $\vy$ satisfying $\mB^T \vy = 0$, or equivalently $\vy = \mE^T \mD_\vg \vy$. In order to accomplish this, all but the first and last dual variable can be set equal to some positive value $z$, and then we can solve for $y_{A,k}$ and $y_{B,k}$:
	\begin{align*}
		y_{B,k} &= \frac{z \cdot \sum_{i = r}^{k-1} \vg_i(B)}{1-\vg_k(B)} \\
		y_{A,k} &= \frac{z \cdot \sum_{i = \ell}^{k-1} \vg_i(A)}{1-\vg_k(A)}. 
	\end{align*}
	The only difference from the dual variables in Theorem~\ref{thm:oddkaltmaj}
	is that the summation in the numerator of $y_{A,k}$ starts from $\ell$ rather than from $r$. The rest of the result follows by showing algebraically that $\mB^T \vy = 0$. 
\end{proof}

\section{Dataset Information and Additional Experimental Results}
\begin{figure}[t!]
	\includegraphics[width=\linewidth]{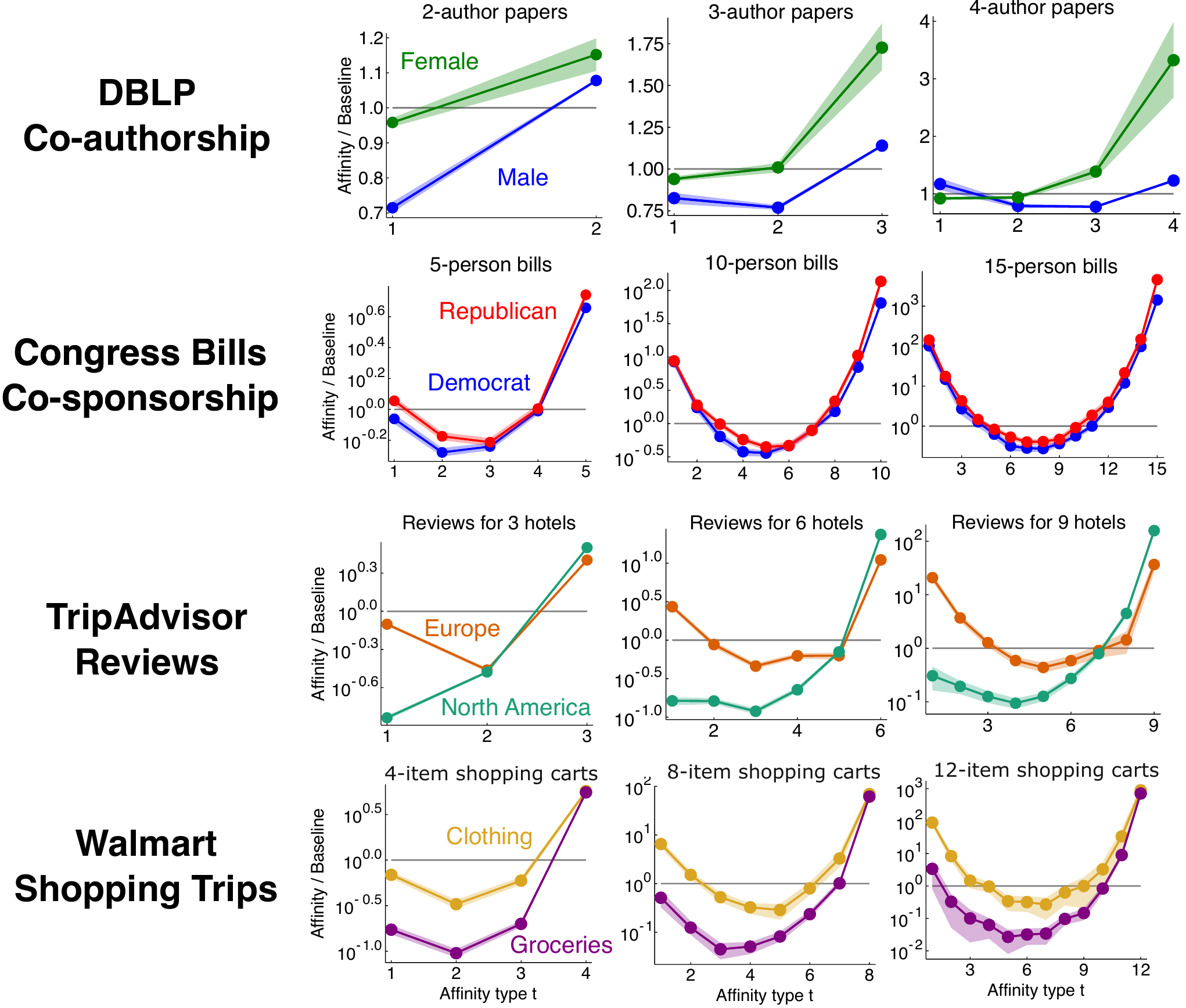} 
	\caption{\small 
		\textbf{Results from running a bootstrapping procedure to check the robustness of empirical results.} Ratio scores are shown for the DBLP co-authorship hypergraph (top row), the congress bills hypergraph (second row), the TripAdvisor hypergraph (third row), and the walmart hypergraph (last row).
		These results indicate that affinity and ratio scores in each case are robust to perturbations in the data.  
		For each hypergraph and a range of values of $k$, the procedure samples from the original set of size-$k$ hyperedges with replacement, and computes affinity scores and ratio scores each time. Solid lines in the plots indicate the true affinity scores computed on the entire dataset, which are nearly identical to the mean affinity score obtained from 100 runs of this procedure. Lighter colored regions show two standard errors above and below the mean. In many plots, the error region is too small to be perceptible. }
	\label{fig:main3}
\end{figure} 
\begin{figure}[t!]
	\centering
	\includegraphics[width=\linewidth]{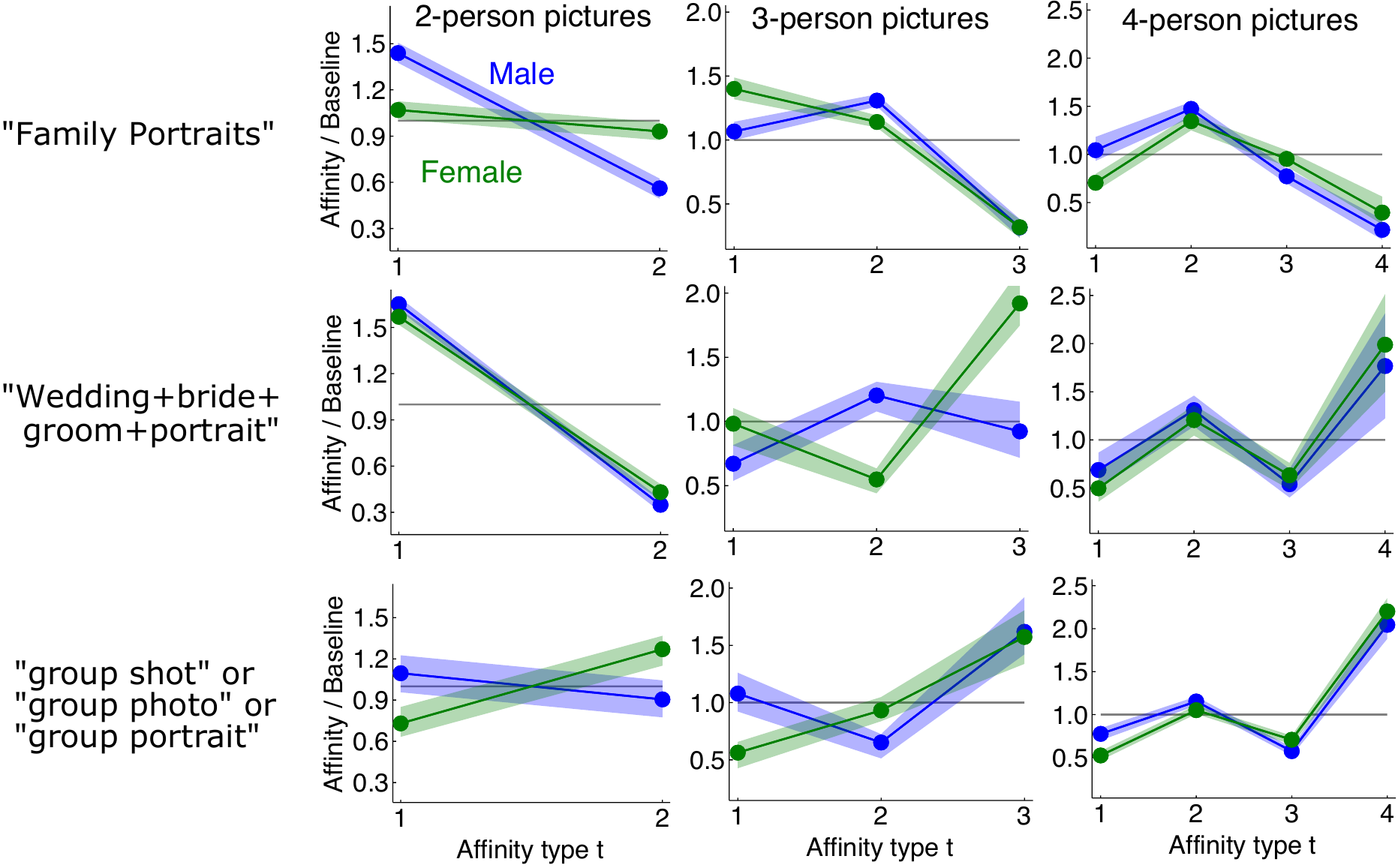} 
	\caption{\small 
		\textbf{Ratio scores obtained from a bootstrapping procedure on the group pictures dataset, indicating the robustness of our results for each group picture type.} The solid lines indicate true affinity scores, which are very close to the mean affinity score from taking 100 different samples from the data. Lighter colored regions show one standard error above and below the mean from this bootstrapping procedure. On the subsampled datasets, we continue to see all the same major trends and differences between pictures types as we do when using all of the data.}
	\label{fig:photos2}
\end{figure} 
In this section we provide details regarding the datasets used in the main text and additional experimental results. Further information about each dataset is available from the original source. For all experiments in the main text and the appendix, we used asymptotic baseline scores when forming the ratio plots { (see the Materials and Methods section). Code and data sufficient to reproduce all experimental results in the main text and in the appendix is available on Zenodo (\url{https://doi.org/10.5281/zenodo.7086798}) as well as on GitHub (\url{https://github.com/nveldt/HypergraphHomophily}).
	
	In order to provide an additional comparison against graph-based approaches, we compute and report graph homophily indices obtained by projecting each hypergraph into a graph based on co-participation in group interactions. These graph homophily indices do provide some information regarding the level of homophily exhibited by each class of nodes in each dataset, but they do not allow us to capture any notion of majority or monotonic homophily. These graph scores also often hide the way in which group size affects the level of homophily that is exhibited by a class.}

\paragraph{Co-authorship and gender} The co-authorship dataset we considered is the DBLP Records and Entries for Key Computer Science Conferences~\cite{dblp_data}, originally used in a study on women in computer science research~\cite{agarwal2016women} and available online at~\url{https://data.mendeley.com/datasets/3p9w84t5mr/1}. The data constitutes 16 years of publications at top computer science conferences, from 2000 to 2015, listed on the DBLP bibliography database. We consider only authors in the dataset whose gender is known with high confidence, and discard all papers including an author of unknown gender. The resulting hypergraph has 105256 nodes (82620 male, 22636 female), with a maximum hyperedge size of 21. The number of hyperedges between size 2 and 4 is 74134.

{
	We can project the hypergraph into a graph by including an edge between author $i$ and $j$ if they ever publish a paper together. If we perform this projection using all group interactions (i.e., all papers), the resulting graph has a homophily index of 0.270 for women and a homophily index of 0.821 for men. If we project only group interactions with 2-4 authors, the scores are similar: 0.261 and 0.828. In both cases, the scores are higher than the relative class proportions of 0.215 and 0.785. 
}

\paragraph{Congress bills and political parties}
The congress bills dataset is made up of legislative bills co-sponsored by US politicians in the Senate and House of Representatives. The original data was collected by James Fowler~\cite{Fowler-2006-connecting,Fowler-2006-cosponsorship}. For our experiments we consider a derivation of the dataset presented as timestamped hyperedges~\cite{Benson-2018-simplicial}, available online at~\url{https://www.cs.cornell.edu/~arb/data/congress-bills}. The hypergraph has 1718 nodes (810 Republican and 908 Democrat) and 83105 hyperedges ranging from size 2 to 25. 

{
	If we project all group interactions to a graph, the graph homophily indices for Republicans and Democrats are 0.499 and 0.591, respectively, and their class proportions are 0.471 and 0.529. In other words, graph homophily indices are only slightly above baseline.
	If we only consider group interactions of size 20 or smaller (the largest group size we considered for our hypergraph affinity scores in the main text), the homophily indices increase only slightly to 0.516 and 0.606. These scores are obtained if we use an unweighted graph projection, where nodes $i$ and $j$ have a unit weight edge if they {ever} co-participate in a hyperedge. We can also perform a weighted projection where the weight of an edge between nodes $i$ and $j$ equals the \emph{number} of hyperedges they both participate in. If we perform this weighted projection for all groups of size up to 20, graph homophily indices for Republicans and Democrats are 0.586 and 0.724 respectively. These higher scores indicate that it is common to see \emph{repeated} collaborations between the same two individuals from the same political party, which is also a valid notion of homophily. This is arguably a more accurate measure of the high levels of homophily present in the dataset, though this still does not capture the extreme notions of majority and monotonic group homophily that we reveal using our hypergraph measures.
}

\paragraph{TripAdvisor hotels and locations}
The TripAdvisor hypergraph is derived from a review dataset originally used for research on opinion mining from online reviews~\cite{wang2011latent}, available online at~\url{https://www.cs.virginia.edu/~hw5x/dataset.html}.
We associate each hotel in the dataset from North America or Europe as a labeled node in a hypergraph. We discard reviews from the original data that are cross-listed reviews from other travel sites, and reviews where the reviewer name is simply ``A TripAdvisor Member''. For each remaining unique reviewer, we construct a hyperedge joining all hotels they reviewed. The resulting hypergraph has 8956 nodes and 130570 hyperedges of size 2 to 87. 
{
	The main text shows MaHI and MoHI scores for group sizes up to $k = 13$. If we project groups of size 2 to 13 to an unweighted graph, the graph homophily indices for North American and Europe are 0.829 and 0.519, respectively. If we use a weighted graph projection, graph homophily indices increase to 0.893 and 0.613. In both cases, graph homophily indices are well above baseline class proportions of 0.502 and 0.498.
}

\paragraph{Walmart trips dataset}
The Walmart products datasets is available online at~\url{https://www.kaggle.com/c/walmart-recruiting-trip-type-classification}. This dataset was first made available as part of a Kaggle competition. Amburg et al.~\cite{Amburg-2020-categorical} derived a large hypergraph of co-purchased products based on the data, and identified department labels for each product. This derived hypergraph is available at~\url{https://www.cs.cornell.edu/~arb/data/walmart-trips/}. In our work, we restrict to the subset of grocery and clothes products, resulting in a hypergraph with 48480 nodes (26178 grocery products, and 22302 clothes products) and 47034 hyperedges of size 2 to 25.

{
	The main text shows MaHI and MoHI scores for group sizes up to $k = 14$. If we project groups of size 2 to 14 to an unweighted graph, the graph homophily indices for Groceries and Clothing are 0.942 and 0.618, respectively. If we use a weighted graph projection, these indices increase only slightly to 0.948 and 0.621. Baseline scores (i.e., class proportions) are 0.540 and 0.460 for Groceries and Clothing, respectively. 
}

\paragraph{Group pictures and gender} The three different group picture datasets~\cite{gallagher2009} were downloaded from
~\url{http://chenlab.ece.cornell.edu/people/Andy/ImagesOfGroups.html}. 
Images in the original dataset were obtained via three different Flickr search queries, producing sets of family pictures, wedding pictures, and general group pictures. We parse and store data for all group pictures with up to $k = 10$ people. The number of pictures with between two and four people for each type of group picture is 1051, 662, and 963 for family, wedding, and general group pictures respectively. We used even baseline scores for the experiments, which assumes an equal number of men and women. Although we do not have unique identifiers for people in the pictures, we checked that the data is indeed roughly gender balanced. Treating each person in each picture as unique, 51-52\% of the subjects are female in each of the three group picture datasets.

{
	In Figure~\ref{fig:photos} of the main text, the graph homophily indices we report are based on projecting all group pictures with up to 10 people to a graph based on co-appearance. For example, we reported graph homophily indices of 0.57 and 0.55 (for women and men, respectively) for wedding pictures.
	If we only project hyperedges of size 2 to 4 for wedding pictures, then the affinity scores are instead 0.45 for women and 0.40 for men, which are just below baseline values.
	Whether we project using all groups pictures or only small group pictures ($k = 2,3,4$), reducing group interactions to pairwise co-appearances overlooks meaningful information about the way in which homophily depends on the group size. One way to partially remedy this issue is to separately project different group sizes to different graphs, and then compute a graph homophily index for each separate graph. In Table~\ref{tab:ghi}, we report the graph homophily indices obtained by projecting size 2, 3, and 4 group pictures to a graph, as well as scores obtained from projecting all groups of size 2-4, and all groups of size 2-10. Columns $\alpha_W$ and $\alpha_M$ represent the class proportion (i.e., baseline scores) for women and men, respectively.
	\begin{table}
		\caption{
			{
				\textbf{Graph homophily indices obtained by projecting three different types of group picture datasets to graphs}. We report indices obtained from projecting groups of size 2, 3, and 4, separately. We also provide graph homophily indices obtained when projecting multiple groups sizes at once to a graph.
			}
		}
		\label{tab:ghi}
		\begin{tabular}{c |  lll | lll | lll }
			\toprule
			& \multicolumn{3}{c}{Family Pictures} & \multicolumn{3}{c}{Wedding Pictures} & \multicolumn{3}{c}{General Group Pictures} \\
			\toprule
			$k$ & women & men & $\alpha_W$ / $\alpha_M$ & women & men & $\alpha_W$ / $\alpha_M$ & women & men & $\alpha_W$ / $\alpha_M$\\
			\midrule 
			2 & 0.465 & 0.281 & 0.57 / 0.43 & 0.215 & 0.174 & 0.51 / 0.49 & 0.635 & 0.452 & 0.6 / 0.4 \\
			\midrule 
			3 & 0.365 & 0.406 & 0.48 / 0.52 & 0.617 & 0.531 & 0.55 / 0.45 & 0.626 & 0.567 & 0.54 / 0.46 \\
			\midrule 
			4 & 0.456 & 0.405 & 0.52 / 0.48 & 0.557 & 0.519 & 0.52 / 0.48 & 0.583 & 0.543 & 0.52 / 0.48 \\
			\midrule 
			2-4 & 0.426 & 0.395 & 0.52 / 0.48 & 0.452 & 0.394 & 0.52 / 0.48 & 0.588 & 0.543 & 0.53 / 0.47 \\
			\midrule 
			2-10 & 0.433 & 0.411 & 0.51 / 0.49 & 0.571 & 0.546 & 0.52 / 0.48 & 0.599 & 0.583 & 0.51 / 0.49 \\
			\bottomrule
		\end{tabular}
	\end{table}
	This approach of separately projecting different group sizes is already a departure from standard graph-based techniques for measuring homophily, which often project all group sizes at once. These separate graph projections do provide one way to observe the way homophily depends on group size. In wedding pictures, for example, these separated scores accurately capture the fact that homophily is not present for groups of size $k = 2$, but starts to become more present for groups of size $k = 3$ and $k = 4$. However, separately projecting different group sizes still does not capture the fact that ratio scores meaningfully differ within each group size $k$ depending on affinity type $t \leq k$. For example, our hypergraph scores capture the fact that in wedding group pictures with $k = 4$ people, type-$3$ affinity scores are below baseline while groups that are perfectly gender homogeneous or perfectly gender balanced are above baseline. 
	Similarly, in family pictures with 4 people, our hypergraph measures capture the fact that type-2 affinities are above baseline, which intuitively reflects a high proportion of 4-person pictures of a husband, wife, one boy, and one girl. Graph homophily indices do not capture this observation.}

%


\subsection{Checking Robustness of Affinity Scores}
\label{sec:bootstrap}
By design, the affinity scores and the impossibility results we have considered apply to a specific hypergraph with a fixed set of hyperedges. In the main text, we therefore used all available hyperedge information when plotting results for each dataset we considered.  
Building a hypergraph from real data can be a noisy and imperfect process, and affinity scores will therefore vary depending on the quality of data and availability of group information in each context. We use a simple bootstrapping procedure to show that the basic patterns in affinity, ratio, and normalized bias scores for all of the hypergraphs we consider are stable to perturbations in the data. Given a hypergraph $H$ with $m_k$ hyperedges of size $k$, we sample $m_k$ of these hyperedges with replacement and compute typed affinity, ratio, and normalized bias scores from the set of sampled hyperedges in each case. We repeat the process 100 times for each dataset, and then compute the mean and standard error for the resulting ratio scores. { Figures~\ref{fig:main3} and~\ref{fig:photos2} display bootstrapping results for ratio scores. The same observations about robustness apply to affinity scores and normalized bias scores as well.}  Results on paper co-authorship, congress bill co-sponsorship, TripAdvisor reviews, and Walmart shopping trips are particularly robust to perturbations in the data. Average scores when bootstrapping are nearly identical to scores obtain when using the entire hypergraph, and there is a very small standard error (Fig.~\ref{fig:main3}). Scores for group pictures vary slightly more depending on the sample, but still preserve the same shape and pattern. We observe the same basic differences among pictures with regard to group type (wedding, family, or general group picture) and size (Fig.~\ref{fig:photos2}).

\subsection{Experimental Results on Contact Data}
In addition to the empirical results shown in the main text, we compute affinity scores with respect to gender for group gatherings among primary and high schools students~\cite{Mastrandrea-2015-contact,Stehl-2011-contact,Benson-2018-simplicial}. Each hypergraph consists of groups of students (in primary school~\cite{Stehl-2011-contact} and high school~\cite{Mastrandrea-2015-contact} respectively) that gathered together in close proximity at some point during the day, as measured by wearable sensors. In high school student interactions of size two through four (Fig.~\ref{fig:high}), students have high tendencies for gathering in groups where all members are of the same gender. All other affinity scores are below baseline. Results for primary school interactions are nearly the same (Fig.~\ref{fig:primary}), except in the case of groups of size two, where groups with two male students are just slightly below baseline. We applied bootstrapping to both datasets (see Section~\ref{sec:bootstrap} for details), to confirm that our results are robust to perturbations in the data. For both hypergraphs we observe a small standard error around the mean affinity scores obtained across different subsamples of the hyperedges. 
\begin{figure}[t]
	\centering
	\includegraphics[width=.9\linewidth]{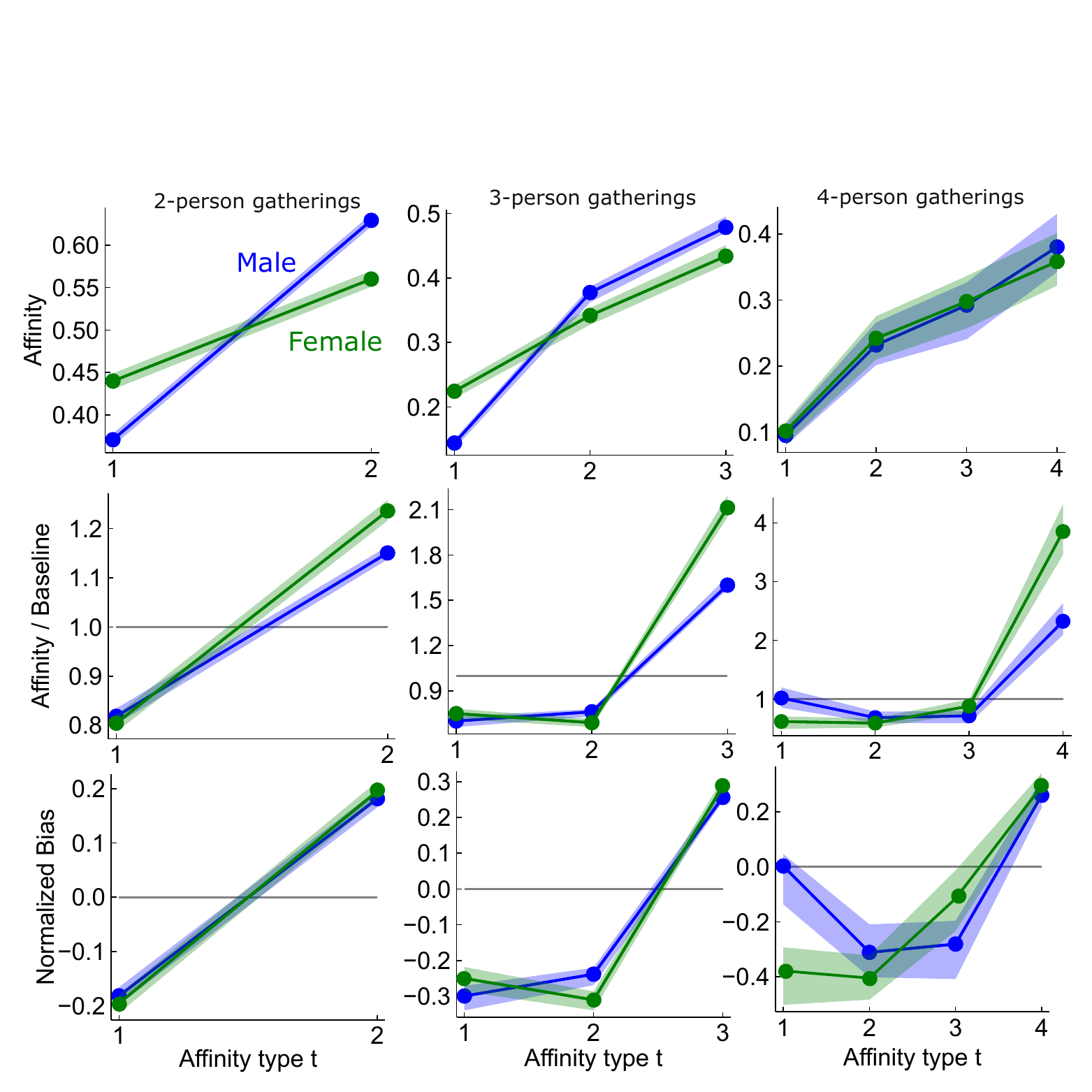} 
	\caption{\small 
		\textbf{Affinity, ratio, and normalized bias scores with respect to gender for a contact hypergraph at a high school}. Nodes in the hypergraph are students, and hyperedges indicate sets of students who were gathered in close proximity at some point, as measured by wearable sensors. Solid lines indicate affinity scores for the entire hypergraph, and lighter colored regions show one standard error around the mean affinity score obtained from a bootstrapping procedure on the hyperedges, indicating our results are robust to perturbations in the data. Both genders exhibit strong simple homophily, meaning a high tendency for gathering in groups where everyone is of the same gender. All other scores are below baseline. For gatherings of size three and four, ratio scores almost increase monotonically, though perfect monotonic increase is impossible, as shown by our theoretical results. However, monotonic increase in raw affinity scores is possible, as seen in plots from the first row. This results from having a roughly balanced number of gatherings of each type.}
	\label{fig:high}
\end{figure} 
\begin{figure}[t]
	\centering
	\includegraphics[width=.9\linewidth]{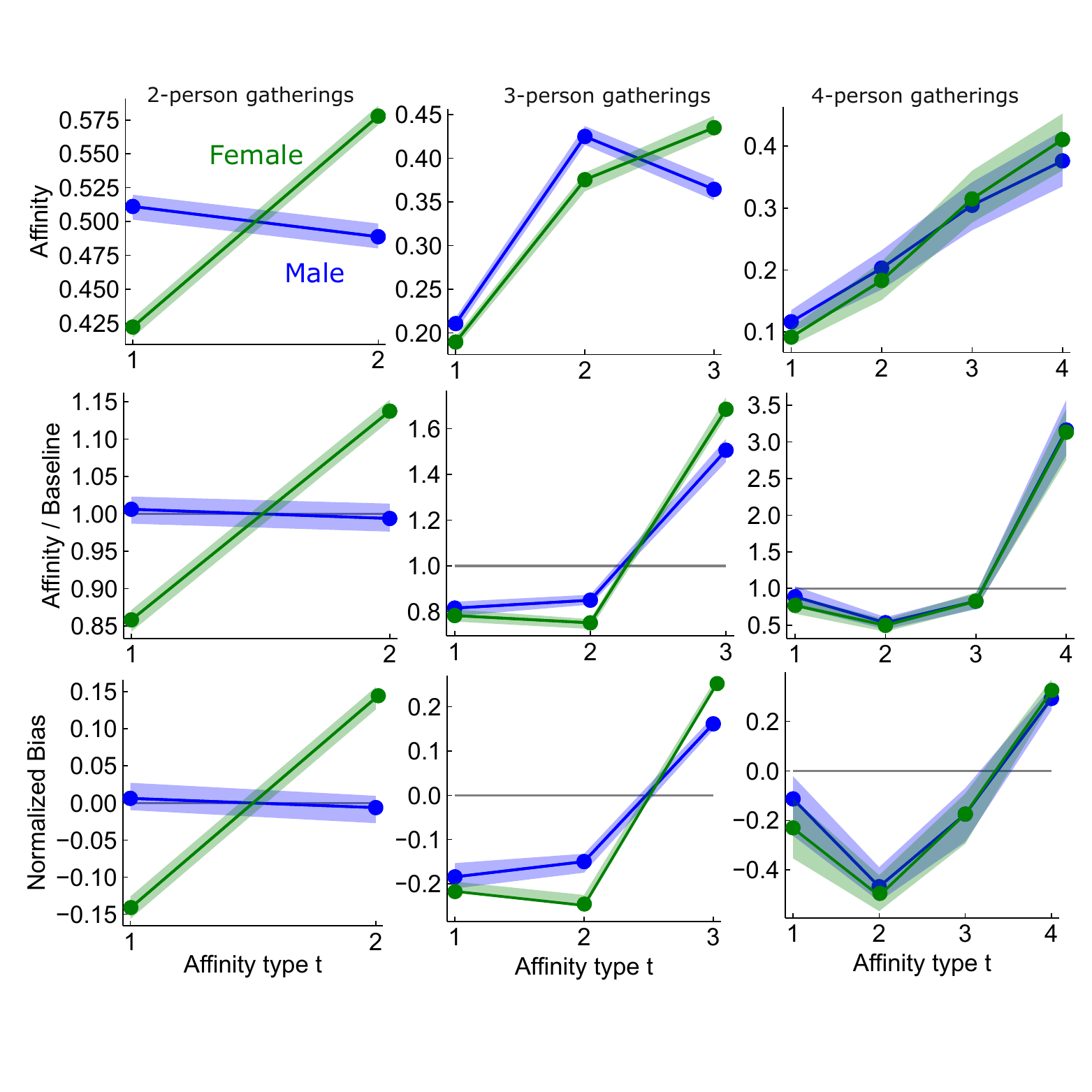} 
	\caption{\small 
		\textbf{Affinity, ratio, and normalized bias scores with respect to gender for a contact hypergraph at a primary school}. Hyperedges indicate a set of students (nodes) who were gathered in close proximity at some point, as measured by wearable sensors. 
		Solid lines again indicate affinity scores for the entire dataset, and lighter colored regions show one standard error around the mean affinity score from a bootstrapping.
		For groups of size three and four, both genders exhibit strong simple homophily, indicating a high tendency towards groups where everyone is of the same gender. For groups of size two, male students have affinity scores that are almost exactly equal to baseline. }
	\label{fig:primary}
\end{figure} 
\end{document}